\DeclarePairedDelimiterX\Basics[1](){ #1}
\DeclareMathAlphabet{\mathpzc}{OT1}{pzc}{m}{it}
\theoremstyle{remark}
\newtheorem{definition}{Definition}
\newtheorem{example}{Example}
\newtheorem{proposition}{Proposition}
\newtheorem{lemma}{Lemma}
\newtheorem{remark}{Remark}
\newtheorem{requirement}{Criterion}
\definecolor{dukeblue}{rgb}{0.0, 0.0, 0.61}
\definecolor{harvardcrimson}{rgb}{0.79, 0.0, 0.09}
\newcommand{\txbl}[1]{\textcolor{black}{#1}}
\newcommand{\orcid}[1]{\href{https://orcid.org/#1}{\textcolor[HTML]{A6CE39}{\faOrcid}}}
\DeclareSymbolFont{matha}{OML}{txmi}{m}{it}
\DeclareMathSymbol{\varv}{\mathord}{matha}{118}
\newcommand{\tc}{t_{\mathrm{c}}}
\newcommand{\lamc}{\lambda_{\mathrm{c}}}
\newcommand{\Es}{E_{\mathrm{s}}}
\newcommand{\alabel}[1]{\stepcounter{equation}\tag{\theequation}\label{#1}}
\newcommand{\matr}[1]{\mathbf{#1}}
\newcommand{\matrH}{\matr{H}}
\newcommand{\matrQ}{\matr{Q}}
\newcommand{\vect}[1]{\textbfit{#1}}
\newcommand{\vb}{\vect{b}}
\newcommand{\vB}{\vect{B}}
\newcommand{\vX}{\vect{X}}
\newcommand{\vx}{\vect{x}}
\newcommand{\vY}{\vect{Y}}
\newcommand{\vy}{\vect{y}}
\newcommand{\vZ}{\vect{Z}}
\newcommand{\vD}{\vect{D}}
\newcommand{\vd}{\vect{d}}
\newcommand{\vu}{\vect{u}}
\newcommand{\vv}{\vect{v}}
\newcommand{\vz}{\vect{z}}
\newcommand{\vm}{\vect{m}}
\newcommand{\vms}{\vect{m}_\mathrm{s}}
\newcommand{\vmr}{\vect{m}_\mathrm{r}}
\newcommand{\set}[1]{\mathcal{#1}}
\newcommand{\setB}{\set{B}}
\newcommand{\setM}{\set{M}}
\newcommand{\setS}{\set{S}}
\newcommand{\setT}{\set{T}}
\newcommand{\setQ}{\set{Q}}
\newcommand{\setX}{\set{X}}
\newcommand{\setY}{\set{Y}}
\newcommand{\setC}{\set{C}}
\newcommand{\setBF}{\set{B}_{\mathpzc{f}}}
\newcommand{\setBE}{\set{B}_{\mathpzc{e}}}
\newcommand{\setBT}{\set{B}_{\mathpzc{c}}}
\newcommand{\setSF}{\set{S}_{\mathpzc{f}}}
\newcommand{\setSE}{\set{S}_{\mathpzc{e}}}
\newcommand{\setST}{\set{S}_{\mathpzc{c}}}
\newcommand{\BF}{{B}_{\mathpzc{f}}}
\newcommand{\BE}{{B}_{\mathpzc{e}}}
\newcommand{\BT}{{B}_{\mathpzc{c}}}
\newcommand{\vBF}{\vect{B}_{\mathpzc{f}}}
\newcommand{\vBE}{\vect{B}_{\mathpzc{e}}}
\newcommand{\SF}{{S}_{\mathpzc{f}}}
\newcommand{\SE}{{S}_{\mathpzc{e}}}
\newcommand{\ST}{{S}_{\mathpzc{c}}}
\newcommand{\vSF}{{S}'}
\newcommand{\bF}{{b}_{\mathpzc{f}}}
\newcommand{\bE}{{b}_{\mathpzc{e}}}
\newcommand{\bT}{{b}_{\mathpzc{c}}}
\newcommand{\vbF}{\vect{b}_{\mathpzc{f}}}
\newcommand{\vbE}{\vect{b}_{\mathpzc{e}}}
\newcommand{\sF}{{s}_{\mathpzc{f}}}
\newcommand{\sE}{{s}_{\mathpzc{e}}}
\newcommand{\sT}{{s}_{\mathpzc{c}}}	
\newcommand{\vsF}{{s}'}
\newcommand{\betaF}{{\beta}_{\mathpzc{f}}}
\newcommand{\betaE}{{\beta}_{\mathpzc{e}}}
\newcommand{\betaT}{{\beta}_{\mathpzc{c}}}
\newcommand{\LE}{{L}_{\mathpzc{e}}{(i,j)}}
\newcommand{\LT}{{L}_{\mathpzc{c}}{(i,j)}}
\newcommand{\vgB}{\vect{g}_{\mathrm{B}}}
\newcommand{\vgE}{\vect{g}_{\userE}}
\newcommand{\mB}{\nu^{\mathrm{B}}}
\newcommand{\mE}{\nu^{\userE}}
\newcommand{\sigmaB}{\sigma_{\mathrm{B}}}
\newcommand{\sigmaE}{\sigma_{\userE}}
\newcommand{\NBidx}[1]{W_{\mathrm{B}}(#1)}
\newcommand{\NEidx}[1]{W_{\userE}(#1)}
\newcommand{\Mr}{M_{\mathrm{r}}}
\newcommand{\Ms}{M_{\mathrm{s}}}
\newcommand{\mr}{m_{\mathrm{r}}}
\newcommand{\ms}{m_{\mathrm{s}}}
\newcommand{\kr}{{K}_{\mathrm{r}}}
\newcommand{\ks}{{K}_{\mathrm{s}}}
\newcommand{\Rr}{R_{\mathrm{r}}}
\newcommand{\Rs}{R_{\mathrm{s}}}
\newcommand{\Riud}{R_{\mathrm{iud}}}
\newcommand{\Rd}{R_{\mathrm{d}}}
\newcommand{\RSC}{R_{\mathrm{iud}}^{\mathrm{SC}}}
\newcommand{\Cs}{C_{\mathrm{s}}}
\newcommand{\Rcsc}{R_{\mathrm{csc}}}
\newcommand{\Rin}{R_{\mathrm{in}}}
\newcommand{\Rout}{R_{\mathrm{out}}}
\newcommand{\epsE}{\epsilon^{\mathrm{F}}}
\newcommand{\epsB}{\epsilon^{\mathrm{B}}}
\newcommand{\userB}{\mathrm{B}}
\newcommand{\userE}{\mathrm{E}}
\newcommand{\SNRdB}{\mathrm{SNR}_{\textrm{dB}}}
\newcommand{\SNRdBu}{\mathrm{SNR}_{\textrm{dB}}^u}
\newcommand{\SNRdBB}{\mathrm{SNR}_{\textrm{dB}}^{\userB}}
\newcommand{\SNRdBE}{\mathrm{SNR}_{\textrm{dB}}^{\userE}}
\newcommand{\defeq}{\triangleq}
\newcommand{\Z}{\mathbb{Z}}
\newcommand{\R}{\mathbb{R}}
\newcommand{\C}{\mathbb{C}}
\newcommand{\hell}{{\imath}}
\newcommand{\Prob}{\operatorname{Pr}}
\newcommand{\defend}{\mbox{}\hfill$\square$}
\newcommand{\exampleend}{\mbox{}\hfill$\square$}
\newcommand{\remarkend}{\mbox{}\hfill$\square$}
\newcounter{mytempeqcounter}
\newcommand{\indep}{\perp \!\!\!\! \perp}
\DeclareRobustCommand\bfseriesitshape{%
	\not@math@alphabet\itshapebfseries\relax
	\fontseries\bfdefault
	\fontshape\itdefault
	\selectfont
}
\DeclareTextFontCommand{\textbfit}{\bfseriesitshape}
\begin{document}
	
	
	\title{Design and Analysis of a Concatenated Code for Intersymbol Interference Wiretap Channels}
	
	
	\author{
		Aria~Nouri\textsuperscript{\,\orcid{0000-0001-5548-184X}},~\IEEEmembership{Graduate Student Member,~IEEE,}
		Reza~Asvadi\textsuperscript{\,\orcid{0000-0001-9898-7744}},~\IEEEmembership{Senior Member,~IEEE}, and\\ Jun~Chen\textsuperscript{\,\orcid{0000-0002-8084-9332}},~\IEEEmembership{Senior Member,~IEEE}
		\thanks{Aria Nouri and Reza Asvadi are with the Department of Telecommunications, Faculty of Electrical Engineering, Shahid Beheshti University, Tehran 1983963113, Iran (e-mails:\ \href{mailto:ariya@ieee.org}{\txbl{ariya@ieee.org}}; \href{mailto:r_asvadi@sbu.ac.ir}{\txbl{r\_asvadi@sbu.ac.ir}}).}%
		\thanks{Jun Chen is with the Department of Electrical and Computer Engineering, McMaster University, Hamilton, ON L8S 4K1, Canada (e-mail:\ \href{mailto:chenjun@mcmaster.ca}{\txbl{chenjun@mcmaster.ca}}).}%
		\thanks{This paper was presented in part at the IEEE International Symposium on Information Theory (ISIT), Espoo, Finland, in June 2022~\cite{9834578}.}%
	}%
	\markboth{Accepted for publication in IEEE Transactions on Communications}{Nouri, Asvadi, and Chen}
	\maketitle
	
	
	\begin{abstract}
		We propose a two-stage concatenated coding scheme for reliable and secure communication over intersymbol interference wiretap channels. We first establish the secrecy capacity. Then, motivated by the theoretical codes that achieve the secrecy capacity, our scheme integrates low-density parity-check (LDPC) codes in the outer stage, forming a nested structure of wiretap codes, with trellis codes in the inner stage to improve achievable secure rates. The trellis code is specifically designed to transform the uniformly distributed codewords produced by the LDPC code stage into a Markov process, achieving tight lower bounds on the secrecy capacity. We further estimate the information leakage rate of the proposed scheme using an upper bound. To meet the weak secrecy criterion, we optimize degree distributions of the irregular LDPC codes at the outer stage, essentially driving the estimated upper bound on the information leakage rate to zero.
	\end{abstract}

	\begin{IEEEkeywords}
		Intersymbol interference (ISI),
		wiretap channel,
		physical-layer security,
		superchannel,
		concatenated codes,
		density evolution,
		low-density parity-check (LDPC) codes,
		trellis codes.
	\end{IEEEkeywords}
	\IEEEpeerreviewmaketitle
	
	
	\section{Introduction}
	\label{sec::Intro}
	\subsection{Motivation}
	
	\IEEEPARstart{C}{lassical} cryptography addresses secrecy requirements by assuming that eavesdroppers have limited computational capabilities. This assumption prevents eavesdroppers from solving computationally complex problems, thereby safeguarding the cryptosystems from cryptanalysis. While breaking a robust cryptosystem within a reasonable time is currently infeasible, the evolution of prospective quantum computers poses a serious threat to the security of existing cryptographic protocols~\cite{Gidney_2021}. Moreover, the absence of security at the physical layer allows eavesdroppers to intercept and store the encrypted data, potentially breaking the encryption once the promised computational capabilities catch up. Alternatively, quantum cryptography aims to leverage the postulates of quantum mechanics to achieve loophole-free unconditional security~\cite{Pirandola:20}. Though this research area has brought revolutionary directions in near-future communication systems~\cite{Wehnereaam9288}, the proposed network protocol stack faces challenges in compatibility with existing infrastructures~\cite{9023997} and requires additional assumptions to achieve the claimed unconditional security~\cite{PhysRevLett.78.3414}.

	Information-theoretic physical-layer security~\cite{6772207,9380147} properly overcomes the aforementioned drawbacks: (i) by employing secrecy measures that are independent of the computational capabilities of eavesdroppers, security is assured even against adversaries with unlimited computational power; (ii) by operating at the physical layer of the existing infrastructure, it ensures that intercepting and analyzing the eavesdropper's observations do not reveal additional information about the secret data. Despite these advantages, a majority of information-theoretic proofs are developed under idealized channel conditions, making them largely infeasible to apply these results for securing non-ideal practical communication systems. This is because the criteria for ensuring information-theoretic security need to be fulfilled rigorously and measured explicitly~\cite{10336902}.\footnote{A limited amount of error caused by ignoring non-ideal properties of practical communication channels is tolerable when measuring the \emph{reliability}. However, neglecting these non-ideal properties when measuring the \emph{secrecy} can result in unintended leakage of private information, which is unacceptable.}
	
	Frequency selectivity is a non-ideal phenomenon that causes intersymbol interference (ISI), degrading the performance of wireless communication systems ranging from narrowband~\cite{8698792} to ultra-wideband~\cite{1569979}. In phased array antennas that are pivotal for millimeter-wave communication systems\textemdash due to their ability to support high-frequency bands, enabling the multi-gigabit-per-second data rates and low latency required by 5G and beyond networks\textemdash ISI arises from time delays across different antenna elements in both transmitted and received signals~\cite{9324767}. The existing equalization techniques are effective only when the delay exceeds a small fraction of the symbol interval, imposing several limitations on (i) the number of array elements, (ii) the scan angle,\footnote{The scan angle refers to the maximum angular range over which the antenna array can steer its main beam, typically measured from the array boresight (central axis) to the extreme steering angles in both directions.} and (iii) the symbol rate of the communication system~(see Appendix~\ref{apx:simsen}). Though multi-carrier methods can effectively mitigate ISI when the delay spread exceeds the symbol interval, several emerging applications continue to favor single-carrier approaches~\cite{10061469}. Battery-powered terminals with strict power or cost constraints require waveforms with low peak-to-average power ratio (PAPR), whereas multi-carrier methods typically exhibit high PAPR. Although various PAPR reduction techniques have been proposed~\cite{1421929,10530524}, they often involve trade-offs in rate and complexity. These limitations are particularly critical in uplink transmissions from low-cost user equipment~\cite{3GPP_TS_38_101_1} and downlink transmissions in battery-powered non-terrestrial/aerial platforms~\cite{3GPP_TR_38_821, 3gpp-ntn-overview}, which, despite being part of 5G ecosystems, operate under fundamentally different physical and energy constraints.\footnote{It is worth noting that the deployment of multi-carrier schemes does not conflict with our proposed inherently single-carrier approach. Different transmission scenarios within the same communication ecosystem impose distinct design requirements: single-carrier schemes are preferred when transmitter power efficiency is critical and processing complexity can be shifted to the receiver, whereas multi-carrier schemes are favored when receiver complexity must be minimized and transmitters operate under relaxed power constraints.} In addition, antenna size limitations in such aerial platforms generally result in phased arrays with wider main lobes, thereby increasing their susceptibility to eavesdropping and highlighting the need for enhanced physical-layer security.
	
	In this paper, we study the problem of code design for achieving reliable and information-theoretically secure communication over an ISI wiretap channel (ISI-WTC) model~\cite{10068266}. In this setup, the transmitter (Alice) is connected to a legitimate receiver (Bob, labeled by ``$\userB$'') and to an eavesdropper (Eve, labeled by ``$\userE$'') through two independent ISI channels. We consider a setting where the channel impulse responses (CIRs) of both Bob and Eve are fixed and known to all parties. Notably, the assumption that the eavesdropper's CIR is known at the transmitter corresponds to scenarios where Eve is another active user in the network, against whom Bob's secret message must be kept information-theoretically secure.
	
	\subsection{Background and Related Works}
	
	Designing practical methods for reliable transmission over point-to-point ISI channels is a well-established research topic in the literature of coding and signal processing societies~\cite{1054829,382016,4460060506,864167,911451,1003830,1207365,1431126,4282114,9279252}. For brevity, we only review a particular class of turbo-based error correction schemes~\cite{397441} related to this~work.
	
	The incorporation of maximum \emph{a posteriori} symbol detectors~\cite{10.1214/aoms/1177699147,welch,1055186} with linear block codes has led to the development of soft-estimation decoding architectures over ISI channels, commonly referred to as turbo-equalizers~\cite{4460060506,864167,911451,1003830,1207365,1431126}. Kavčić~\emph{et~al}.~\cite{1207365} extended the density evolution method~\cite{910578} to analyze the asymptotic performance of low-density parity-check (LDPC) codes within the turbo-equalizer framework. Varnica and Kavčić~\cite{1195499} further demonstrated that the \text{information} rate achieved by an independent and uniformly distributed (i.u.d.) input process can be closely approached using suitably optimized LDPC codes within the turbo-equalizer structure. More recently, it has been shown in~\cite{8840908} and \cite{9384303} that the complexity of  turbo-equalizers can be reduced by solving the optimization-based decoding problem~\cite{5571870} with the  approach presented  in~\cite{ADMM}. 
	
	Soriaga~\emph{et~al}.~\cite{4137898} addressed the design of a multi-level coding scheme paired with a multi-stage decoding algorithm for ISI channels. The decoder proposed in~\cite{4137898} operates without requiring interaction between all stages during each iteration and approaches the i.u.d.\ information rate as the number of decoding stages increases. To the best of our knowledge, the only coding scheme that surpasses the i.u.d.\ information rate and approaches the capacity of general point-to-point ISI channels is the one proposed in~\cite{1397934}, commonly referred to as matched information rate codes. Specifically, this code is a two-stage concatenated coding scheme comprising an inner-stage trellis code and an outer-stage LDPC code.
	
	Coding for secrecy over Gaussian wiretap channels~\cite{1055917} has received considerable attention over the past decade~\cite{5592833,5740591,6162586,6151133,7247218,9093879}. However, despite the practical importance of secure communication over channels with memory~\cite{7774989,8737786,10068266}, code design for such settings remains largely unresolved. To the best of our knowledge, this work is the first to explicitly address code design for Gaussian wiretap channels with ISI. By leveraging optimized Markov sources for ISI-WTCs~\cite{10068266} and extending the design principles established in~\cite{1397934}, we propose a practical concatenated coding scheme that satisfies both reliability and information-theoretic secrecy constraints.
	
	\subsection{Contributions and Organization}
	The main contributions of this work are outlined as follows.
	\begin{itemize}
		\item We derive the secrecy capacity of ISI-WTCs. Drawing on insights from the theoretical coding strategy that achieves the secrecy capacity, we develop a two-stage concatenated code. The inner coding stage is a trellis code that matches the transmitted codewords to a Markov process, achieving a numerical lower bound on the secrecy capacity, known as the constrained secrecy capacity~\cite{10068266}. We show that the designed inner-stage trellis code effectively shapes the spectrum of the transmitted codewords by concentrating the available power in frequency ranges where Bob's channel has a higher gain-to-noise power spectrum ratio than Eve's channel. Accordingly, positive secure rates remain achievable even in scenarios where Eve's channel has a higher point-to-point capacity than Bob's channel.
		\item We implement the nested structure of wiretap codes~\cite{4276938} using punctured LDPC codes at the outer stage. To evaluate the secrecy performance, we estimate a numerical upper bound on the rate of information leakage achieved by the proposed code over ISI-WTCs. Subsequently, for a fixed secure rate, we optimize degree distributions of the outer LDPC code stage to reduce the estimated upper bound essentially to zero, thereby satisfying the weak secrecy criterion~\cite{9380147}. Additionally, we extend the density evolution method from~\cite{1207365} to characterize the asymptotic performance of the proposed two-stage code and optimize degree distributions of the LDPC codes at the outer stage.
		\item In the asymptotic regime, the proposed coding scheme satisfies the reliability and weak secrecy criteria at secure rates within $0.42\!$~dB of the constrained secrecy capacity. In the finite blocklength regime, with a blocklength of $10^5$, the code achieves a bit-error rate of $10^{-5}$ at Bob's decoder and meets the weak secrecy criterion while remaining~within $3.6\!$~dB of the constrained secrecy capacity.
	\end{itemize}
	
	
	The remainder of this paper is organized as follows.
	Section~\ref{sec::PRELIM} introduces the channel model and key preliminary concepts.
	Section~\ref{sec::THEOBOUND} establishes the secrecy capacity of ISI-WTCs and elucidates the connection between the theoretical capacity-achieving code and the proposed concatenated code.
	Section~\ref{sec::FG} presents the normal factor graph representation of the code and generalizes the density evolution analysis.
	Section~\ref{sec::DESIGN} describes the code design procedure.
	Section~\ref{sec::SIMFL} reports the performance of the proposed code in the finite blocklength regime.
	Finally, Section~\ref{sec::CONC} concludes the paper.
	
	
	\subsection{Notation}\label{sec:intro:not}
	
	The sets of integers, reals, and complex numbers are denoted by $\Z$, $\R$, and $\C$, respectively. Other than that, sets are denoted by calligraphic letters, e.g., $\setS$. The Cartesian product of two sets $\setX$ and $\setY$ is written as $\setX \times \setY$, and the $n$-fold Cartesian product of $\setX$ with itself is written as $\setX^n$. If $\setX$ is a finite set, then its cardinality is denoted by $|\setX|$. Also, for a~finite vector space $\set{V}$, its dimension is represented by $\dim(\set{V})$. Moreover, we use the notation $\subsetneq$ to denote a \emph{finite} proper subset, e.g., $\setX \subsetneq \C$ indicates that $\setX$ is a \emph{finite} set whose elements lie in $\C$, but $\setX$ does not span the entire complex field.
	
	
	Random variables are denoted by upper-case italic letters, e.g., $X$, their realizations by the corresponding lower-case letters, e.g., $x$, and the set of possible values by the corresponding calligraphic letters, e.g., $\set{X}$. Random vectors are denoted by upper-case boldface italic letters, e.g., $\vX$, and their realizations by the corresponding lower-case letters, e.g., $\vx$. For a positive value of $n\in\Z$, a time-indexed ($t\in\Z$) vector of random variables is denoted by $\vX^{n}(t) \defeq \big(X{({n(t-1)+1})},\ldots,X({nt})\big) $,  and its realization is denoted by $\vx^{n}(t) \defeq \big(x{({n(t-1)+1})},\ldots,x({nt})\big) $. The time index $t$ is dropped from the vectors starting with $t=1$, e.g., $\vX^n$ is used instead of  $\vX^n(1)$. The convolution of two vectors $\vx$ and $\vy$ is denoted by $\vx\ast\vy$, and the $n$-time convolution of $\vx$ with itself is denoted by $\vx^{(\ast n)}$. Matrices and higher-dimensional arrays are denoted by upper-case boldface letters, e.g., $\matrH$.
	
	The probability of an event $\xi$ is represented by $\Prob(\xi)$. Additionally, $p_X(\,\cdot\,)$ denotes the probability mass function (PMF) of a discrete random variable $X$, and $p_{Y|X}(\,\cdot\, | x)$ represents the conditional PMF of a discrete random variable $Y$ given $X = x$. A similar notation is used for probability density functions (PDFs) when dealing with continuous random variables. The entropy of a random variable $X$ and the mutual information between two random variables $X$ and $Y$ are denoted by $H(X)$ and $I(X;Y)$, respectively. The Kullback–Leibler~(KL) divergence between two PMFs $p_X(\cdot)$ and $p_Y(\cdot)$ over the same finite alphabet $\setX$ is defined as
	$$
	\mathrm{D}_\mathrm{KL}(p_X||p_Y)\defeq\sum_{x\in\setX}p_X(x)\log \frac{p_X(x)}{p_Y(x)}.
	$$
	Additionally, the independence between two random variables $X$ and $Y$ is denoted by $X\indep Y$.
	
	For $x\in\R$, the expression $(x)^+$ stands for $\max\{x,0\}$.~Similarly, for a real-valued function $f(\cdot)$, the expression $f^+(\cdot)$ is used to represent $\big(f(\cdot)\big)^+$. Lastly, $\delta(\cdot)$ refers to the Kronecker delta function, and $\mathds{1}(\cdot)$ represents the indicator function.\\
	
	
	\section{Preliminaries and Definitions}\label{sec::PRELIM}
	
	In this section, we present finite-state machine models for intersymbol interference (ISI) channels and ISI wiretap channels (ISI-WTCs), along with their trellis representations. We then define trellis codes and introduce a joint finite-state model for trellis codes applied at the input of the ISI-WTCs.

	\subsection{Channel Model}
	\label{sec:pre:channel}
	
	
	An ISI channel considered in this work has a channel impulse response (CIR) specified by a complex vector $\vect{g} \defeq (g(0), \ldots, g(\nu)) \in \C^{\nu+1}$, where $\nu \in \Z$ denotes the memory length. The channel has an input process $\{ X(t) \}_{t \in \Z}$, a noiseless output process $\{ U(t) \}_{t \in \Z}$ given by $$U(t) \defeq \sum_{r=0}^{\nu} g(r)\cdot X(t-r),$$ and a noisy output process $\{ Y(t) \}_{t \in \Z}$ given by $$Y(t) \defeq U(t) + W(t),$$ where $W(t)\in\C$ is a circularly symmetric complex Gaussian random variable with mean zero, variance per dimension~$\sigma^2$, and $W(t)\indep X(t)$. For all $t\in\Z$, we have $X(t), U(t), Y(t)\in\C$. Any ISI channel is parameterized by $(\vect{g},\sigma^2)\in\C^{\nu+1}\times\R$.
	
	\begin{definition}[\textit{Finite-State Machine Channel (FSMC)}]\label{def:FSMC} A time-invariant FSMC consists of an input process $\{ X(t) \}_{t \in \Z}$, an output process $\{ Y(t) \}_{t \in \Z}$, and a state process $\{ \SF(t) \}_{t \in \Z}$, where $X(t) \in \setX$, $Y(t) \in \setY$, and $\SF(t) \in \setSF$, for all $t \in \Z$, and $\setSF$ and  $\setX$ are assumed to be finite. Let $\BF(t)\defeq\big(\SF(t-1),X(t),\SF(t)\big)$ referred to as a branch, and let $\setBF$ denote the set of all triples $\bF(t)=\big(\sF(t-1), x(t), \sF(t)\big)$ where $p_{\BF(t)}\big(\bF(t)\big)$ is allowed to be nonzero.\footnote{Because of the similarity of the notation used for various finite-state machines, we distinguish random variables and alphabet sets corresponding to states and branches of distinct finite-state machines using specific subscripts.} For any positive integer $N$, the joint PMF/PDF of $\vBF^N$ and $\vY^N$ conditioned on $\SF(0)=\sF(0)$ and $\vX^N=\vx^N$ is
		\begin{align*}
			&p_{\vBF^N, \vY^N | \SF(0), \vX^N}
			\big(\vbF^N, \vy^N | \sF(0), \vx^N\big)\\
			&\>\>= \prod_{t=1}^N
			p_{\BF(t), Y(t) | \SF(t-1), X(t)}
			\big(\bF(t), y(t) | \sF(t-1), x(t)\big), \alabel{equ:sdtr_decomp}
		\end{align*}
		where the factor on the right-hand side (RHS) is independent of $t$.
		\defend
	\end{definition}

	
	\begin{figure*}
		\centering
		\includegraphics[scale=.94]{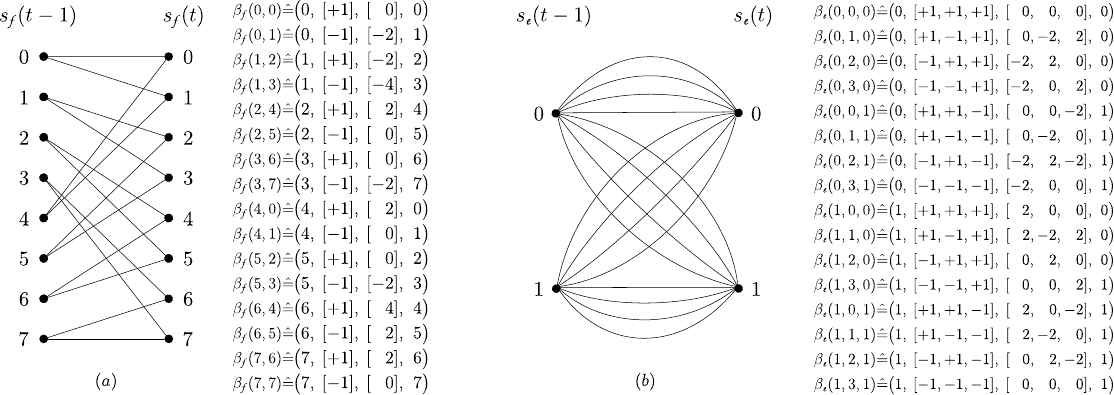}\vspace*{4pt}
		\caption{($a$) Trellis section of an FSMC, representing the EPR4 channel and the collection of branches $\betaF(i,j)=(i,x,u,j)$ for all $(i,j)\in\check{\setBF}$.
			($b$) Trellis section of a $3$-rd order E-FSMC, representing the DICODE channel and the collection of branches $\betaE(i,\ell,j)\defeq(i,\vx^3,\vu^3,j)$ for all $(i,\ell,j)\in\check{\setBE}$.\vspace{5pt}}\label{fig:CH_Trellis}
	\end{figure*}
	
	
	\begin{remark}\label{rem:FSMCrest}
		Any ISI channel with constant CIR $\vect{g}\in \C^{\nu+1}$, finite memory $\nu < \infty$, and finite input alphabet $\setX \subsetneq \C$ is a special case of an FSMC. By definition, the FSMC state must capture all information required to make the channel transition probability conditionally independent of the remaining past. In the present setting, this information comprises both the CIR $\vect{g}$ and the previous $\nu$ channel inputs. However, since the CIR $\vect{g}$ is assumed to be fixed and known over the entire transmission block, it is treated as part of the static channel model and we no longer include it explicitly in the state variable. Accordingly, the channel state at time $t$ is specified by $\sF(t)\defeq \vx^\nu\!\left(\tfrac{t}{\nu}\right)$ with finite state space $\setSF \defeq \setX^\nu$. Together with the fixed CIR $\vect{g}$, the state $\sF(t-1)$ fully determines the interference term as the weighted sum of past inputs $\sum_{r=1}^{\nu} g(r)X(t-r)$. This implies that the pair $\big( \sF(t-1), X(t) \big)$ contains all necessary dynamic information required to characterize the channel output distribution at time $t$. This also yields a one-to-one correspondence between $\big(\sF(t-1),x(t)\big)$ and $\bF(t)$ for all $t\in\Z$. Accordingly, the factor on the RHS of~\eqref{equ:sdtr_decomp} becomes
		\begin{align*}
			p_{\BF(t), Y(t) | \SF(t-1), X(t)}
			&\big(\bF(t), y(t) | \sF(t-1), x(t)\big)\\
			=p_{\BF(t) | \SF(t-1), X(t)}&\big(\bF(t) | \sF(t-1), x(t)\big)\\
			\cdot\,
			p_{Y(t) | \SF(t-1), X(t)}&\big(y(t) | \sF(t-1), x(t)\big),
		\end{align*}
		in which
			\begin{align*}
				p_{\BF(t) | \SF(t-1), X(t)}&\big(\bF(t) | \sF(t-1), x(t)\big)\\
				&\defeq \mathds{1}\Big(\bF(t)=\big(\sF(t-1), x(t),\sF(t)\big)\Big),
			\end{align*}
			and
		\begin{align*}
			p_{Y(t) | \SF(t-1), X(t)}&\big(y(t) | \sF(t-1), x(t)\big)\\
			&\defeq
			\frac{1}{{2 \pi \sigma^2}}
			\cdot
			\exp
			\left( 
			- \frac{| y(t) - u(t)|^2}{2\sigma^2}
			\right)\!, \alabel{equ:stntrsctr}
		\end{align*}
		where $u(t) \defeq \sum_{r=0}^{\nu} g(r)\cdot x(t-r)$.
		\remarkend
	\end{remark}

	\begin{definition}[\emph{Trellis Diagram}]\label{def:trellis}
		A trellis diagram is a two-dimensional directed acyclic graph that visualizes the time evolution of a finite-state machine. In this graph, the state alphabet of the finite-state machine is represented by ordered nodes grouped in consecutive vertical partitions, with each partition corresponding to a time index. State transitions are illustrated by directed branches connecting nodes in successive partitions. Since all finite-state machines considered in this work are time-invariant and the cardinality of the state alphabets is chosen such that the state processes have memory order one, a single trellis section comprising two successive partitions suffices to illustrate all valid state transitions. For notational convenience, realizations of the starting and ending states within a trellis section are denoted by $i$ and $j$, respectively, where $i,j\in\Z$.
		\defend
	\end{definition}

	According to $\sF(t)\defeq\vx^\nu(\frac{t}{\nu})$ and $|\setSF| \defeq |\setX|^\nu$ in Remark~\ref{rem:FSMCrest}, parallel branches do not appear in the trellis section of the FSMCs used to model the ISI channels. Therefore, each branch $\big(i,x(t),j\big)\in\setBF$ of the FSMC is uniquely specified by the pair $(i,j)\in\setSF^2$. Let $\check{\setBF}$ be a set of all pairs $(i,j)\in\setSF^2$ where $\big(i,x(t),j\big)\in\setBF$. Since one can define a bijective map between ${\setBF}$ and $\check{\setBF}$, we use $\check{\setBF}\subseteq\setSF^2$ as a simplified alternative instead of the branch alphabet $\setBF$. Accordingly, we associate a label $\betaF{(i,j)}\defeq(i,x,u,j)$ to $\bF(t)=\big(i,x(t),j\big)$, for all $\bF(t)\in\setBF$, where $x\in\setX$ is the input compatible with the realization of the branch $(i,j)\in\check{\setBF}$ (equivalently $(i,x,j)\in\setBF$) and $u\in\C$ is the associated noiseless output. We sketch the trellis section of an FSMC, representing an EPR4 channel\footnote{In this section, unnormalized magnetic recording channels are used as simplified models to introduce the general concept of trellises. While these channels may have limited relevance in wiretapping scenarios, the same trellises will be employed in later sections to model received signals from phased arrays (see Appendix~\ref{apx:simsen}).} with the unnormalized CIR $\vect{g}=(1,1,-1,-1)$ and the input alphabet $\setX\defeq\{-1,+1\}$ in~Fig.~\ref{fig:CH_Trellis}($a$).
	
	
	The following definition introduces an auxiliary model that extends Definition~\ref{def:FSMC} and provides the parameters required for designing the desired inner-stage trellis code in Section~\ref{sec:innertrellis}.
	
	\begin{definition}[\textit{Extended FSMC (E-FSMC)}]\label{def:EFSMC}
		An E-FSMC of order $n\in\Z$ is a time-invariant finite-state machine constructed by concatenating $n$ consecutive realizations of an FSMC, as in Definition~\ref{def:FSMC}. Indeed, the E-FSMC has an input process $\{ \vX^n(t) \}_{t \in \Z}$, an output process $\{ \vY^n(t) \}_{t \in \Z}$, and a state process $\{ \SE(t) \}_{t \in \Z}$, where $X(t) \in \setX$, $Y(t) \in \setY$, and $\SE(t) \in \setSE$, for all $t \in \Z$, and $\setSE$ and $\setX$ are assumed to be finite. 
			Let $\BE(t)\defeq\big(\SE(t-1),\vX^n(t),\SE(t)\big)$ and let $\setBE$ be a set of all triples  $\bE(t)=\big(\sE(t-1),\vx^n(t),\sE(t)\big)$ where $p_{\BE(t)}\big(\bE(t)\big)$ is allowed to be nonzero.
			Each realization of the branch $\bE(t)\in\setBE$ at the $n$-th order E-FSMC coincides with realizing $n$ legal consecutive branches\footnote{Legal consecutive realizations of branches are branch sequences where the ending state of each branch equals the starting state of the next. All other sequences are considered illegal.} at the corresponding FSMC, i.e., $\big(\sE(t-1),\bE(t)\big)\defeq\big(\sF\big(n(t-1)\big),\vbF^n(t)\big)$. Accordingly, the PMF of $\BE(t)$ conditioned on $\SE(t-1)=\sE(t-1)$ becomes
			\begin{align*}
				p_{\BE(t)| \SE(t-1)}&\big(\bE(t)|\sE(t-1)\big)\\
				&\defeq
				p_{\vBF^n(t)|\SF{(n(t-1))}}\Big(\vbF^n(t)|\sF\big(n(t-1)\big)\Big)\\
				&=\prod_{l=n(t-1)+1}^{nt} p_{\BF(l)| \SF(l-1)}\big(\bF(l)|\sF(l-1)\big).
			\end{align*}
			For any positive integer $N$, the joint PMF/PDF of $\vBE^N$ and $\vY^{nN}$ conditioned on $\SE(0)=\sE(0)$ and ${\vX}^{nN}={\vx}^{nN}$ is
			\begin{align*}
				&p_{\vBE^N, \vY^{nN} | \SE(0), \vX^{nN}}\big(\vbE^N, \vy^{nN} | \sE(0), \vx^{nN}\big)\\
				&= \prod_{t=1}^N
				p_{\BE(t), \vY^n(t) | \SE(t-1), \vX^n(t)}
				\big(\bE(t), \vy^n(t) | \sE(t-1), \vx^n(t)\big), \alabel{equ:efsmc_decomp}
			\end{align*}
			where the factor on the RHS is independent of $t$.
			\defend
	\end{definition}
	
	\begin{remark}\label{rem:E-FSMCISI}
		In the case of ISI channels, the factor on the RHS of \eqref{equ:efsmc_decomp} becomes
		\begin{align*}
			p_{\BE(t), \vY^n(t) | \SE(t-1), \vX^n(t)}
			&\big(\bE(t), \vy^n(t) | \sE(t-1), \vx^n(t)\big)\\
			=
			p_{\BE(t) | \SE(t-1), \vX^n(t)}
			&\big(\bE(t) | \sE(t-1), \vx^n(t)\big)\\
			\cdot\,
			p_{\vY^n(t) | \SE(t-1), \vX^n(t)}
			&\big(\vy^n(t) | \sE(t-1), \vx^n(t)\big),
		\end{align*}
		in which
			\begin{align*}
				p_{\BE(t) | \SE(t-1), \vX^n(t)}
				&\big(\bE(t) | \sE(t-1), \vx^n(t)\big)\\
				&\defeq \mathds{1}\Big(\bE(t)=\big(\sE(t-1), \vx^n(t),\sE(t)\big)\Big),
			\end{align*}
		and
		\begin{align*}
			&p_{\vY^n(t) | \SE(t-1), \vX^n(t)}\big(\vy^n(t) | \sE(t-1), \vx^n(t)\big)\\
			&\qquad\defeq\prod_{l=(t-1)n+1}^{nt} p_{Y(l) | \SF(l-1), X(l)}\big(y(l) | \sF(l-1), x(l)\big)\\
			&\qquad=
			\frac{1}{(2 \pi \sigma^2)^n}
			\cdot
			\exp
			\left(-\sum_{l=(t-1)n+1}^{nt}
			\frac{| y(l) - u(l)|^2}{2\sigma^2}
			\right), 
		\end{align*}
		where the last equality follows from (\ref{equ:stntrsctr}).
		\remarkend
	\end{remark}

	
	A set of all possible state transitions of the E-FSMC is also visualized by a single trellis section as in Definition~\ref{def:trellis}. Let $\LE\in\Z$ be the number of parallel branches between the pair of states $(i,j)\in\setSE^2$. Then, each branch $(i,\vx^n,j)\in\setBE$ of the E-FSMC is uniquely specified by the triple $(i,\ell,j)$, where $0\leq\ell\leq \LE-1$ identifies $\LE$ distinct branches between $(i,j)\in\setSE^2$. Let $\check{\setBE}$ be a set of triples $(i,\ell,j)$, specifying all valid branches $(i,\vx^n,j)$ in $\setBE$. Since one can define a bijective map between ${\setBE}$ and $\check{\setBE}$, we use $\check{\setBE}$ as a simplified alternative instead of the branch alphabet $\setBE$. Accordingly, we associate a label $\betaE(i,\ell,j)\defeq(i,\vx^n,\vu^n,j)$ to $\bE(t)=\big(i,\vx^n(t),j\big)$, for all $\bE(t)\in\setBE$, where $0\leq\ell\leq \LE-1$, $\vx^n\in\setX^n$ is the $n$-tuple input compatible with the realization of the branch $(i,\ell,j)\in\check{\setBE}$ (equivalently $(i,\vx^n,j)\in\setBE$), and $\vu^n\in\C^n$ is the associated noiseless $n$-tuple output. We sketch the trellis section of a third-order E-FSMC, representing a DICODE channel with the unnormalized CIR $\vect{g}=(1,-1)$ and the input alphabet $\setX\defeq\{-1,+1\}$ in~Fig.~\ref{fig:CH_Trellis}($b$). 	
	
	


	\begin{definition}[\textit{Intersymbol Interference Wiretap Channel (ISI-WTC)}]
		\label{def:prwt:channel:1}
		In an ISI-WTC, Alice transmits data symbols over Bob's and Eve's channels, characterized by the ISI channel parameter pairs $\bigl( \vgB, \sigmaB^2 \bigr)\in\C^{\mB+1}\times\R$ and $\bigl( \vgE, \sigmaE^2 \bigr)\in\C^{\mE+1}\times\R$, respectively. Specifically, Bob's channel has a noiseless output process $\{U(t)\}_{t\in\Z}$, a noise process $\{\NBidx{t}\}_{t\in\Z}$, and a noisy output process $\{Y(t)\}_{t\in\Z}$. Similarly, Eve's channel has a noiseless output process $\{V(t)\}_{t\in\Z}$, a noise process $\{\NEidx{t}\}_{t\in\Z}$, and a noisy output process $\{Z(t)\}_{t\in\Z}$. The random variables $\NBidx{t}$ and $\NEidx{t}$ are independent of each other and of the input process for all $t\in\Z$. (See Fig.~\ref{Fig:BLK_diag}.) \defend
	\end{definition}
	
	
	\begin{figure}
		\centering
		\includegraphics[scale=0.85]{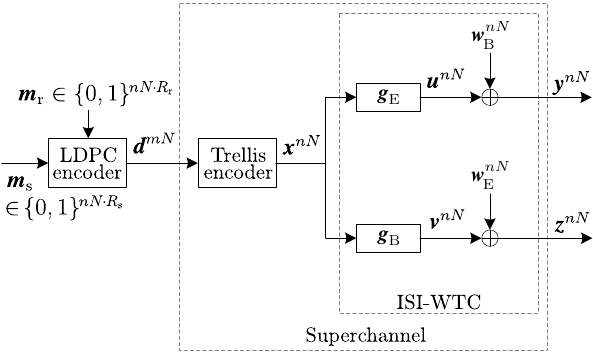}
		\caption{Block diagram of the proposed concatenated coding scheme, the ISI-WTC, and the corresponding superchannel. The trellis code has rate $m/n$.\\ \mbox{}}\label{Fig:BLK_diag}
	\end{figure}
	

	\begin{figure*}[t]
		\centering
		\includegraphics[scale=1.08]{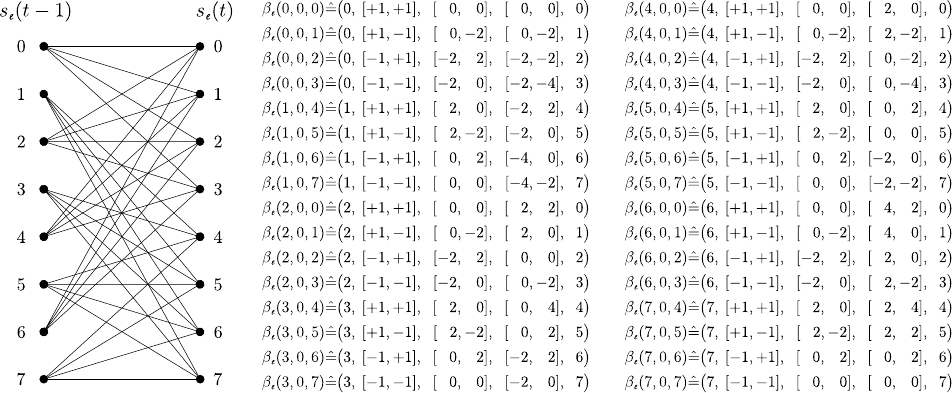}
		\caption{Trellis section of a $2$-nd order E-FSMC that represents an ISI-WTC, comprising a DICODE channel as Bob's channel and an EPR4 channel as Eve's channel. Each branch of the trellis section is labeled by $\betaE(i,\ell,j)\defeq(i,\vx^2,\vu^2,\vv^2,j)$ for all $(i,\ell,j)\in\check{\setBE}$.}\label{fig:PRWT_Trellis}
	\end{figure*}
	

	Notice that the FSMCs, the E-FSMCs, and their associated trellises are also utilized to represent ISI-WTCs. Since Bob's channel is assumed to be independent of Eve's channel, these generalizations are straightforward. We illustrate this using the following example. Consider an ISI-WTC with an input alphabet $\setX\defeq\{-1,+1\}$, where Bob's channel is a DICODE channel and Eve's channel is an EPR4 channel. The trellis section of the second-order E-FSMC, corresponding to this ISI-WTC is depicted in Fig.~\ref{fig:PRWT_Trellis}. Each branch of the trellis section is labeled by  $\betaE(i,\ell,j)\defeq(i,\vx^2,\vu^2,\vv^2,j)$, where $\vx^2\in\setX^2$ is the input compatible with the realization of the branch $(i,\ell,j)\in\check{\setBE}$ (equivalently $(i,\vx^2,j)\in\setBE$), and $\vu^2,\vv^2\in\R^2$ are, respectively, the associated noiseless outputs of Bob's and Eve's channels.
	
	
	
	\begin{figure}
		\centering
		\includegraphics[scale=1]{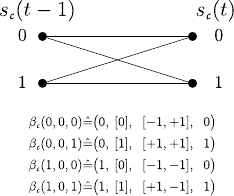}
		\caption{Wolf-Ungerboeck code of rate 1/2 and the collection of branches $\betaT(i,\ell,j)=(i,d,\vx^2,j)$ for all $(i,\ell,j)\in\check{{\setBT}}$.\\ \mbox{}}\label{fig:TRCD_Trellis}
	\end{figure}
	
	
	\subsection{Trellis Codes and Superchannel}
	\label{sec:pre:code}
	
	The trellis representation of error correction codes provides an effective framework for analyzing the statistical properties of codewords generated by a coding scheme without relying on its algebraic structure~\cite{TTC2015}. Without loss of generality, any~channel code that can be represented by a time-invariant finite-state machine is referred to as a trellis code in this paper.
	
	\begin{definition}[\emph{Trellis Code}]
		\label{def:trellis_code}
		A trellis code of rate $m/n$~(with $m\in\Z$, $n\in\Z$) is a time-invariant finite-state machine comprising an input process $\{ \vD^m(t) \}_{t \in \Z}$, an output process $\{ \vX^n(t) \}_{t \in \Z}$, and a state process $\{ \ST(t) \}_{t \in \Z}$, where $D(t)\in\{0,1\}$, $X(t)\in\setX$, $\ST(t)\in{\setST}$, for all $t\in\Z$, ${\setST}$ and $\setX$ are assumed to be finite, and $m\leq n\cdot\log|\setX|$. Let $\BT(t)\defeq\big(\ST(t-1),\vD^m(t),\ST(t)\big)$ and let ${\setBT}$ be a set of all triples $\bT(t)=\big(\sT(t-1),\vd^m(t),\sT(t)\big)$ where $p_{\BT(t)}\big(\bT(t)\big)$ is allowed to be nonzero.  For a trellis code of rate $m/n$, each realization of the branch $\bT(t)\in\setBT$ maps an $m$-bit input message $\vd^m(t)\in\{0,1\}^m$ to an $n$-symbol output codeword $\vx^n(t)\in\setX^n$, at time $t\in\Z$. Thus, there are totally $2^m$ branches $\bT(t)\in\setBT$ emanating from each state $\sT(t-1)\in\setST$ of the trellis code (one branch for each binary input of length $m$).
		
		\defend
	\end{definition}

	Due to the time-invariance assumption, a single trellis section is sufficient to demonstrate all valid branches between the successive states of the trellis~code. Let $\LT\in\Z$ be the number of parallel branches between the pair of states $(i,j)\in\setST^2$. Then, each branch $(i,\vd^m(t),j)\in\setBT$ of the trellis code is uniquely specified by a triple $(i,\ell,j)$, where $0\leq\ell\leq \LT-1$ identifies $\LT$ distinct branches between $(i,j)\in\setST^2$. Let $\check{\setBT}$ be a set of triples $(i,\ell,j)$, specifying all valid branches $(i,\vd^m(t),j)$ in $\setBT$. Since one can define a bijective map between $\setBT$ and $\check{\setBT}$, we use $\check{\setBT}$ as a simplified alternative instead of the branch alphabet $\setBT$. Accordingly, we associate a label $\betaT(i,\ell,j)\defeq(i,\vd^m,\vx^n,j)$ to $\bT(t)=\big(i,\vd^m(t),j\big)$, for all $\bT(t)\in\setBT$, where $0\leq\ell\leq \LT-1$, $\vd^m\in\{0,1\}^m$ is the $m$-bit input message compatible with the realization of the branch $(i,\ell,j)\in\check{\setBT}$ (equivalently $(i,\vd^m,j)\in\setBT$), and $\vx^n\in\setX^n$ is the associated $n$-tuple output codeword.
	The trellis representation of the Wolf-Ungerboeck code~\cite{1096620} of rate 1/2 is depicted in Fig.~\ref{fig:TRCD_Trellis}.
	
	
	\begin{figure}
		\centering
		\includegraphics[scale=0.90]{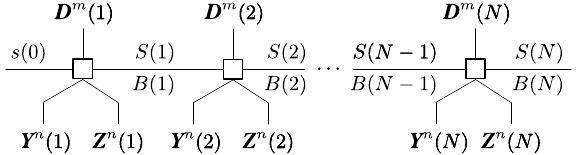}
		\caption{Normal factor graph representation of the superchannel.\\ \mbox{}}\label{Fig:SupCha_Factor}
	\end{figure}
	
	
	\begin{definition}[\emph{Superchannel}]\label{def:superchannel}
		A superchannel is a time-invariant finite-state machine, modeling the concatenation of a trellis code of rate $m/n$ (Definition~\ref{def:trellis_code}), and an $n$-th order E-FSMC (Definition~\ref{def:EFSMC}) representing an ISI-WTC (see Fig.~\ref{Fig:BLK_diag}). The superchannel comprises an input process $\{ \vD^m(t) \}_{t \in \Z}$ corresponding to the input messages of the trellis code, an output process $\{ \vY^n(t) \}_{t \in \Z}$ corresponding to Bob's observations, an output process $\{ \vZ^n(t) \}_{t \in \Z}$ corresponding to Eve's observations, and a state process $\{ S(t) \}_{t \in \Z}$. Indeed, $D(t)\in\{0,1\}$, $Y(t), Z(t) \in \C$, and $S(t) \in \setS$, for all $t \in \Z$, and ${\setS}$ is finite. Let $B(t)\defeq\big(S(t-1),\vD^m(t),S(t)\big)$ and let $\setB$ be a set of all triples $b(t)=\big(s(t-1),\vd^m(t),s(t)\big)$ where $p_{B(t)}\big(b(t)\big)$ is allowed to be nonzero. Each realization of the branch $b(t)\in\setB$ maps an $m$-bit input message $\vd^m(t)\in\{0,1\}^m$ to two $n$-tuple outputs, namely, Bob's observation $\vy^n(t)\in\C^n$ and Eve's observation $\vz^n(t)\in\C^n$. For any positive integer $N$, the joint PMF/PDF of ${\vB}^N\!\!$, $\vY^{nN}\!\!$, $\vZ^{nN}\!$ conditioned on $S(0)=s(0)$ and ${\vD}^{mN}={\vd}^{mN}$~is
		\begin{align*}
				&p_{{\vB}^N\!,\,\vY^{nN},\, \vZ^{nN} | S(0),\,\vD^{mN}}
				\big({\vb}^N, \vy^{nN}, \vz^{nN} | s(0),\,\vd^{mN}\big)\\
				&\qquad= \prod_{t=1}^N
				p_{B(t) | S(t-1),\,\vD^m(t)}
				\big(b(t) | s(t-1), \vd^m(t)\big)\\[-10pt]
				&\qquad\qquad\quad\cdot
				p_{\vY^n(t) | S(t-1),\,\vD^m(t)}
				\big(\vy^n(t) | s(t-1), \vd^m(t)\big)\\
				&\qquad\qquad\quad\cdot
				p_{\vZ^n(t) | S(t-1),\,\vD^m(t)}
				\big(\vz^n(t) | s(t-1), \vd^m(t)\big).
		\end{align*}
		This factorization is depicted as a normal factor graph~\cite{910573} in Fig.~\ref{Fig:SupCha_Factor}. \defend
	\end{definition}
	

	\begin{figure*}
		\centering
		\includegraphics[scale=.96]{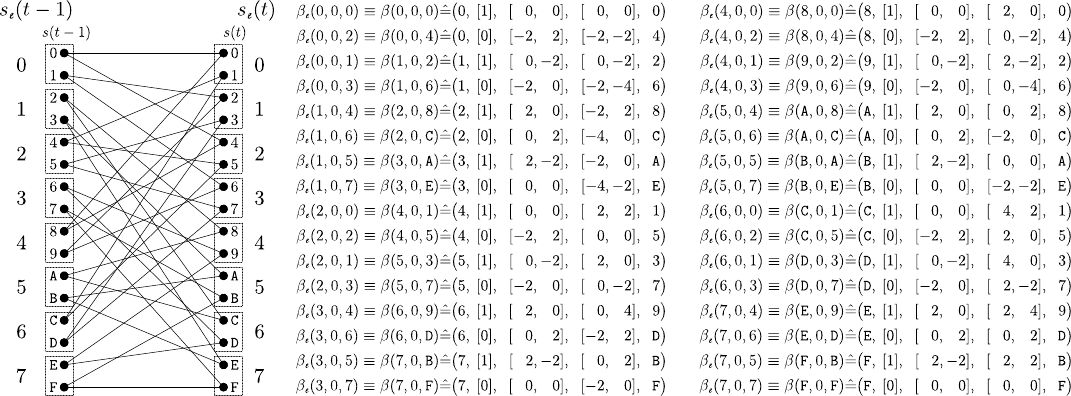}
		\caption{Trellis section of a superchannel, resulting from the concatenation of the trellis code depicted in Fig.~\ref{fig:TRCD_Trellis} and the $2$-nd order E-FSMC depicted in Fig.~\ref{fig:PRWT_Trellis}. Each branch of the trellis section is labeled by $\beta(i,\ell,j)\defeq(i,d,\vu^2,\vv^2,j)$ for all $(i,\ell,j)\in\check{\setB}$.}\label{fig:SupCha_Trellis}
	\end{figure*}
	
	
	Due to the time-invariance assumption in both the trellis code and the E-FSMC that represents the ISI-WTC, the finite-state machine modeling the superchannel can also be represented by a single trellis section. Let $L(i,j)\in\Z$ be the number of parallel branches between the pair of states $(i,j)\in\setS^2$. Then, each branch $\big(i,\vd^m(t),j\big)\in\setB$ of the superchannel is uniquely specified by a triple $(i,\ell,j)$, where $0\leq\ell\leq L(i,j)-1$ identifies $L(i,j)$ distinct branches between $(i,j)\in\setS^2$. Let $\check{\setB}$ be a set of triples $(i,\ell,j)$, specifying all valid branches $\big(i,\vd^m(t),j\big)$ in $\setB$. Since one can define a bijective map between ${\set{B}}$ and $\check{\set{B}}$, we use $\check{\set{B}}$ as a simplified alternative instead of the branch alphabet $\set{B}$. Accordingly, we associate a label $\beta(i,\ell,j)\defeq(i,\vd^m,\vu^n,\vv^n,j)$ to $b(t)=\big(i,\vd^m(t),j\big)$, for all ${b}(t)\in{\set{B}}$, where $0\leq\ell\leq L(i,j)-1$, $\vd^m\in\{0,1\}^m$ is the $m$-bit input message compatible with the realization of the branch $(i,\ell,j)\in\check{{\set{B}}}$ (equivalently $(i,\vd^m,j)\in{\setB}$), and $\vu^n, \vv^n\in\C^n$ are, respectively, the associated Bob's and Eve's noiseless $n$-tuple observations.
	
	In the following, we define additional notations that will be employed in~Section~\ref{sec:innertrellis} for optimizing the trellis code within the structure of the superchannel.
	
	\begin{definition}[\emph{Notation}]\label{def:superchannel:not}
		A set of all states of the superchannel compatible with the state $i\in\setSE$ in the constituent E-FSMC is denoted by $\setT_i\subseteq\setS$. Additionally, a set of all branches of the superchannel, starting from a state $s(t-1)\in\setT_i$, ending up with a state $s(t)\in\setT_j$, and having noiseless outputs $\vu^n\in\C^n$ and $\vv^n\in\C^n$ similar to the outputs on the branch $\betaE(i,\ell,j)\defeq(i,\vx^n,\vu^n,\vv^n,j)$ with $(i,\ell,j)\in\check{\setBE}$, is denoted by $\setT_{ij}^{(\ell)}\subseteq\setB$. It is important to note that each branch emanating from a particular state of the superchannel (e.g., $s(t-1)\in\setT_i$) is from a distinct $\setT_{ij}^{(\ell)}$ (for $\ell,j:(i,\ell,j)\in\check{\setBE}$). In Fig.~\ref{fig:SupCha_Trellis}, we sketch an exemplary trellis section of a superchannel,  obtained by the concatenation of~the Wolf-Ungerboeck code of rate $1/2$ depicted in Fig.~\ref{fig:TRCD_Trellis} and  the second-order E-FSMC depicted in Fig.~\ref{fig:PRWT_Trellis}. For instance, we have $\{\mathtt{A},\mathtt{B}\}\in\setT_5$ (with $5\in\setSE$) and $(\mathtt{A},0,8)\in\setT_{54}^{(0)}$ (with $(5,0,4)\in\setBE$). Also, we have $|\setT_i|=2$ for all $i\in\setSE$ and $|\setT_{ij}^{(\ell)}|=1$ for all $(i,\ell,j)\in\check{\setBE}$.~\defend
	\end{definition}
	
	
	\section{Theoretical Bounds}\label{sec::THEOBOUND}
	
	We first review a numerical lower bound on the secrecy capacity of ISI-WTCs from~\cite{10068266} and discuss the limitations of low-density parity-check (LDPC) codes in achieving this bound. We then derive the secrecy capacity of ISI-WTCs and use the theoretical capacity-achieving code to guide the design of a practical coding scheme, attaining the lower bound in~\cite{10068266}.
	
	Let $\Ms$ be a random variable representing a uniformly chosen secret message $\ms$ from an alphabet $\setM_\mathrm{s}$. Alice aims to ensure reliable and secure transmission by employing a binary code of rate $\Rs\defeq\frac{1}{N}\log|\setM_\mathrm{s}|$, mapping the secret message $\ms\in\setM_\mathrm{s}$ to a binary codeword $\vx^N\in\{0,1\}^N$. The reliability of Bob's decoder $\psi_N:\C^N\rightarrow\setM_\mathrm{s}$ is measured by the probability of a block error
	$
	\epsB_N\defeq\Prob\big(
	{M}_\mathrm{s}\neq\psi_N(\vY^N)\big),
	$
	and the secrecy performance of the code is measured by the rate of information leakage
	$
	\varepsilon_N\defeq {I(\Ms;\vZ^N)}/{N}.
	$
	With these measures, the reliability and secrecy criteria for the achievability of a secure rate $\Rs$ are, respectively, set to be
	\begin{align*}
		\lim_{N\to\infty}\epsB_N= 0,\qquad
		\lim_{N\to\infty}\varepsilon_N= 0.\alabel{equ:crit:sec}
	\end{align*}
	
	\subsection{Constrained Secrecy Capacity}
	
	Consider an ergodic Markov source that assigns a time-invariant transition probability $p_{ij}^{(\ell)}$ to each branch $\betaE(i,\ell,j)$ of an E-FSMC, i.e., $p_{ij}^{(\ell)}\defeq p_{\BE(t)|\SE(t-1)}\big(\betaE(i,\ell,j)|i\big),$ for all $(i,\ell,j)\in\check{\setBE}.$ Since the Markov source and the E-FSMC are assumed to be time-invariant, a steady-state probability $\mu_i\defeq p_{\SE(t)}(i)$ is assigned to each state $i\in\setSE$ of the E-FSMC. Let $Q_{ij}^{(\ell)}\defeq\mu_i\cdot p_{ij}^{(\ell)}$ represent the probability of realizing the branch $\betaE(i,\ell,j)$, for all $(i,\ell,j)\in\check{\setBE}$. Additionally, let $\mathbf{Q}\defeq\{Q_{ij}^{(\ell)}\}_{(i,\ell,j) \in \check{\setBE}}$ and let~$\setQ(\check{\setBE})$ be a polytope that includes all valid $\mathbf{Q}$ compatible with the branch connections specified by~the alphabet set $\check{\setBE}$. (See~\cite{10068266} and~\cite{4494705} for further details).
	
	Via a straightforward generalization of~\cite{10068266}, the constrained secrecy capacity of an $n$-th order E-FSMC modeling an ISI-WTC is the maximum achievable secure rate under the reliability and secrecy criteria in~\eqref{equ:crit:sec},\footnote{The secrecy criterion considered in~\cite[Prop.~1]{10068266} is stronger than the weak secrecy criterion presented in~\eqref{equ:crit:sec}. Consequently, all achievable secure rates in~\cite[Prop.~1]{10068266} are achievable under the weak secrecy criterion in~\eqref{equ:crit:sec} as well.} obtained by optimizing over finite-order Markov sources as characterized in~\cite[Prop.~1]{10068266}
	\begin{align*}
		\Rcsc^{(n)}
		\defeq
		\sup_{\mathbf{Q}\in\setQ(\check{\setBE})}
		\lim\limits_{N\to\infty}
		\frac{1}{nN}
		&\Big(
		I\big(\vBE^N; \vY^{nN} | \SE(0)\big)\\[-6pt]
		&-
		I\big(\vBE^N; \vZ^{nN} | \SE(0)\big)
		\Big)^+\!\!, \alabel{equ:csc}
	\end{align*}
	where the relevant maximization problem is efficiently solved using~\cite[Alg.~1]{10068266}. Note that since the finite-state machines considered in this work are indecomposable~\cite{Gallager:1968:ITR:578869} (i.e., the effect of an initial state vanishes over time), the mutual information rates are well-defined even when the initial state is unknown. Accordingly, for notational convenience, conditioning on the initial state will be omitted in the mathematical statements of the following sections.
	
	\subsection{Limitation of Single-Stage LDPC Codes}\label{subsec:codingforsec}
	
	LDPC codes are promising candidates for secure communication over wiretap channels due to their flexible coset-based nested structure~\cite{4276938}, powerful design tools such as density evolution~\cite{910578}, and reasonable encoding and decoding complexity~\cite{910579}. As a result, we incorporate them into our design. Nevertheless, achieving the constrained secrecy capacity requires the transmitted codewords to follow the channel input process that attains this bound. Conventional LDPC codes, however, lack the capability to precisely shape the distribution of the generated codewords to match this optimal profile.
	
	\begin{remark}\label{rem:ultbnd}
		Let $R_{\mathrm{iud}}^{(n)}$ denote the secure rate achieved by an i.u.d.\ input process $\{X(t)\}_{t\in\Z}$ over the considered ISI-WTC,
		\begin{equation*}
			R_{\mathrm{iud}}^{(n)} \defeq \lim_{N\to\infty}
			\frac{1}{N}
			\left(I(\vX^N;\vY^N)
			-
			I(\vX^N;\vZ^N)\right)^+\!\!.
		\end{equation*}
		According to~\cite[Prop.~1]{10068266}, $R_{\mathrm{iud}}^{(n)}$ is achievable. This secure rate provides a natural benchmark for LDPC codes without explicit probabilistic shaping. We thus define $\Rs^\ast = R_{\mathrm{iud}}^{(n)}$ as the target rate for single-stage LDPC codes over ISI-WTCs. Since our two-stage scheme can be viewed as a single-stage LDPC code over a superchannel, this remark directly provides a benchmark for the proposed code in subsequent sections. This secure rate can be numerically evaluated using variants of the sum-product/BCJR algorithm proposed in~\cite{1661831}. \remarkend
	\end{remark}
	
	Appendix~\ref{apx:simsen} considers an eavesdropping scenario in a downlink transmission from a typical 256-element ($16\times16$) phased array, modeled as an ISI-WTC. Example~\ref{ex:casestudy} applies \cite[Alg.~1]{10068266} to calculate the constrained secrecy capacity $\Rcsc^{(n)}$ and the secure rate achieved by an i.u.d.\ input process $\Riud^{(n)}$ of the E-FSMC representing this ISI-WTC. This channel model also serves as the basis for our simulations in subsequent sections.
	\begin{example}\label{ex:casestudy}
		Consider an ISI-WTC with input alphabet $\setX\defeq\{-1,+1\}$ and
			\begin{align*}
				&\vgB \!\!=\! (0.2360 \!-\! 0.7195i,  0.6230 \!+\! 0.1061i, -0.1314 \!-\! 0.0994i),\\
				&\vgE \!\!=\! (0.5211 \!-\! 0.4792i, -0.5791 \!+\! 0.4043i).
			\end{align*}
		Let $\SNRdBB=\SNRdBE=-4.0$, where $\SNRdBu \defeq 10\log_{10}\sfrac{1}{\sigma_u^2}$ for each user $u\in\{\userB,\userE\}$.\footnote{In the considered setup, (i) Bob operates in a low SNR regime, making it challenging to satisfy the reliability criterion (while such conditions are relevant for energy-efficient 5G NR base station deployments, particularly in mmWave Frequency Range 2 (FR2) scenarios involving non-terrestrial networks, where power constraints are critical~\cite{3GPP_TR_38_821}); (ii) Bob's channel has a lower point-to-point capacity than Eve's channel (see Appendix~\ref{apx:simsen}), making it challenging to satisfy the secrecy criterion; and (iii) binary phase-shift keying (BPSK) is used for inputs to simplify the presentation of the trellis code. It is important to note that neither of these considerations is strictly necessary.} The constrained secrecy capacity of the fourth-order E-FSMC representing this ISI-WTC is $\Rcsc^{(n)}\big|_{n=4}=0.121$~$(\sfrac{\text{bits}}{\text{channel use}})$ and the secure rate achieved by an i.u.d.\ input process is $\Riud^{(n)}\big|_{n=4}=0.013$~$(\sfrac{\text{bits}}{\text{channel use}})$; see the corresponding curves in Fig.~\ref{Fig:secrecy_gain} for a wide range of $-6.0\leq\SNRdBB\leq 0$.
		\exampleend
	\end{example}
	
	
	\begin{figure*}
		\centering
		\includegraphics[scale=1.15]{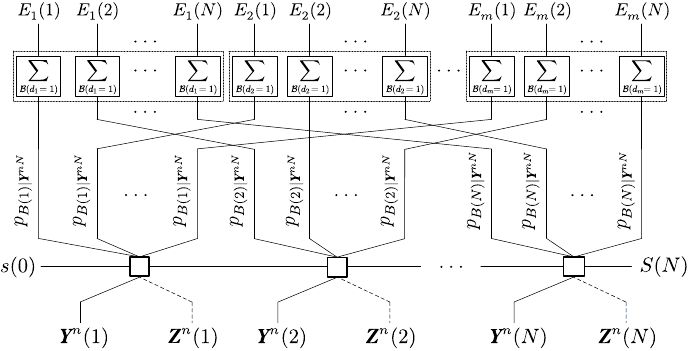}
		\caption{Soft-output messages from FNs to VNs at Bob's decoder. (See equations \eqref{equ:extrinsix} and \eqref{equ:extrinsix:el}.)}\label{Fig:FIG_Inner_Stage_EXT}
	\end{figure*}
	

	\subsection{Secrecy Capacity of ISI-WTCs}

	In the following, we describe the connection between the theoretical coding strategy that achieves the secrecy capacity and the proposed practical code that surpasses $\Riud^{(n)}$ and achieves $\Rcsc^{(n)}$. We begin with formulating the secrecy capacity.
	\begin{proposition}\label{prop:secap}
		The secrecy capacity of the ISI-WTC under the reliability criterion in~\eqref{equ:crit:sec} and the strong secrecy criterion\footnote{The strong secrecy constraint is applied solely to characterize the secrecy capacity of ISI-WTCs. Throughout the remainder of the paper, weak secrecy is used as the main metric for evaluating the performance of the designed wiretap codes. Note that the conversion from weak to strong secrecy can be realized as a modular post-processing step beyond the channel-coding layer, without any asymptotic rate loss, via privacy amplification as in~\cite[Thm.~2]{Maurer2000}.} $nN\cdot\varepsilon_{nN}\to0$ is given by
		\begin{equation*}
			\Cs
			=
			\lim_{n\to\infty}\frac{1}{n} \sup_{p_{D,\vX^n}} \Big(I(D;\vY^n) - I(D;\vZ^n)\Big)^+\!\!,
		\end{equation*}
		where the supremum is taken over all joint distributions induced by an auxiliary random variable $D\in\set{D}$ and the conditional input distribution $p_{\vX^n|D}$, complying with the Markov chains $D\to\vX^n\to\vY^n$ and $D\to\vX^n\to\vZ^n$.
	\end{proposition}
	\begin{IEEEproof}
		See Appendix~\ref{apx:secap}.
	\end{IEEEproof}
	As shown in Appendix~\ref{apx:secap}, the encoder of the theoretical capacity-achieving scheme consists of the concatenation of (i)~a nested-code encoder for wiretap channels $f_{N}\!:\setM_\mathrm{s}\to\set{D}^{N}$ and (ii)~a component-wise stochastic encoder $h_n\!:\set{D}\to\setX^n$, operating according to the stochastic map $p_{\vX^n\mid D}$. In the practical scheme, we implement the nested wiretap code structure using \emph{outer}-stage LDPC codes, while realizing the blockwise probabilistic mapping $p_{\vX^n\mid D}$ via an \emph{inner}-stage trellis code.
	
	In the proof of Proposition~\ref{prop:secap}, the auxiliary random variable $D\in\set{D}$ is defined as a scalar on a per-block basis, where the theoretical encoder $h_n$ maps each realization $d(t)$ to a length-$n$ channel input vector $\vX^n(t)$ according to the \emph{stochastic} map $p_{\vX^n\mid D}$. In contrast, in the proposed practical implementation, this mapping is realized by a \emph{deterministic} component-wise trellis encoder that maps length-$m$ vectors $\{\vd^m(t)\}_{t=1}^N$ to the channel input vectors $\{\vx^n(t)\}_{t=1}^N$. Under this framework, the inner trellis code transforms the i.u.d.\ sequence $\{\vD^m(t)\}_{t\in\Z}$ produced by the \emph{outer}-stage LDPC codes into a channel-input Markov process $\{\vX^n(t)\}_{t\in\Z}$ possessing the optimal distribution $\matr{Q}\in\setQ(\check{\setBE})$ that maximizes~\eqref{equ:csc}, thereby achieving the constrained secrecy capacity of the underlying ISI-WTC. Within this architecture (Fig.~\ref{Fig:BLK_diag}), the i.u.d.\ secure rate of the resulting superchannel equals the constrained secrecy capacity. Accordingly, since Remark~\ref{rem:ultbnd} implies that the i.u.d.\ secure rate of this induced superchannel is achievable by the \emph{outer}-stage LDPC code, the constrained secrecy capacity of the ISI-WTC becomes achievable by the proposed two-stage coding scheme.

	\section{Normal Factor Graph of the Two-Stage Code}\label{sec::FG}
	
	In this section, we concatenate the normal factor graph of the superchannel (Fig.~\ref{Fig:SupCha_Factor}) with an outer LDPC code stage. We then describe the employed punctured encoding procedure and outline the modified density evolution analysis~\cite{910578,1207365} for the resulting two-stage concatenated code.
	
	Let FN, VN, and CN be the abbreviations standing for the factor nodes of a superchannel (see Fig.~\ref{Fig:SupCha_Factor}), the variable nodes, and the check nodes of an LDPC code, respectively. With a slight abuse of notation, we denote the $l$-th element of the time-indexed vector $\vD^m(t)$ by $D_l(t)\defeq D\big((t-1)m+l\big)$ when the vector length is clear from the context. Given the output realizations $\vy^{nN}$, the soft-output information emanating from the FNs of the superchannel is the logarithmic \emph{a posterior} probability (APP) ratio of the symbol $d_l(t)\in\{0,1\}$, defined as
	\begin{equation}\label{equ:extrinsix}
		{E}_l(t)\defeq\log\left(\frac{p_{D_l(t)|\vY^{nN}}(1|\vy^{nN})}{1-p_{D_l(t)|\vY^{nN}}(1|\vy^{nN})}\right),
	\end{equation}
	for all $1\leq l\leq m$ and $1\leq t\leq N$. We also define the set $\set{B}(d_{l}=1)\defeq\{(i,\vd^m,j)\in\set{B}:d(l)=1\}$. Then, we have
	\begin{equation}\label{equ:extrinsix:el}
		p_{D_l(t)|\vY^{nN}}(1|\vy^{nN})=\sum_{(i,\vd^m,j)\in\set{B}(d_{l}=1)}p_{B(t)|\vY^{nN}}\big(i,\vd^m,j|\vy^{nN}\big),
	\end{equation}
	where $p_{B(t)|\vY^{nN}}\big(i,\vd^m,j|\vy^{nN}\big)$ is calculated by running the windowed version of the BCJR algorithm~\cite{490940}, given the observations from the channel's output and the extrinsic information available from the outer code stage (Fig.~\ref{Fig:FIG_Inner_Stage_EXT}). Since for $N\to\infty$ the soft-output messages $\big({E}_l(t)\big)_{t=1}^N$ have different statistics for each $1\leq l\leq m$, we employ $m$ separated LDPC subcodes as the outer code stage~\cite{1397934}. The normal factor graph of the two-stage code is depicted in Fig.~\ref{Fig:ConcGraph}.
	
	
	\begin{figure*}
		\centering
		\includegraphics[scale=0.76]{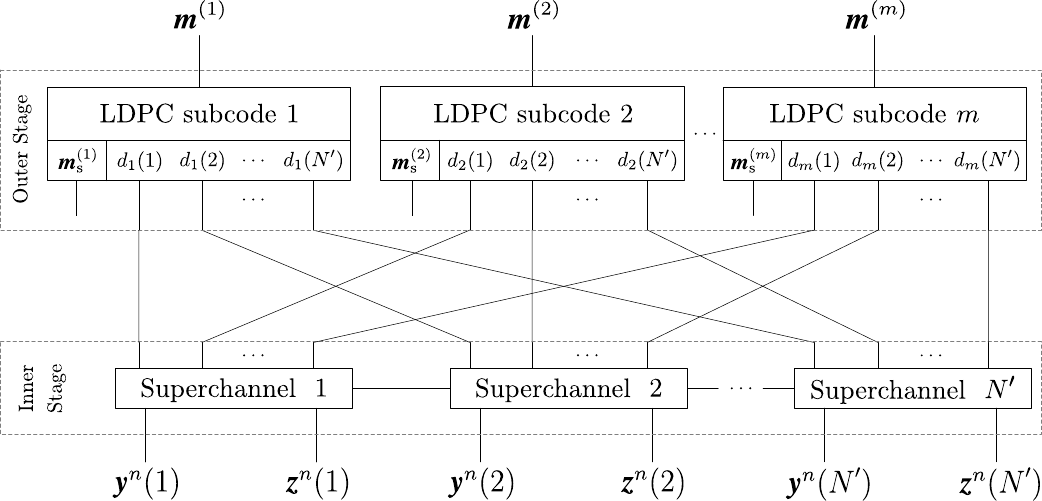}
		\caption{Normal factor graph of the concatenated code, comprising a superchannel ($\Rin=\sfrac{m}{n}$) and $m$ LDPC subcodes as the outer stage.}\label{Fig:ConcGraph}
	\end{figure*}
	
	
	\subsection{Nested Structure and Punctured Coding}
	\label{subsec:Enc}
	
	
	According to the equation $$N\cdot\varepsilon_N=I(\vX^N;\vZ^N)+H(\vX^N|\Ms,\vZ^N)-H(\vX^N|\Ms),$$ the employed coding scheme should avoid a bijection between $\Ms$ and $\vX^N$ in order to leverage the term $H(\vX^N|\Ms)>0$ for the cancellation of $I(\vX^N;\vZ^N)+H(\vX^N|\Ms,\vZ^N)$. To accomplish this, a nested code is adopted for wiretap channels, which comprises a mother code $\setC\subset\{0,1\}^N$ with a design rate $\Rd$ (i.e., $\dim(\setC)=N\cdot\Rd$) and a subspace $\setC'\subset\setC$ such that $\dim(\setC)-\dim(\setC')=N\cdot\Rs$. The subspace $\setC'$ partitions the mother code $\setC$ into $2^{N\cdot\Rs} (=|\setM_\mathrm{s}|)$ disjoint subcodes $\setC\defeq\bigcup_{i\in\setM_\mathrm{s}}\setC'_i$, with $\{\setC'_{i}\}_{i\in\setM_\mathrm{s}}$ being cosets of $\setC'$. Alice encodes her secret message $\ms\in\setM_\mathrm{s}$ by choosing a \emph{random} codeword from the coset $\setC'_i\big|_{i={\ms}}\!$. Accordingly, Alice also requires a dummy message $\mr\in\setM_\mathrm{r}$ with~$|\set{\Mr}|\defeq2^{\dim(\setC')}$ to capture the local randomness required for choosing the \emph{random} codeword from $\setC'_i$. Let $\Mr$ be a random~variable corresponding to a uniformly chosen dummy message $\mr$ from $\setM_\mathrm{r}$, independent of $\Ms$. In this setting, the non-zero term $H(\vX^N|\Ms)=H(\Mr)=N\cdot\Rr$, with $\Rr\defeq\frac{1}{N}\log|{\setM_\mathrm{r}}|$ representing the rate of the dummy message, ensures the \text{non-bijective} mapping required to meet the secrecy criterion.
	
	
	The mentioned nested structure is implemented in the \emph{outer} stage using LDPC codes. Let $\vms\in\{0,1\}^{\ks}$ and $\vmr\in\{0,1\}^{\kr}$ be the binary vectors corresponding to the secret message and the dummy message, respectively. Then, $\vm\defeq(\vms,\vmr)$ denotes the message vector, which is partitioned into $m\in\Z$ subvectors $\{\vm^{(l)}\}_{l=1}^m$ with $\vm^{(l)}\defeq(\vms^{(l)},\vmr^{(l)})$. For all $1\leq l\leq m$, each subvector $\vm^{(l)}$ is systematically encoded using the $l$-th LDPC subcode of rate $\Rout^{(l)}$ and blocklength $N^{(l)}\in\Z$. Accordingly, the associated lengths of $\vm^{(l)}$, $\vms^{(l)}$, and $\vmr^{(l)}$ are respectively~$N^{(l)}\cdot\Rout^{(l)}$ and
	\begin{align*}
		\ks\cdot\frac{N^{(l)}\cdot\Rout^{(l)}}{\sum_{\imath=1}^{m}N^{(\imath)}\cdot\Rout^{(\imath)}},\quad
		\kr\cdot\frac{N^{(l)}\cdot\Rout^{(l)}}{\sum_{\imath=1}^{m}N^{(\imath)}\cdot\Rout^{(\imath)}}.
	\end{align*}
	
	To prevent the eavesdropper from extracting secret information using the systematic part of codewords, we \emph{puncture} the secret message subvectors $\vms^{(l)}$ from output blocks of the $l$-th LDPC subcode, for all $1\leq l\leq m$. In our simulations, we choose the values of $N^{(l)}$ (for $1\leq l\leq m$) such that the blocklengths of all \emph{punctured} subcodes become a constant integer independent of~$l$ (see Remark~\ref{rem:fixedN}), denoted~by
	\begin{equation*}
		N'\defeq N^{(l)}\cdot\Bigg(1-\frac{\ks\cdot\Rout^{(l)}}{\sum_{\imath=1}^{m}N^{(\imath)}\cdot\Rout^{(\imath)}}\Bigg),\quad\text{for all $1\leq l\leq m$.}
	\end{equation*}	
	For $q^{(l)}\defeq(N^{(l)}-N')/N^{(l)}$ being the fraction of the \emph{punctured} VNs in the $l$-th LDPC subcode, the rate of the $l$-th \emph{punctured} subcode becomes $\Rout'^{(l)}\defeq\Rout^{(l)}/(1-q^{(l)})$ and the rate of the overall  \emph{punctured} outer-stage LDPC code becomes $$\Rout'\defeq\frac{1}{m}\sum_{l=1}^{m}\Rout'^{(l)}.$$
	
	Let $\big(d_l(t)\big)_{t=1}^{N'}$ denote the \emph{punctured} output block from the $l$-th LDPC encoder, for all $1\leq l\leq m$. These $m$ output blocks $\big\{\big(d_l(t)\big)_{t=1}^{N'}\big\}_{l=1}^m$ are interleaved into ${N'}$ vectors $\{\vd^m(t)\}_{t=1}^{N'}$, where each $\vd^m(t)$ is encoded with the superchannel, for all $1\!\leq\! t \leq {N'}$ (see Fig.~\ref{Fig:ConcGraph}). Finally, the channel observations $\vy^{nN'}$ and $\vz^{nN'}$ are received at Bob's and Eve's decoders, respectively.
	
	It is crucial to apply puncturing directly to the outputs of the outer LDPC code stage \emph{before} they are encoded by the inner trellis code (see Fig.~\ref{Fig:ConcGraph}). If puncturing is performed \emph{after} encoding with the inner code stage, the memory effect of the trellis code can cause private information to leak into adjacent positions, potentially exposing it to the eavesdropper.\footnote{We adopt systematic encoding followed by puncturing of the secret bits to strengthen security, although this is not required by the weak secrecy criterion. This introduces a small reliability penalty, but ensures that confidential bits do not directly appear in the transmitted signal and strengthens weak secrecy.} 
	
	\subsection{Asymptotic Analysis of the Two-Stage Turbo Decoder}\label{subsec:de}
	
	An ensemble of LDPC codes is specified by a blocklength and a fixed degree distribution of the corresponding bipartite graphs. Let $\lambda_\jmath^{(l)}$ ($\rho_\jmath^{(l)}$) denote the fraction of edges connected to the VNs (CNs) of degree $\jmath$ in the $l$-th subcode, for $1\leq l\leq m$. The degree distribution of the $l$-th subcode ensemble is represented by $(\bm{\lambda}^{(l)},\bm{\rho}^{(l)})$, where $\bm{\lambda}^{(l)}\defeq(\lambda_\jmath^{(l)})_\jmath$ and $\bm{\rho}^{(l)}\defeq(\rho_\jmath^{(l)})_\jmath$. The rate of the $l$-th \emph{unpunctured} subcode from this ensemble is calculated by\vspace{-6pt}
	$$
	\Rout^{(l)}\,=\,1
	-%
	\frac{\sum_\jmath (\rho_\jmath^{(l)}/\jmath) }{\sum_\jmath (\lambda_\jmath^{(l)}/\jmath) }.
	$$
	
	
	Density evolution is a method for tracking the evolution of the message PDFs passing through the nodes of a factor graph~\cite{910578}. Applying density evolution to the factor graph of a sum-product decoder enables one to asymptotically approximate the error probability of the corresponding code ensemble at each iteration of the decoder. In the following, we apply density evolution to the normal factor graph of the concatenated code involving the sum-product decoder of the $l$-th \emph{punctured} subcode.\footnote{The concentration of the decoder behavior and the zero-error threshold around their respective expected values over all possible graphs and cosets in the code ensemble follows directly from the results presented in~\cite[Sec.~III]{1207365}.} We omit the subcode index $l$ from message PDFs for brevity of notation.
	
	
	Let $f_\mathrm{vc}^{(\hell)}(\xi)$ be the density of messages from VNs to CNs at the $\hell$-th iteration of the decoder ($\hell\in\Z$). Accordingly, the block error probability at the $\hell$-th iteration is approximated by~\cite{910578}
	\begin{equation}\label{equ:pdf:er}
		\mathrm{e}_\hell\defeq\int_{-\infty}^{0}f_\mathrm{vc}^{(\hell+1)}(\xi)\mathrm{d}\xi.
	\end{equation}
	
	Let $f_\mathrm{fv}^{(\hell)}(\xi)$ be the density of messages from FNs to VNs and let $f_\mathrm{cv}^{(\hell)}(\xi)$ be the density of messages from CNs to VNs. Then, the density $f_\mathrm{vc}^{(\hell+1)}(\xi)$ in~\eqref{equ:pdf:er} is calculated by
	\begin{align*}
		f_\mathrm{vc}^{(\hell+1)}(\xi)
		= \big(
		q^{(l)}\cdot\delta(\xi)
		+&(1-q^{(l)})\cdot f_\mathrm{fv}^{(\hell)}(\xi)
		\big)\\
		&\ast
		\sum_\jmath
		\lambda_\jmath^{(l)}\big(f_\mathrm{cv}^{(\hell)}(\xi)\big)
		^{(\ast(\jmath-1))},
	\end{align*}
	where the impulse of weight $q^{(l)}$ at zero refers to the decoder's uncertainty (zero belief) about the \emph{punctured} VNs.
	The density $f_\mathrm{cv}^{(\hell+1)}(\xi)$ is calculated by evolving $f_\mathrm{vc}^{(\hell+1)}(\xi)$ through the CNs of the LDPC decoder. This can be accomplished by defining the convolution in a different domain or by quantizing the densities and using precomputed lookup~tables~as~in~\cite{910577,905935}.
	
	
	Let the fraction of VNs of degree $\jmath$ in the $l$-th subcode be denoted by
	$$
	\Lambda_\jmath^{(l)}
	\defeq
	\frac{\lambda_\jmath^{(l)}}%
	{\jmath\cdot\sum_d \frac{\lambda_d}{d}}.
	$$
	Then, the density of messages from VNs to FNs is calculated by
	\begin{equation*}
		f_\mathrm{vf}^{(\hell+1)}(\xi)
		=\sum_\jmath
		\Lambda_\jmath^{(l)}\big(f_\mathrm{cv}^{(\hell+1)}(\xi)\big)
		^{(\ast(\jmath-1))}.
	\end{equation*}
	Finally, the density $f_\mathrm{fv}^{(\hell+1)}(\xi)$ is derived by evolving $f_\mathrm{vf}^{(\hell+1)}(\xi)$ through the FNs of the superchannel using Monte Carlo simulations. This process involves generating a sample block $\vy^{nN'}$,\footnote{First, we generate $mN'$ i.u.d.\ symbols $\vd^{mN'}\in\{0,1\}^{mN'}$\!. Then, we encode $\big\{{\vd}^m(t)\big\}_{t=1}^{N'}$ using the superchannel to obtain ${\vy}^{nN'}$ at Bob's output.} and $mN'$ independent realizations from $f_\mathrm{vf}^{(\hell+1)}(\xi)$. After evolving the realizations from the density $f_\mathrm{vf}^{(\hell+1)}(\xi)$ as the extrinsic information and ${\vy}^{nN'}$ as the channel observations through a windowed BCJR decoder, the density $f_\mathrm{fv}^{(\hell+1)}(\xi)$ is estimated by a PMF using the histogram of the soft-output messages from the BCJR decoder. (See~\cite[Sec.~IV-B]{1207365},~\cite{866615}.) It is worth noting that the approximation error can be made arbitrarily small by choosing the value of $N'$ sufficiently large.
	
	
	\section{Code Design}\label{sec::DESIGN}
	
	\begin{figure*}
			\captionsetup{justification=centering}
			\captionof{table}{\textsc{Parameters of the Optimized Markov Source, \!$\{Q_{ij}^{\ast(\ell)}\}_{(i,\ell,j)\in\check{\setBE}}$ and \!$\{\mu_i^*\}_{i\in\setSE}$, Obtained Using~\cite[Alg.~1]{10068266}. Optimal Values of $\{k_i\in\Z\}_{i\in\setSE}$, $\{\kappa_i\}_{i\in\setSE}$, $\{\varv_{ij}^{(\ell)}\in\Z\}_{(i,\ell,j)\in\check{\setBE}}$, and $\{\check{Q}_{ij}^{(\ell)}\}_{(i,\ell,j)\in\check{\setBE}}$, Obtained by Solving \eqref{opt:rule1} and \eqref{opt:rule2} (Example~\ref{ex:innerate} and Example~\ref{ex:req1}).}}
			\label{tbl:trellis1}
			\scalebox{0.755}{
				\parbox{1.5\linewidth}{
					{\begin{tabular}{||c|c|c|c||}
							\hline\hline
							&&&\\[-2.25ex]
							$\betaE(i,\ell,j)$ &  $Q_{ij}^{\ast(\ell)}$  &  $\varv_{ij}^{(\ell)}$  &  $\check{Q}_{ij}^{(\ell)}$  \\[0.5ex]
							\hline\hline
							&&&\\[-2.25ex]
							$\betaE(0,0,0)$ &  0.2004 &   80  & 0.2000 \\
							$\betaE(0,0,1)$ &  0.0051 &   2   & 0.0050 \\
							$\betaE(0,0,2)$ &  0.0082 &   3   & 0.0075 \\
							$\betaE(0,0,3)$ &  0.0028 &   1   & 0.0025 \\\hline &&&\\[-2.25ex]
							$\betaE(0,1,0)$ &  0.0196 &   8   & 0.0200 \\
							$\betaE(0,1,1)$ &  0.0005 &        0   &      0 \\
							$\betaE(0,1,2)$ &  0.0008 &        0   &      0 \\
							$\betaE(0,1,3)$ &  0.0006 &        0   &      0 \\\hline &&&\\[-2.25ex]
							$\betaE(0,2,0)$ &  0.0087 &   4   & 0.0100 \\
							$\betaE(0,2,1)$ &  0.0089 &   4   & 0.0100 \\
							$\betaE(0,2,2)$ &  0.0001 &        0   &      0 \\
							$\betaE(0,2,3)$ &  0.0077 &   3   & 0.0075 \\\hline &&&\\[-2.25ex]
							$\betaE(0,3,0)$ &  0.0492 &  20   & 0.0500 \\
							$\betaE(0,3,1)$ &  0.0389 &  16   & 0.0400 \\
							$\betaE(0,3,2)$ &  0.0003 &        0   &      0 \\
							$\betaE(0,3,3)$ &  0.0687 &  27   & 0.0675 \\[.05ex]\hline
							\multicolumn{4}{|c|}{}\\[-2.5ex]
							\multicolumn{4}{|c|}{$\mu_0^\ast=0.4204$}\\
							\multicolumn{4}{|c|}{$k_0=84$, $\kappa_0=0.4200$}\\\hline\hline
						\end{tabular}\,%
						~\begin{tabular}{||c|c|c|c||}
							\hline\hline
							&&&\\[-2.25ex]
							$\betaE(i,\ell,j)$ &  $Q_{ij}^{\ast(\ell)}$  &  $\varv_{ij}^{(\ell)}$  & $\check{Q}_{ij}^{(\ell)}$ \\[0.5ex]
							\hline\hline
							&&&\\[-2.25ex]
							$\betaE(1,0,0)$ &  0.0040 &   2  &  0.0050 \\
							$\betaE(1,0,1)$ &  0.0001 &        0  &       0 \\
							$\betaE(1,0,2)$ &  0.0069 &   3  &  0.0075 \\
							$\betaE(1,0,3)$ &  0.0040 &   2  &  0.0050 \\\hline &&&\\[-2.25ex]
							$\betaE(1,1,0)$ &  0.0007 &        0  &       0 \\
							$\betaE(1,1,1)$ &  0.0000 &        0  &       0 \\
							$\betaE(1,1,2)$ &  0.0010 &        0  &       0 \\
							$\betaE(1,1,3)$ &  0.0011 &        0  &       0 \\\hline &&&\\[-2.25ex]
							$\betaE(1,2,0)$ &  0.0000 &        0  &       0 \\
							$\betaE(1,2,1)$ &  0.0000 &        0  &       0 \\
							$\betaE(1,2,2)$ &  0.0001 &        0  &       0 \\
							$\betaE(1,2,3)$ &  0.0069 &   3  &  0.0075 \\\hline &&&\\[-2.25ex]
							$\betaE(1,3,0)$ &  0.0009 &        0  &       0 \\
							$\betaE(1,3,1)$ &  0.0011 &        0  &       0 \\
							$\betaE(1,3,2)$ &  0.0004 &        0  &       0 \\
							$\betaE(1,3,3)$ &  0.0099 &   4  &  0.0100 \\[.05ex]\hline
							\multicolumn{4}{|c|}{}\\[-2.5ex]
							\multicolumn{4}{|c|}{$\mu_1^\ast=0.0369$}\\
							\multicolumn{4}{|c|}{$k_1=7$, $\kappa_1=0.0350$}\\\hline\hline
						\end{tabular}\,%
						~\begin{tabular}{||c|c|c|c||}
							\hline\hline
							&&&\\[-2.25ex]
							$\betaE(i,\ell,j)$ &  $Q_{ij}^{\ast(\ell)}$  &  $\varv_{ij}^{(\ell)}$  & $\check{Q}_{ij}^{(\ell)}$ \\[0.5ex]
							\hline\hline
							&&&\\[-2.25ex]
							$\betaE(2,0,0)$ &  0.0472 &  19 &   0.0475 \\
							$\betaE(2,0,1)$ &  0.0005 &        0 &        0 \\
							$\betaE(2,0,2)$ &  0.0011 &   1 &   0.0025 \\
							$\betaE(2,0,3)$ &  0.0009 &        0 &        0 \\\hline &&&\\[-2.25ex]
							$\betaE(2,1,0)$ &  0.0038 &   2 &   0.0050 \\
							$\betaE(2,1,1)$ &  0.0000 &        0 &        0 \\
							$\betaE(2,1,2)$ &  0.0002 &        0 &        0 \\
							$\betaE(2,1,3)$ &  0.0000 &        0 &        0 \\\hline &&&\\[-2.25ex]
							$\betaE(2,2,0)$ &  0.0026 &   1 &   0.0025 \\
							$\betaE(2,2,1)$ &  0.0008 &        0 &        0 \\
							$\betaE(2,2,2)$ &  0.0000 &        0 &        0 \\
							$\betaE(2,2,3)$ &  0.0022 &   1 &   0.0025 \\\hline &&&\\[-2.25ex]
							$\betaE(2,3,0)$ &  0.0263 &  11 &   0.0275 \\
							$\betaE(2,3,1)$ &  0.0047 &   2 &   0.0050 \\
							$\betaE(2,3,2)$ &  0.0002 &        0 &        0 \\
							$\betaE(2,3,3)$ &  0.0065 &   3 &   0.0075 \\[.05ex]\hline
							\multicolumn{4}{|c|}{$\mu_2^\ast=0.0972$}\\
							\multicolumn{4}{|c|}{$k_2=20$, $\kappa_2=0.1000$}\\\hline\hline
						\end{tabular}\,%
						~\begin{tabular}{||c|c|c|c||}
							\hline\hline
							&&&\\[-2.25ex]
							$\betaE(i,\ell,j)$ &  $Q_{ij}^{\ast(\ell)}$  &  $\varv_{ij}^{(\ell)}$  & $\check{Q}_{ij}^{(\ell)}$ \\[0.5ex]
							\hline\hline
							&&&\\[-2.25ex]
							$\betaE(3,0,0)$ &  0.0606 &  24  &  0.0600 \\
							$\betaE(3,0,1)$ &  0.0006 &   0  & 		 0 \\
							$\betaE(3,0,2)$ &  0.0383 &  15  &  0.0375 \\
							$\betaE(3,0,3)$ &  0.0397 &  16  &  0.0400 \\\hline &&&\\[-2.25ex]
							$\betaE(3,1,0)$ &  0.0047 &   2  &  0.0050 \\
							$\betaE(3,1,1)$ &  0.0000 &   0  &       0 \\
							$\betaE(3,1,2)$ &  0.0025 &   1  &  0.0025 \\
							$\betaE(3,1,3)$ &  0.0014 &   1  &  0.0025 \\\hline &&&\\[-2.25ex]
							$\betaE(3,2,0)$ &  0.0003 &   0  &       0 \\
							$\betaE(3,2,1)$ &  0.0005 &   0  &       0 \\
							$\betaE(3,2,2)$ &  0.0019 &   1  &  0.0025 \\
							$\betaE(3,2,3)$ &  0.0564 &  23  &  0.0575 \\\hline &&&\\[-2.25ex]
							$\betaE(3,3,0)$ &  0.0044 &   2  &  0.0050 \\
							$\betaE(3,3,1)$ &  0.0038 &   1  &  0.0025 \\
							$\betaE(3,3,2)$ &  0.0091 &   4  &  0.0100 \\
							$\betaE(3,3,3)$ &  0.2213 &  88  &  0.2200 \\[.05ex]\hline
							\multicolumn{4}{|c|}{}\\[-2.5ex]
							\multicolumn{4}{|c|}{$\mu_3^\ast=0.4456$}\\
							\multicolumn{4}{|c|}{$k_3=89$, $\kappa_3=0.4450$}\\\hline\hline
			\end{tabular}}}}\vspace*{10pt}
	\end{figure*}
	
	\subsection{Inner Trellis Code Stage}\label{sec:innertrellis}
	
	In this section, we apply the design principles outlined in~\cite{1397934} to design a trellis code that maps an input i.u.d.\ process to an output Markov process, achieving the constrained secrecy capacity of the ISI-WTC. We begin by determining the desired rate of the inner trellis code.
	
	Let $\matrQ^*\defeq\{Q_{ij}^{\ast(\ell)}\}_{(i,\ell,j)\in\check{\setBE}}$ characterize an optimized Markov source, achieving the constrained secrecy capacity~\eqref{equ:csc} of an $n$-th order E-FSMC specified by $\check\setBE$, and let $$p_{ij}^{\ast(\ell)}\defeq \frac{Q_{ij}^{\ast(\ell)}}{\sum_{\ell,j:(i,\ell,j)\in\check{\setBE}} Q_{ij}^{\ast(\ell)}},\quad \text{for all } (i,\ell,j)\in\check{\setBE},$$ represent the corresponding state transition probabilities. Additionally, let $\Rd\defeq\Rin\cdot\Rout'$ be the design rate of the concatenated code and let $\Rin\triangleq m/n$ be the rate of the inner trellis code. Since the BCJR decoding complexity grows exponentially with the trellis code's input vector length~$m$ (see \cite{1397934} and Remark~\ref{rem:comp}),
	we designate the smallest values of $m$ and $n$, satisfying
	\begin{equation}\label{ex:rule1}
		\Rd<\Rin
		\defeq\frac{m}{n}
		\leq\frac{1}{n}\min_{(i,\ell,j)\in\check{\setBE}}\bigg(
		\log\frac{1}{p_{ij}^{\ast(\ell)}}
		\bigg),
	\end{equation}
	where obeying the strict lower bound $\Rin>\Rd$ allows for the use of an outer code stage of rate $\Rout'={\Rd}/{\Rin}$ to implement the nested structure of the wiretap codes (Section~\ref{subsec:Enc}). Additionally, the upper bound ensures the unique decodability of the trellis code. Specifically, as will be demonstrated later in this section, the main idea in designing the trellis code is to find a superchannel where $\Pr\big(b(t)\in\setT_{ij}^{(\ell)}|s(t-1)\in\setT_i\big)$ approaches $p_{ij}^{\ast(\ell)}$ for all $(i,\ell,j)\in\check{\setBE}$. Nevertheless, since each branch emanating from a state $s(t-1)\in\setT_i$ of the superchannel must be from a distinct element of the set $\{\setT_{ij}^{(\ell)}\}_{\ell,j:(i,\ell,j)\in\check{\setBE}}$ and because the input process of the superchannel is i.u.d., one can verify $$\Pr\big(b(t)\in\setT_{ij}^{(\ell)}|s(t-1)\in\setT_i\big)\leq 2^{-m},\quad\!\! \text{for all } (i,\ell,j)\in\check{\setBE},$$  implying the upper bound~\eqref{ex:rule1}.
	
	\begin{remark}\label{rem:optdesignrt}
		To ensure that the mother code ($\setC$ in Section~\ref{subsec:Enc}) is achievable over Bob's channel, the target design rate $\Rd^*$ of the concatenated code is set to be the achievable information rate of the Markov source specified by $\matrQ^*\in\setQ(\check{\setBE})$ over Bob's channel, satisfying only the reliability criterion in~\eqref{equ:crit:sec}. Therefore, according to~\cite{965976}, we set
		\begin{align*}
			\Rd^*\defeq\lim_{N\to\infty}\frac{I(\vX^{nN};\vY^{nN})}{nN}=\lim_{N\to\infty}\frac{I(\vBE^N; \vY^{nN})}{nN},
		\end{align*}
		where $\{\BE(t)\}_{t\in\Z}$ is a branch sequence of the E-FSMC generated according to the Markov process specified by $\matrQ^*\in\setQ(\check{\setBE})$. This mutual information rate term can be efficiently calculated using variants of the sum-product/BCJR algorithm proposed in~\cite{1661831} due to the stationarity and ergodicity of both the E-FSMC and the input Markov process. \remarkend
	\end{remark}
	
	
	\begin{example}\label{ex:innerate}
		The optimal branch probabilities of the second-order and third-order E-FSMCs, corresponding to the ISI-WTC in {Example}~\ref{ex:casestudy}, violate the upper bound constraint presented in~\eqref{ex:rule1}, even for $m=1$ (i.e., we have $p_{ij}^{\ast(\ell)}>2^{-m}$ for some $(i,\ell,j)\in\check{\setBE}$). By setting $n=4$, the optimized branch probabilities  of the fourth-order E-FSMC satisfy the upper bound with $m=1$ and achieve $$\lim_{N\to\infty}\frac{I(\vBE^N;\vY^{nN})}{nN}=0.195\quad(\sfrac{\text{bits}}{\text{channel use}}),$$ over Bob's channel. Therefore, the target design rate of the two-stage code and the rate of the inner trellis code are set to $\Rd^*=0.195$ and $\Rin=1/4$, respectively. The optimal values of $\{Q_{ij}^{\ast(\ell)}\}_{(i,\ell,j)\in\check{\setBE}}$ and $\{\mu_i^\ast\defeq\sum_{\ell,j:(i,\ell,j)\in\check{\setBE}}Q_{ij}^{\ast(\ell)}\}_{i\in\setSE}$ are shown in Table~\ref{tbl:trellis1}.
		\exampleend
	\end{example}
	
	
	During the design of the desired trellis code, we optimize the parameters of the superchannel while assuming that the constituent E-FSMC is fixed. Following Definition~\ref{def:superchannel:not}, let $k_i\defeq|\setT_i|$ and $\varv_{ij}^{(\ell)}\defeq|\setT_{ij}^{(\ell)}|$. Accordingly, the total number of states in the superchannel is $|\setS|=\sum_{i\in\setSE}k_i$, and the PMF $\kappa_i\defeq {k_i}/|\setS|$ is defined to be the fraction of states in $\setT_i$. Additionally, the total number of branches emanating from the states in $\setT_i$ is $k_i\cdot 2^m=\sum_{\ell,j:(i,\ell,j)\in\check{\setBE}}\varv_{ij}^{(\ell)}$, and the PMF $\upsilon_{ij}^{(\ell)}\defeq  \varv_{ij}^{(\ell)}/(k_i\cdot 2^m)$ is defined to be the fraction of branches in $\setT_{ij}^{(\ell)}$ over the total number of branches emanating from the states in $\setT_i$. Note that the main idea in designing the desired trellis code within the superchannel is to split every state $i\in\setSE$ (branch $(i,\ell,j)\in\check{\setBE}$) of the E-FSMC into $k_i\geq 1$ states ($\varv_{ij}^{(\ell)}\geq 1$~branches) of the superchannel so that the following two criteria are fulfilled.
	
	\begin{requirement}\label{req:rule12}
		Given an i.u.d.\ input process $\{D(t)\}_{t\in\Z}$, the output process $\{X(t)\}_{t\in\Z}$ of the trellis code  should emulate the output process of the optimized Markov source characterized by $\mathbf{Q}^\ast\in\setQ(\check{\setBE})$. To satisfy this criterion, the probability of ``being in a state from $\setT_i$ and taking a branch from $\setT_{ij}^{(\ell)}$'' is attempted to be matched on $Q_{ij}^{\ast(\ell)}$, for all $(i,\ell,j)\in\check{\setBE}$. Specifically, for $\check{Q}_{ij}^{(\ell)}\defeq\kappa_i\cdot\upsilon_{ij}^{(\ell)}=\varv_{ij}^{(\ell)}/(|\setS|\cdot 2^m)$ and $\check{\mathbf{Q}}\defeq\{\check{Q}_{ij}^{(\ell)}\}_{(i,\ell,j) \in \setBE}$, we attempt to match $\check{\mathbf{Q}}$ on the optimal PMF $\mathbf{Q}^\ast$ by properly choosing the integer values of  $\{k_i\}_{i\in\setSE}$ and $\{\varv_{ij}^{(\ell)}\}_{(i,\ell,j)\in\check{\setBE}}$. Following~\cite{1397934}, we break this criterion into the following optimization problems;
			
			(i) Let $J\in\Z$ denote the maximum number of states allowed for the designed superchannel. Also, let $\bm{\mu}^\ast\defeq \{\mu_i^\ast\}_{i\in\setSE}$ (Table~\ref{tbl:trellis1}) and $\bm{\kappa}\defeq\{\kappa_i\defeq{k_i}/{|\setS|}\}_{i\in\setSE}$. The probability of realizing a state from $\setT_i$ is attempted to be matched on $\mu_i^\ast$ by solving
			\begin{align*}
					\min_{\{k_i\in\Z\}_{i\in\setSE}}\mathrm{D}_\mathrm{KL}(\bm{\kappa}||\bm{\mu}^\ast), \qquad
					\mathrm{s.t.}\quad|\setSE|\leq \sum_{i\in\setSE}k_i \leq J;\alabel{opt:rule1}
			\end{align*}
			
			(ii) Let $\bm{\pi}_i^\ast\defeq\{p_{ij}^{\ast(\ell)}\}_{\ell,j:(i,\ell,j)\in\check{\setBE}}$ (Table~\ref{tbl:trellis1}) and $\bm{\upsilon}_i\defeq\{\upsilon_{ij}^{(\ell)}\defeq\varv_{ij}^{(\ell)}/(k_i\cdot 2^m)\}_{\ell,j:(i,\ell,j)\in\check{\setBE}}$ for all $i\in\setSE$. Given the PMF vector $\bm{\kappa}$ derived from solving \eqref{opt:rule1}, the probability of realizing a branch within $\setT_{ij}^{(\ell)}$ from a state in $\setT_i$ is attempted to be matched on $p_{ij}^{\ast(\ell)}$ by solving
			\begin{align*}
					\min_{\{\varv_{ij}^{(\ell)}\in\Z\}_{(i,\ell,j)\in\check{\setBE}}}\sum_{i\in\setSE}\kappa_i\cdot
					&\mathrm{D}_\mathrm{KL}(\bm{\upsilon}_i||\bm{\pi}_i^\ast), \alabel{opt:rule2}\\ 
					\mathrm{s.t.}\sum_{\ell,j:(i,\ell,j)\in\check{\setBE}}\!\!&\varv_{ij}^{(\ell)}=k_i\cdot 2^m\quad(\text{for all $i\in\setSE$}),\\
					\sum_{\ell,j:(i,\ell,j)\in\check{\setBE}}\!\!&\varv_{ij}^{(\ell)}=\!\!\sum_{s,\ell:(s,\ell,i)\in\check{\setBE}}\!\!\varv_{si}^{(\ell)}\quad(\text{for all $i\in\setSE$}).
			\end{align*}
		A superchannel with the optimal values of $\{k_i\}_{i\in\setSE}$ and $\{\varv_{ij}^{(\ell)}\}_{(i,\ell,j)\in\check{\setBE}}$, vanishing the objective functions of \eqref{opt:rule1} and \eqref{opt:rule2} guarantees Criterion~\ref{req:rule12}.\footnote{The inaccuracy in approximating $\mathbf{Q}^\ast\in\setQ(\check{\setBE})$ with $\check{\mathbf{Q}}$ can be made arbitrarily small by choosing the number of allowed states for the superchannel ($J\in\Z$) sufficiently large.}
	\end{requirement}
	
	
	\begin{example}\label{ex:req1}
		Following Example~\ref{ex:innerate}, the optimal values of $k_i$, $\kappa_i$ for all $i\in\setSE$, and $\varv_{ij}^{(\ell)}$\!,\, $\check{Q}_{ij}^{(\ell)}$ for all $(i,\ell,j)\in\check{\setBE}$, are shown in Table~\ref{tbl:trellis1}, for $J=200$.
		\exampleend
	\end{example}
	
	
	\begin{requirement}\label{req:rule4}
		The i.u.d.\ secure rate of the superchannel,
		\begin{equation*}
			\RSC
			\defeq\lim_{N\rightarrow\infty}
			\frac{1}{nN}\bigg(
			I(\vD^{mN};\vY^{nN})
			-
			I(\vD^{mN};\vZ^{nN})\bigg)^+\!\!,\alabel{equ:RSC}
		\end{equation*}
		should approach the constrained secrecy capacity of the constituent $n$-th order E-FSMC~$\Rcsc^{(n)}$ at the required SNR value. To fulfill this criterion, among all branch connections of the superchannel that meet the optimal values of $\{k_i\}_{i\in\setSE}$ and $\{\varv_{ij}^{(\ell)}\}_{(i,\ell,j)\in\check{\setBE}}$, we choose the one that achieves the maximum i.u.d.\ secure rate $\RSC$ using ordinal optimization~\cite{HO1999169}.
	\end{requirement}
	
	
	\begin{figure*}
		\centering
		\includegraphics[scale=0.86]{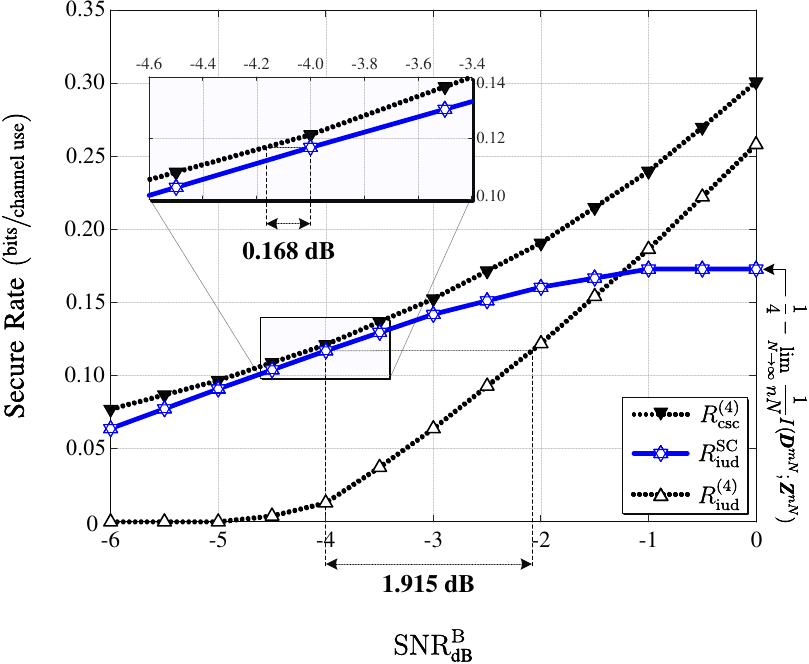}
		\caption{The i.u.d.\ secure rate $\Riud^{(4)}$, the constrained secrecy capacity $\Rcsc^{(4)}$ of the $4$-th order E-FSMC modeling the ISI-WTC in Example~\ref{ex:casestudy}, and the i.u.d.\ secure rate of the superchannel $\RSC$ in Table~\ref{tbl:trellis2} (Example~\ref{ex:supchatr}) for $\SNRdBE=-4.0$. Note that the superchannel is specifically designed for operating at $\SNRdBB=\SNRdBE=-4.0$.}\label{Fig:secrecy_gain}
	\end{figure*}
	
	\begin{example}\label{ex:supchatr}
		Table~\ref{tbl:trellis2} in Appendix~\ref{apx:tb} depicts the branch connections of a superchannel, satisfying the	optimal values of $\{k_i\}_{i\in\setSE}$ and $\{\varv_{ij}^{(\ell)}\}_{(i,\ell,j)\in\check{\setBE}}$ as given in Table~\ref{tbl:trellis1}. These connections were determined using a randomized search with ordinal optimization. For the SNR values of Example~\ref{ex:casestudy}, the i.u.d.\ secure rate of the resulting superchannel is $\RSC=0.117$~$(\sfrac{\text{bits}}{\text{channel use}})$ while the reference Markov source specified by $\matr{Q}^*$ achieves $\Rcsc^{(n)}|_{n=4}=0.121$~$(\sfrac{\text{bits}}{\text{channel use}})$. Fig.~\ref{Fig:secrecy_gain} illustrates the gain of the trellis code at $\SNRdBE=-4.0$ and $-6.0\leq\SNRdBB\leq 0$. Note the superchannel achieves its best performance at the required SNR value, i.e., $\SNRdBB=-4.0$, compared with the constrained secrecy capacity. As shown in Fig.~\ref{Fig:secrecy_gain}, the i.u.d.\ secure rate of the superchannel converges to $$\RSC = 0.25 - \lim_{N\to\infty}\frac{I(\vD^{mN};\vZ^{nN})}{nN},$$ as $\SNRdBB$ exceeds a certain threshold where $$\lim_{N\to\infty}\frac{I(\vD^{mN};\vY^{nN})}{nN}=0.25\quad(\sfrac{\text{bits}}{\text{channel use}}).$$ This occurs because the mutual information rate of the users over the superchannel is limited by the rate of the constituent trellis code ($\Rin=1/4$). To overcome this limitation, one could increase the rate of the trellis code by choosing a larger value of $m\in\Z$ while ensuring that the constraint in~\eqref{ex:rule1} is satisfied by properly choosing $n\in\Z$. However, the trade-off for this improvement is a larger outer code stage, a larger superchannel, and a more complex decoder.
		\exampleend
	\end{example}
	
	
	\begin{remark}
		The choice of bit assignment from the set $\{0,1\}^m$ to $\vd^m$ on each branch $(s(t-1),\vd^m,s(t))\in\set{B}$ of the superchannel significantly influences the block error probability of the outer code. Therefore, the final step in establishing the connection between the superchannel and the outer code stage is to assign optimal values to $\vd^m\in\{0,1\}^m$ such that the block error probability at Bob's decoder is minimized. By adopting a multi-stage decoder as in~\cite{1055718} (see Fig.~\ref{Fig:decoder}), determining the values of $\vd^{l-1}$ ($1 \leq l \leq m$) ensures that the performance of the $l$-th LDPC decoder depends only on the bit values assigned to $d(l)$~\cite{1397934}. Accordingly, starting with $l=1$, we search for the optimal values of $d(l)$ that minimize the error probability of the $l$-th BCJR decoder in Fig.~\ref{Fig:decoder} at the first iteration, assuming the previous messages $(\vm^{(1)},\ldots,\vm^{(l-1)})$ have been successfully decoded. The resulting bit assignments to branches of the superchannel in Example~\ref{ex:supchatr} are listed in Table~\ref{tbl:trellis2} (Appendix~\ref{apx:tb}). \remarkend
	\end{remark}

	\subsection{Outer LDPC Code Stage}
	In this section, we estimate the rate of information leakage  using a numerical upper bound. Then, we optimize the degree distributions of the outer LDPC subcodes to reduce the estimated upper bound. We begin with the following remark, which considers an auxiliary receiver terminal (cf.~\cite{6162586}) whose decoding block error probability is used to derive this bound.
	\begin{remark}\label{rem:fict}
		Consider a fictitious receiver terminal that observes through Eve's channel and has perfect knowledge about the secret message. This receiver is referred to as ``Frank''. Let $\epsE_N$ be the block error probability of Frank's decoder, trying to recover $\Mr$ through observing $\Ms$ and $\vZ^N$. Indeed, Fano's inequality implies $H(\Mr|\Ms,\vZ^N)\leq1+\epsE_N\cdot\kr.$
		\remarkend
	\end{remark}
	
	\begin{lemma}\label{lem:leakage_bd}
		Consider a concatenated code as in Fig.~\ref{Fig:ConcGraph} with the blocklength $nN'$, the dummy message rate $\Rr$, and the block error probability $\epsE_{nN'}$ at Frank's decoder. The rate of information leakage achieved by this code over an ISI-WTC is upper-bounded by
		\begin{equation}\label{equ:leakage_num}
			{\varepsilon_{nN'}}\leq \frac{I(\vD^{mN'};\vZ^{nN'})}{nN'}-(1-\epsE_{nN'})\cdot\Rr+\frac{1}{nN'}.
		\end{equation}
	\end{lemma}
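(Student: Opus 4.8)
The plan is to expand the leakage numerator $nN'\cdot\varepsilon_{nN'}=I(\Ms;\vZ^{nN'})$ by bringing in the dummy message $\Mr$ and then bounding the two resulting pieces separately. First I would write the chain-rule identity
\begin{equation*}
I(\Ms,\Mr;\vZ^{nN'})=I(\Ms;\vZ^{nN'})+I(\Mr;\vZ^{nN'}\mid\Ms),
\end{equation*}
and solve for the leakage term, $I(\Ms;\vZ^{nN'})=I(\Ms,\Mr;\vZ^{nN'})-I(\Mr;\vZ^{nN'}\mid\Ms)$. This isolates exactly what must be controlled: an \emph{upper} bound on the joint term and a \emph{lower} bound on the conditional term.

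For the joint term I would invoke the data-processing inequality along the Markov chain $(\Ms,\Mr)\to\vD^{mN'}\to\vZ^{nN'}$, which holds because the transmitted codeword $\vX^{nN'}$, and hence Eve's observation $\vZ^{nN'}$, is a function of the outer-stage output $\vD^{mN'}$ alone once that output is fixed. This gives $I(\Ms,\Mr;\vZ^{nN'})\le I(\vD^{mN'};\vZ^{nN'})$, which is precisely the first term on the right-hand side of~\eqref{equ:leakage_num}.

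For the conditional term I would expand $I(\Mr;\vZ^{nN'}\mid\Ms)=H(\Mr\mid\Ms)-H(\Mr\mid\Ms,\vZ^{nN'})$. Since $\Mr$ is uniform on $\setM_\mathrm{r}$ and independent of $\Ms$, the first entropy equals $H(\Mr)=\kr=nN'\cdot\Rr$. The second entropy is the equivocation at Frank's decoder, so I would apply Remark~\ref{rem:fict} to obtain $H(\Mr\mid\Ms,\vZ^{nN'})\le 1+\epsE_{nN'}\cdot\kr$. Together these yield $I(\Mr;\vZ^{nN'}\mid\Ms)\ge(1-\epsE_{nN'})\cdot nN'\cdot\Rr-1$. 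Substituting the two bounds back and dividing by $nN'$ then delivers~\eqref{equ:leakage_num} directly.

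The chain rule and the Fano step are routine; the part requiring the most care is justifying the Markov chain $(\Ms,\Mr)\to\vD^{mN'}\to\vZ^{nN'}$, i.e., verifying that the concatenated encoder (systematic LDPC encoding, interleaving, and the puncturing performed \emph{before} the inner trellis stage) makes $\vZ^{nN'}$ conditionally independent of $(\Ms,\Mr)$ given $\vD^{mN'}$. I expect this to be the main obstacle, since the memory of the inner trellis code and the puncturing of the secret subvectors must not leave any residual dependence of $\vZ^{nN'}$ on the messages beyond what flows through $\vD^{mN'}$; the observation in Section~\ref{subsec:Enc} that puncturing is applied before inner encoding is exactly what secures this conditional-independence property.
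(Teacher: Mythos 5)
Your proof is correct and arrives at \eqref{equ:leakage_num} by a genuinely different, and in fact cleaner, route than the paper's. The paper works entirely at the level of entropies: it expands $nN'\cdot\varepsilon_{nN'}=H(\Ms)-H(\Ms|\vZ^{nN'})$ by introducing $\vD^{mN'}$ and $(\Ms,\Mr)$ through chain-rule identities, and closes the argument using four structural facts\textemdash Fano's inequality at Frank's decoder (Remark~\ref{rem:fict}), the Markov relation $H(\Ms|\vD^{mN'}\!,\vZ^{nN'})=H(\Ms|\vD^{mN'})$, determinism of the encoder ($H(\vD^{mN'}|\Ms,\Mr)=0$), and, crucially for its final step, the systematic availability of the dummy message in the punctured codeword, $H(\Mr|\vD^{mN'})=0$. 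Your decomposition $I(\Ms;\vZ^{nN'})=I(\Ms,\Mr;\vZ^{nN'})-I(\Mr;\vZ^{nN'}|\Ms)$, followed by the data-processing inequality on $(\Ms,\Mr)\to\vD^{mN'}\to\vZ^{nN'}$ and Fano on the conditional term, needs only two of these ingredients: the Markov chain and Remark~\ref{rem:fict} (together with uniformity of $\Mr$ and its independence from $\Ms$, which the paper also assumes). In particular, you never invoke $H(\Mr|\vD^{mN'})=0$, so your bound survives even a non-systematic embedding of the dummy message, which the paper's step $(d)$ does not; what the paper's longer derivation buys in exchange is an explicit accounting of where each structural property of the nested code enters. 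One small calibration: the Markov chain you flag as the main obstacle is in fact immediate from the encoding architecture\textemdash $\vX^{nN'}$ is a deterministic function of $\vD^{mN'}$ given the trellis initial state, and $\vZ^{nN'}$ is the channel's response to $\vX^{nN'}$\textemdash and the paper needs the very same chain for its own step $(b)$; your observation that puncturing \emph{before} the inner trellis encoder is what validates this chain is exactly right, but it is a design fact already secured in Section~\ref{subsec:Enc} rather than a gap you would still need to close.
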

	
	\begin{proof} We have
		\begin{align*}
			&nN'\cdot\varepsilon_{nN'} \triangleq H(\Ms)-H(\Ms|\vZ^{nN'})\\
			&=\ks-H(\vD^{mN'}|\vZ^{nN'})-H(\Ms|\vD^{mN'},\vZ^{nN'})\\&\qquad+H(\vD^{mN'}|\Ms,\vZ^{nN'})\\
			&\stackrel{(a)}{\leq} \ks-H(\vD^{mN'})+I(\vD^{mN'}\!;\vZ^{nN'})-H(\Ms|\vD^{mN'}\!\!,\vZ^{nN'})\\&\qquad+\epsE_{nN'}\cdot\kr+1\\
			&\stackrel{(b)}{=} \ks - H(\Ms,\Mr) - H(\vD^{mN'}|\Ms,\Mr)\\&\qquad + H(\Ms,\Mr|\vD^{mN'}) + I(\vD^{mN'}\!\!;\vZ^{nN'})-H(\Ms|\vD^{mN'})\\&\qquad+\epsE_{nN'}\cdot\kr+1\\
			&\stackrel{(c)}{=} H(\Ms|\vD^{mN'}) + H(\Mr|\vD^{mN'},\Ms) + I(\vD^{mN'};\vZ^{nN'})\\&\qquad-H(\Ms|\vD^{mN'})-(1-\epsE_{nN'})\cdot\kr+1\\
			&\stackrel{(d)}{=} I(\vD^{mN'};\vZ^{nN'})-(1-\epsE_{nN'})\cdot\kr+1,
		\end{align*}
		where $(a)$ is deduced from $H(\vD^{mN'}|\Ms,\vZ^{nN'})=H(\Mr|\Ms,\vZ^{nN'})\leq1+\kr\cdot\epsE_{nN'}$ in {Remark}~\ref{rem:fict}, $(b)$ relies on	 $H(\Ms|\vD^{mN'},\vZ^{nN'})=H(\Ms|\vD^{mN'})$, $(c)$ results from $H(\Ms,\Mr)=\ks+\kr$ and $H(\vD^{mN'}|\Ms,\Mr)=0$, and $(d)$ follows from the fact that the dummy message $\Mr$ is systematically available in the punctured codewords $\vD^{mN'}$\!\!, implying $H(\Mr|\vD^{mN'}\!\!\!,\Ms) \leq H(\Mr|\vD^{mN'}) = 0$.
	\end{proof}
	
	
	\begin{figure*}
		\centering
		\includegraphics[scale=0.99]{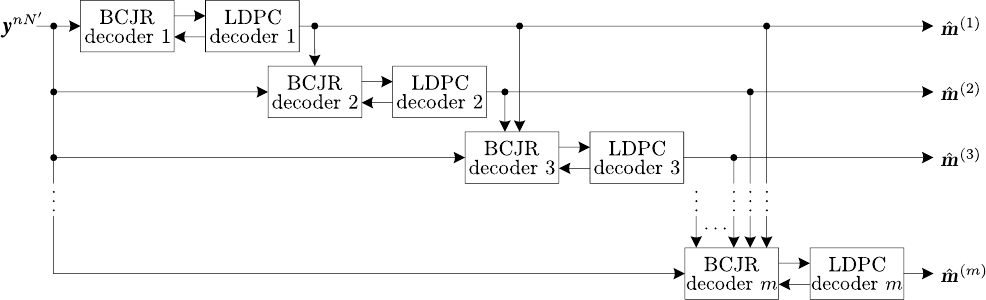}
		\caption{Multi-stage decoder for estimating $\vm^{(l)}$, assuming that the symbols $(\vm^{(1)},\ldots,\vm^{(l-1)})$ are successfully decoded ($1\leq l\leq m$).}\label{Fig:decoder}
	\end{figure*}
	
	The term $I(\vD^{mN'};\vZ^{nN'})/{(nN')}$ in~\eqref{equ:leakage_num} can be efficiently calculated using variants of the sum-product/BCJR algorithm~\cite{1661831} as $N'\!\!\to\!\infty$. Moreover, in the code design procedure, the asymptotic  error probability of Frank's decoder ($\epsE\defeq\lim_{N'\to\infty}\epsE_{nN'}$) is approximated by applying the following modifications to the density evolution algorithm.\footnote{Density evolution is used during the code design procedure to \emph{estimate} Frank's asymptotic error probability, while the actual finite-length secrecy performance is evaluated by \emph{measuring} Frank's bit error rate in simulations.} (i)~Since Frank's and Eve's observations are identical ({Remark}~\ref{rem:fict}), to evolve $f_\mathrm{vf}^{(\hell)}(\xi)$ through the FNs of Frank's decoder, we use Eve's observations ${\vz}^{nN'}$ in the Monte Carlo step instead of Bob's observations ${\vy}^{nN'}$\!. Let $f_\mathrm{fv}'^{(\hell)}(\xi)$ denote the resulting density. (ii)~Since Frank possesses perfect knowledge about the secret message ({Remark}~\ref{rem:fict}), we set the density of messages from FNs to VNs as a probabilistic mixture of the density $f_\mathrm{fv}'^{(\hell)}(\xi)$ and an impulse at $+\infty$ with the weight of $q^{(l)}$.\footnote{Recall that $q^{(l)}$ represents the fraction of VNs carrying the secret message at the $l$-th LDPC subcode. Consequently, the impulse of weight $q^{(l)}$ at $+\infty$ appropriately captures Frank's perfect knowledge about the secret message.} Thus, the density of messages from VNs to CNs of Frank's decoder is calculated by
	\begin{align*}
		f_\mathrm{vc}'^{(\hell+1)}(\xi)
		=\big(
		q^{(l)}\cdot\delta(\xi-\infty)
		+&(1-q^{(l)})\cdot f_\mathrm{fv}'^{(\hell)}(\xi)
		\big)\\
		&\ast
		\sum_\jmath
		\lambda_\jmath^{(l)}\big(f_\mathrm{cv}^{(\hell)}(\xi)\big)
		^{(\ast(\jmath-1))},
	\end{align*}
	and the block error probability at the $\hell$-th iteration of Frank's decoder is approximated by
	\begin{equation*}
		\mathrm{e}^{\mathrm{F}}_{\hell}=\int_{-\infty}^{0}f_\mathrm{vc}'^{(\hell+1)}(\xi)\mathrm{d}\xi.
	\end{equation*}
	The \emph{asymptotic} error probability of Frank's decoder ($\epsE$) is estimated by $\check{\epsilon}^{\mathrm{F}}\defeq\mathrm{e}^{\mathrm{F}}_{\hell_{\max}}$, where $\hell_{\max}\in\Z$ represents the maximum number of allowed iterations or the iteration at which a stopping criterion is met. Accordingly, we estimate the \emph{asymptotic} rate of information leakage by 
	\begin{equation}\label{equ:leakage_lstest}
		\check{\varepsilon}
		\defeq
		\lim_{N'\to\infty}\check{\varepsilon}_{nN'}=\lim_{N'\to\infty}\frac{I(\vD^{mN'};\vZ^{nN'})}{nN'}-\big(1-\check{\epsilon}^{\mathrm{F}}\big)\cdot\Rr.
	\end{equation}
		Since Frank is allowed to use a more powerful decoder, we have $\epsE\leq\check{\epsilon}^{\mathrm{F}}$; hence, due to~\eqref{equ:leakage_num}, the estimation in \eqref{equ:leakage_lstest} presents an upper bound $\lim_{N'\to\infty}{\varepsilon}_{nN'}\leq\check{\varepsilon}$. 
	
	Let the fraction of all \emph{punctured} VNs in the outer code stage be denoted by $$q\defeq\frac{\sum_{l=1}^{m}N^{(l)}-mN'}{\sum_{l=1}^{m}N^{(l)}}.$$ According to the employed puncturing policy (Section~\ref{subsec:Enc}), the secure rate of the concatenated code is determined by
	\begin{align*}
		\Rs&=\frac{\ks}{nN'}=\frac{\sum_{l=1}^{m}N^{(l)}-mN'}{nN'}
		=\frac{\sum_{l=1}^{m}N^{(l)}-mN'}{mN'}\cdot\frac{m}{n}\\
		&=\frac{q}{1-q}\cdot\Rin.\alabel{equ:fxsecrate}
	\end{align*}
	Throughout the design procedure, we fix the secure rate by applying a fixed value for $q$.
	
	We attempt to reduce the rate of information leakage by minimizing the estimated upper bound in~\eqref{equ:leakage_lstest}. Using $\Rd=\Rs+\Rr$, for a sufficiently small value of $\check{\epsilon}^{\mathrm{F}}>0$, we have
	\begin{alignat*}{2}
		\check{\varepsilon}
		&=\lim_{N\to\infty}\frac{I(\vD^{mN};\vZ^{nN})}{nN}-\Rr\\
		&=\lim_{N\to\infty}\frac{I(\vD^{mN};\vZ^{nN})}{nN}+\Rs-\Rd.
		\alabel{equ:desR_opt}
	\end{alignat*}
	Accordingly, for a fixed value of the secure rate and a fixed superchannel, minimizing the rate of information leakage is equivalent to the well-known problem of maximizing the design rate of the \emph{punctured} LDPC codes. In the following, we determine the target design rate of the outer code stage~and~allocate the resulting target rate among the constituent subcodes.

	
	By choosing the number of allowed states for the superchannel sufficiently large (Section~\ref{sec:innertrellis}), the information rate term $$\lim_{N\to\infty}\frac{I(\vD^{mN};\vY^{nN})}{nN},$$ over the superchannel (with $\{D(t)\}_{t\in\Z}$~being an i.u.d.\ sequence) approaches $$\lim_{N\to\infty}\frac{I(\vBE^N;\vY^{nN})}{nN}\bigg|_{\matr{Q}=\matr{Q}^*},$$ over the ISI-WTC.
	Therefore, following Remark~\ref{rem:optdesignrt}, we set
	\begin{equation}\label{equ:ultdesrate}
		\Rd^*
		=\lim_{N\rightarrow\infty}
		\frac{I(\vD^{mN};\vY^{nN})}{nN}%
		.
	\end{equation}
	Following Remark~\ref{rem:ultbnd}, by setting $\Rs=\Rs^\ast$ (where $\Rs^\ast=\RSC$ for an LDPC code operating over the superchannel) and using the expression for $\RSC$ in~\eqref{equ:RSC}, the asymptotic upper bound on the rate of information leakage in~\eqref{equ:desR_opt} becomes
		\begin{alignat*}{2}
			\check{\varepsilon}
			&=\lim_{N\to\infty}\frac{I(\vD^{mN};\vZ^{nN})}{nN}+\RSC-\Rd\\
			&=\lim_{N\to\infty}\frac{I(\vD^{mN};\vY^{nN})}{nN}-\Rd.
		\end{alignat*}
		Hence, achieving the design rate $\Rd=\Rd^{\ast}$ as presented in \eqref{equ:ultdesrate} is a sufficient (though not necessary) condition to guarantee the weak secrecy criterion for all secure rates $\Rs\leq\RSC$. 
	
	Due to \eqref{equ:ultdesrate}, the target design rate of the \emph{punctured} outer code stage is set to
	\begin{align*}
		\Rout'^*
		\defeq\frac{\Rd^*}{\Rin}
		=\lim_{N\rightarrow\infty}
		\frac{I(\vD^{mN};\vY^{nN})}{mN}.
	\end{align*}
	Also, according to $\Rout'^*\defeq\frac{1}{m}\sum_{l=1}^{m}\Rout'^{*(l)}$ and
	\begin{align*}
		I\Big(&\big(\vD^m(t)\big)_{t=1}^N;\vY^{nN}\Big)\\
		&=\sum_{l=1}^m
		I\Big(
		\big(D_l(t)\big)_{t=1}^N;\vY^{nN}
		\Big|\big(D_{1}(t)\cdots D_{l-1}(t)\big)_{t=1}^N
		\Big),
	\end{align*}
	we set the target design rate of the $l$-th \emph{punctured} subcode to be the achievable rate of Bob's decoder, trying to reliably estimate $\vm^{(l)}$ from observing $\vy^{nN}$ and $\big(d_{1}(t)\cdots d_{l-1}(t)\big)_{t=1}^N$
	\begin{equation*}
		\Rout'^{*(l)}
		\!\defeq\!\!
		\lim_{N\to\infty}\!
		\frac{1}{N}
		I\Big(\!
		\big(D_l(t)\big)_{t=1}^N;\vY^{nN}
		\Big|\big(D_{1}(t)\cdots D_{l-1}(t)\big)_{t=1}^N
		\Big),
	\end{equation*}
	where $\{D_l(t)\}_{t\in\Z}$ is an i.u.d.\ sequence for all $1\leq l\leq m$. Using the results from~\cite{965976} and the assumptions made in the multi-stage decoder shown in Fig.~\ref{Fig:decoder}, it is straightforward to verify that $\Rout'^{*(l)}$ is achievable for the $l$-th subcode under the reliability criterion specified in~\eqref{equ:crit:sec}, for all $1\leq l\leq m$. Consequently, we maximize the rate of the $l$-th \emph{punctured} subcode $\Rout'^{(l)}$ by optimizing the degree distributions $(\bm{\lambda}^{(l)},\bm{\rho}^{(l)})$ to approach $\Rout'^{\ast(l)}$\textemdash assuming that the codewords $\big(d_{1}(t)\cdots d_{l-1}(t)\big)_{t=1}^{N'}$ are successfully decoded (see Fig.~\ref{Fig:decoder}).
	
	Note that choosing the target rate for the $l$-th \emph{unpunctured} LDPC subcode ($\Rout^{*(l)} \defeq (1 - q^{(l)}) \cdot \Rout'^{*(l)}$) also requires determining the value of $q^{(l)}$, for $1 \leq l \leq m$. Given a fixed $\Rs$ (and thus a fixed $q$, see \eqref{equ:fxsecrate}), subject to the overall constraint $$\frac{1}{m}\sum_{l=1}^m \frac{1}{1-q^{(l)}} = \frac{1}{1-q},$$ the code designer is free to choose the individual $q^{(l)}$ values to balance the reliability performance across LDPC subcodes, e.g., by allocating smaller values~of~$q^{(l)}$~to subcodes with weaker reliability performance.
	
	
	\begin{table}[t]
		\captionsetup{justification=centering}
		\captionof{table}{ 
			\textsc{Optimized VN Side Degree Distributions of the Outer LDPC Code Stage.}}
		\label{tbl:opt:dd}
		\centering
		\begin{adjustbox}{angle=-270}
			{\scalebox{1}{
					\begin{tabular}{||r|c|r||}
						\hline\hline &&\\[-2.25ex]
						$\jmath$ & $\lambda_\jmath^{(1)}$ & $\rho_\jmath^{(1)}$\\&&\\[-2.25ex]
						\hline\hline&&\\[-2.5ex]
						$2$ & $0.18212$ & $0$\\&&\\[-3.0ex]
						$3$ & $0.00176$ & $0.266$\\&&\\[-3.0ex]
						$4$ & $0.34291$ & $0$\\&&\\[-3.0ex]						
						$5$ & $0.47049$ & $0$\\&&\\[-3.0ex]
						$6$ & $0.00073$ & $0$\\&&\\[-3.0ex]
						$7$ & $0.00027$ & $0$\\&&\\[-3.0ex]
						$8$ & $0.00015$ & $0$\\&&\\[-3.0ex]
						$9$ & $0.00011$ & $0$\\&&\\[-3.0ex]
						$10$ & $0.00008$ & $0$\\&&\\[-3.0ex]
						$11$ & $0.00007$ & $0.166$\\&&\\[-3.0ex]
						$12$ & $0.00006$ & $0$\\&&\\[-3.0ex]
						$13$ & $0.00005$ & $0.102$\\&&\\[-3.0ex]
						$14-19$ & $0.00004$ & $0.102$\\&&\\[-3.0ex]
						$20$ & $0.00003$ & $0.133$\\&&\\[-3.0ex]
						$21-32$ & $0.00003$ & $0$\\&&\\[-3.0ex]
						$33$ & $0.00003$ & $0.333$\\&&\\[-3.0ex]
						$34-44$ & $0.00003$ & $0$\\&&\\[-3.0ex]
						$45-55$ & $0.00002$ & $0$\\
						\multicolumn{3}{||c||}{}\\[-3ex]
						\hline\hline\multicolumn{3}{||c||}{}\\[-2.2ex]
						\multicolumn{3}{||c||}{$\Rout^{(1)}=0.5277$}\\
						\hline\hline
			\end{tabular}}}\vspace*{15pt}
		\end{adjustbox}
	\end{table}
	
	\begin{figure}
		\centering
		\includegraphics[scale=0.68]{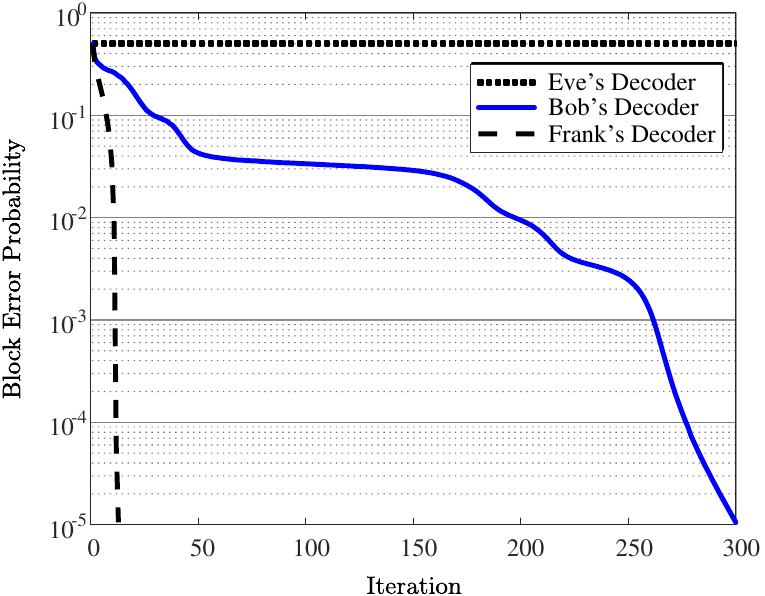}
		\caption{The average asymptotic performance of Bob's, Eve's, and Frank's decoders, as estimated using density evolution over $100$ randomly generated codewords of length $10^5$.\\ \mbox{}}
		\label{fig:DE_DECODER}
	\end{figure}

	\begin{example}
		Following Example~\ref{ex:innerate}, since the inner trellis code has $\Rin=1/4$ (i.e., $m=1$), a single LDPC code is employed as the outer stage. By using the superchannel in Table~\ref{tbl:trellis2} along with the optimized degree distributions of the outer LDPC code depicted in Table~\ref{tbl:opt:dd}, the weak secrecy criterion in~\eqref{equ:crit:sec} is satisfied at a secure rate of $\Rs=0.1100$, which is $0.42$~dB below the constrained secrecy capacity $\Rcsc^{(4)}=0.121$~$(\sfrac{\text{bits}}{\text{channel use}})$. According to~\eqref{equ:fxsecrate}, we fix the secure rate at $\Rs=0.1100$ by setting the fraction of the punctured VNs to $q=0.3056$. The optimized degree distributions in Table~\ref{tbl:opt:dd} have an \emph{unpunctured} rate $\Rout^{(1)}=0.5277$ and a \emph{punctured} rate $\Rout^{\prime}=\Rout^{\prime(1)}\defeq\Rout^{(1)}/(1-q)=0.7598$. Due to the upper bound in \eqref{equ:desR_opt}, $\Rd\defeq\Rin\cdot\Rout^{\prime}=0.1900$, and $$\lim_{N\rightarrow\infty}\frac{I(\vD^{mN};\vZ^{nN})}{nN}=0.080\quad (\sfrac{\text{bits}}{\text{channel use}}),$$ the rate of information leakage is upper bounded by $\check{\varepsilon}<10^{-5}$. We estimate the asymptotic error probabilities of sum-product decoders at Bob's, Eve's, and Frank's receivers by running the modified density evolution algorithm for $100$ randomly generated codewords of length $10^5$.\footnote{Since for channels with memory the concentration statement for the decoding error probability depends on the transmitted input sequence~\cite{1207365}, and the modified density evolution procedure involves an inherent Monte Carlo component, we estimate the asymptotic decoder performance at each iteration by averaging the asymptotic error probabilities over multiple input sequences.} The average error probabilities at each iteration are depicted in Fig.~\ref{fig:DE_DECODER}. As illustrated in the figure, Frank's block error probability vanishes, thereby validating the condition required by \eqref{equ:leakage_lstest} for the upper bound in \eqref{equ:desR_opt} to hold. Also, one should note that a high block error probability at Eve's decoder is not required by the secrecy criterion; we include Eve's curve merely to provide additional assurance that the proposed code remains secure under the decoding strategy matched to the considered system setup.
		\exampleend
	\end{example}
	
	\begin{figure*}
		\centering
		\includegraphics[scale=0.68]{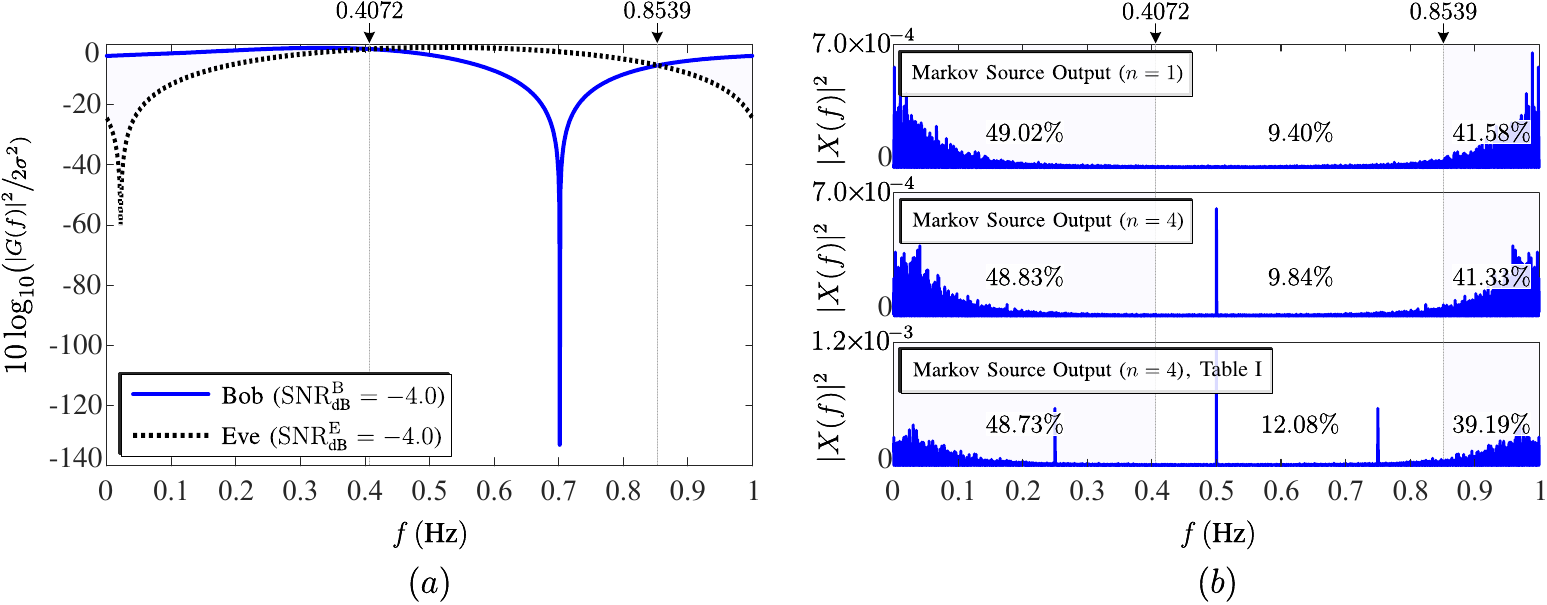}
		\caption{$(a)$: Gain-to-noise power spectrum ratios of Bob's and Eve's point-to-point channels in dB/Hz. $(b)$:~Power spectra of sequences produced by Markov sources optimized for the $1$-st order and $4$-th order E-FSMCs representing the considered ISI-WTC, and the power spectrum of a sequence generated by the Markov source specified in Table~\ref{tbl:trellis1}, satisfying the upper bound constraint in~\eqref{ex:rule1}.}
		\label{Fig.GNR}
	\end{figure*}
	
	\section{Finite Blocklength Regime}\label{sec::SIMFL}
	
	We first present remarks regarding the finite blocklength code construction procedure. Then, we analyze the spectral properties of the codewords generated by the designed trellis code. Finally, we evaluate the reliability and secrecy performance of the two-stage code in the finite blocklength regime.
	
	It should be noted that careful consideration is required when specifying the blocklengths $N^{(l)}$ ($1 \leq l \leq m$) to ensure that the output blocks of all \emph{punctured} LDPC subcodes have the same length $N'$. The following remark clarifies this point.
	\begin{remark}\label{rem:fixedN}
		For fixed values of $\Rs$, $\Rin$, and $\Rout'^{(l)}$ ($1\leq l\leq m$), we begin by choosing the value of $N'$, determining the length of the codewords transmitted over the channel ($nN'$). Next, we calculate $\ks\defeq nN'\cdot\Rs$ and $\kr\defeq nN'\cdot(\Rd-\Rs)$, where $\Rd\defeq\Rin\cdot\Rout^{\prime}$ and $\Rout'\defeq\frac{1}{m}\sum_{l=1}^{m}\Rout'^{(l)}$. As described in Section~\ref{subsec:Enc}, the length of the secret message block $\vms^{(l)}$ in the $l$-th subcode ($\defeq N^{(l)} - N'$) is calculated as $$\ks\cdot \frac{N^{(l)}\cdot\Rout^{(l)}}{\sum_{l=1}^{m}N^{(l)}\cdot\Rout^{(l)}}.$$ Using $N^{(l)}\cdot\Rout^{(l)} = N'\cdot\Rout^{\prime(l)}$ and $\sum_{l=1}^{m}N^{(l)}\cdot\Rout^{(l)} = \ks+\kr$, the blocklength of the $l$-th \emph{unpunctured} subcode is given by
		$$
		N^{(l)}
			= N' + \frac{\ks\cdot N'\cdot\Rout^{\prime(l)}}{\ks+\kr},
		$$ for all $1\leq l\leq m$.
		\remarkend
	\end{remark}
	
	The multi-stage decoder shown in Fig.~\ref{Fig:decoder} facilitates the code \emph{design} process. In practical implementations, however, the decoder consists of only two stages: a windowed BCJR decoder for the inner-stage trellis code and a sum-product decoder for the overall outer-stage LDPC code that encompasses all subcodes. In the following, we construct the outer code stage (comprising $m$ LDPC subcodes) using a single LDPC code that includes all $m$ subcodes, as detailed in \cite{1397934}.
	
	For $1\leq l\leq m$, let $\matr{H}^{(l)}=\big(\vect{h}_1^{(l)}\> \vect{h}_2^{(l)}\> \cdots\> \vect{h}_{N^{\!(l)}}^{(l)}\big)$ denote the parity-check matrix of the $l$-th LDPC subcode, where $\vect{h}_t^{(l)}$ represents the $t$-th column of $\matr{H}^{(l)}$. Let $N''\defeq\max_l N^{(l)}$. For every subcode $1\leq l\leq m$ such that $N^{(l)}<N''$, we set $\vect{h}_t^{(l)}\defeq\vect{0}$ for column indices $N^{(l)}< t\leq N''$; we then define
	\begin{equation*}
		\matrH_t \defeq
		\begin{pmatrix}
			\vect{h}_t^{(1)} & \vect{0} & \cdots & \vect{0}\\
			\vect{0} & \vect{h}_t^{(2)} & \cdots & \vect{0}\\
			\vdots	 & \vdots 			& \ddots & \vdots  \\
			\vect{0} & \vect{0} & \cdots & \vect{h}_t^{(m)}\\
		\end{pmatrix},\quad (\text{for $1\leq t\leq N''$}).
	\end{equation*}
	The parity-check matrix of the overall LDPC code is obtained by interleaving the columns of the subcode parity-check matrices as $(\matrH_1\> \matrH_2\> \cdots\> \matrH_{N''})$, and removing any all-zero columns. See also Fig.~\ref{Fig:ConcGraph} and refer to~\cite[Sec.~V-D]{1397934} for further details.
	
	\begin{remark}\label{rem:comp}
		In our simulations, instead of performing joint iterative decoding between the inner and outer code stages at every iteration, we apply BCJR decoding at the inner stage once every $5$--$10$ iterations of message passing within the outer stage. This scheduling improves the convergence characteristics of the turbo decoder for short blocklengths. It also provides substantial computational savings for large blocklengths, since one iteration of BCJR decoding for a trellis code with rate $m/n$ has a complexity of $\mathcal{O}(N'\cdot|\setS|\cdot 2^m)$. \remarkend
	\end{remark}
	
	\begin{figure}
		\centering
		\includegraphics[scale=0.7]{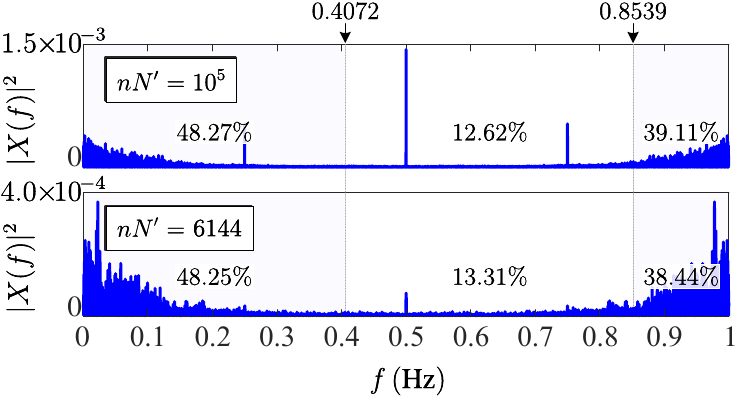}
		\caption{Power spectra of sample codewords generated by the inner trellis code specified in Table~\ref{tbl:trellis2} $(n=4)$ using LDPC codewords of lengths $N'\in\{25000, 1536\}$ as inputs.\\ \mbox{}}
		\label{Fig.GNR_code}
	\end{figure}
	
	\subsection{Spectral Properties}
	
	In the considered setup, Bob's channel has a lower point-to-point unconstrained capacity compared to Eve's channel, as detailed in Appendix~\ref{apx:simsen}. However, when examining the gain-to-noise power spectrum ratio $|G(f)|^2/(2\sigma^2)$ (see~\eqref{equ:Gf}) for both channels, we observe favorable frequencies for Bob where his channel has a higher gain-to-noise ratio than Eve's channel (Fig.~\ref{Fig.GNR}($a$)). By optimizing the input Markov sources, as demonstrated in~\cite{10068266}, these spectral discrepancies can be exploited to enable positive secure rates, as done in Example~\ref{ex:casestudy} (see Fig.~\ref{Fig:secrecy_gain}). Fig.~\ref{Fig.GNR}($b$) presents the normalized power spectra of sequences of length $10^5$ generated by Markov sources optimized for the first-order and fourth-order E-FSMCs representing the considered ISI-WTC. These Markov sources effectively allocate over $90\%$ of the available power to frequency ranges where Bob's gain-to-noise ratio surpasses Eve's. Notice that the Markov sources shown in the upper- and~middle-right subplots violate the upper bound constraint in~\eqref{ex:rule1}. In contrast, the lower-right subplot depicts the normalized power spectrum of a sequence generated by the optimized Markov source in~Table~\ref{tbl:trellis1}, having $n=4$ and satisfying~\eqref{ex:rule1}.
	
	As shown in the middle and lower subplots of Fig.~\ref{Fig.GNR}($b$), unintended impulses appear in the spectra of sequences generated by optimized Markov sources for $n=4$ (and $n>1$ in general). In an E-FSMC of order $n$, the branch process $\{\bE(t)\}_{t\in\Z}$ and, consequently, the block-output process $\{\vx^n(t)\}_{t\in\Z}$ are controlled by the input Markov source. However, the Markov source does not influence the internal structure within each block,~$\big(x(n(t-1)+1),\ldots,x(nt)\big)$. We conjecture that these observed impulses originate from this internal structure, which fails to conform to the desired spectral characteristics of the optimized Markov source. Note that the two additional impulses observed in the lower subplot of Fig.~\ref{Fig.GNR}($b$) result from a higher likelihood of the patterns $(0,0,1,0)$, $(0,1,0,0)$, $(0,1,1,0)$, $(1,0,0,1)$, $(1,0,1,1)$, and $(1,1,0,1)$ in the output. Indeed, the increased occurrence of these patterns arises from adhering to the upper bound constraint in~\eqref{ex:rule1}, which is violated by the Markov source used in the middle subplot.
	
	Fig.~\ref{Fig.GNR_code} illustrates the normalized power spectra of sample codewords generated by the inner trellis code specified in Table~\ref{tbl:trellis2}, using LDPC codewords of lengths $25000$ and $1536$ as inputs.\footnote{Choosing $N' = 1536$, resulting in a blocklength of $nN' = 6144$, is motivated by the LTE standard in 3GPP~TS~36.212~\cite{3GPP_TS_36_212}, where this blocklength is specified for turbo-coded transmissions. Moreover, since $6144$ is below the maximum LDPC blocklength of $8448$ defined in the 5G New Radio (NR) specifications in 3GPP~TS~38.212~\cite{3GPP_TS_38_212}, it remains a practically relevant size within the processing regimes of both standards.} As illustrated, even for short blocklengths, the codewords produced by the designed trellis code effectively replicate the spectral characteristics of sequences generated by the optimized Markov source, concentrating approximately $88\%$ of the available power in frequency ranges that favor Bob's channel over Eve's in terms of the gain-to-noise ratio.

	\subsection{Performance}
	
	We first evaluate the performance of the proposed code using BPSK-modulated sequences of blocklength $nN' = 10^5$. At a secure rate of $\Rs=0.0401$ ($q=0.1382$), the outer LDPC code is constructed with a punctured blocklength of $N' = 25000$ using the optimized degree distributions from Table~\ref{tbl:dd:finite} ($\Rout=0.4143$, $\Rout'=0.4808$). Combined with a trellis code of rate $\Rin=1/4$ (Table~\ref{tbl:trellis2}), this setup results in an overall design rate of $\Rd=0.1202$. Regarding the secrecy efficiency, the rate of information leakage is upper bounded by ${\varepsilon}_{nN'}<10^{-5}$, according to \eqref{equ:leakage_num} in Lemma~\ref{lem:leakage_bd}, $\epsE_{nN'}<10^{-5}$, $\Rr=0.0801$, and $$\lim_{N\rightarrow\infty}\frac{I(\vD^{mN};\vZ^{nN})}{nN}=0.080\quad(\sfrac{\text{bits}}{\text{channel use}}).$$ Regarding the reliability efficiency, the designed code achieves a bit-error rate~(BER) $\epsB_{nN'}\!<\!10^{-5}$ for $\SNRdBB=-4$ at Bob's decoder. In addition, for a shorter blocklength of $nN'=6144$, the proposed code demonstrates almost the same reliability and secrecy performance at a lower secure rate of $\Rs=0.0309$ ($q=0.1100$), using the optimized degree distributions of Table~\ref{tbl:dd:finite} ($\Rout=0.3955$, $\Rout'=0.4444$) and the same trellis code of rate $\Rin=1/4$ in Table~\ref{tbl:trellis2}.
	
	\begin{figure*}
		\centering
		\includegraphics[scale=0.76]{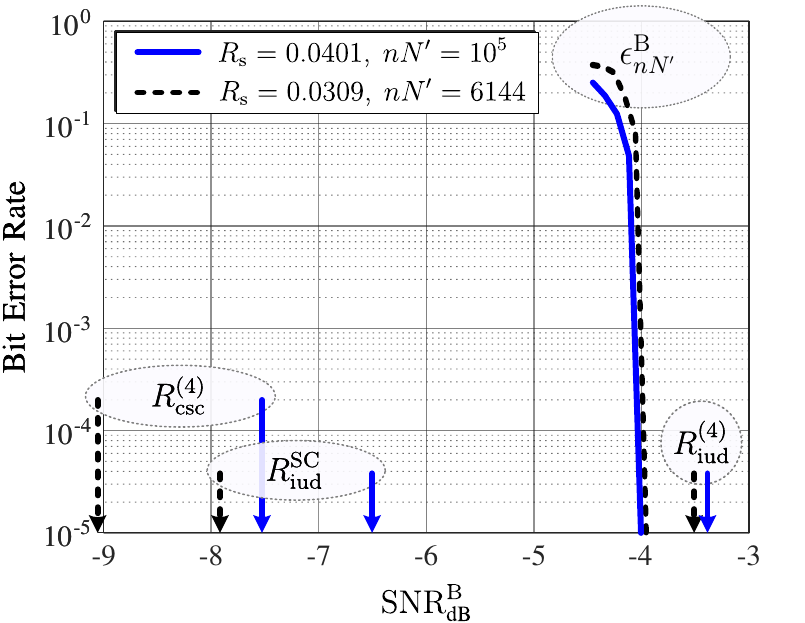}
		\caption{The bit error rate of the constructed code $\epsB_{nN'}$, the i.u.d.\ secure rate $\Riud^{(4)}$, the constrained secrecy capacity $\Rcsc^{(4)}$ of the $4$-th order E-FSMC modeling the ISI-WTC in Example~{\ref{ex:casestudy}}, and the i.u.d.\ secure rate of the superchannel $\RSC$ in Table~\ref{tbl:trellis2} at $\SNRdBE = -4.0$.\vspace{10pt}}
		\label{fig:BER}
	\end{figure*}
	
	The BER simulation curve is shown in Fig.~\ref{fig:BER}. For comparison, this figure also includes the i.u.d.\ secure rate $\Riud^{(4)}$, the constrained secrecy capacity $\Rcsc^{(4)}$ of the fourth-order E-FSMC modeling the ISI-WTC in Example~\ref{ex:casestudy}, and the i.u.d.\ secure rate of the superchannel $\RSC$ in Table~\ref{tbl:trellis2}, all evaluated~at~$\SNRdBE = -4.0$. As illustrated, the designed concatenated code achieves a reliability threshold that is $2.5$~dB away from $\RSC$, $3.6$~dB away from $\Rcsc^{(4)}$, and surpasses $\Riud^{(4)}$ by $0.6$~dB for a blocklength of $nN' = 10^5$. In addition, the code simulation results show that for $nN' = 6144$, a BER of $10^{-5}$ is achieved at an SNR that surpasses $\Riud^{(4)}$ by $0.4$~dB.
	(All codes and simulation files are publicly available online~\cite{ISIWTC_CD}.)
	
	The gap to the constrained secrecy capacity, even for large blocklengths, is primarily due to the puncturing strategy used to enforce secrecy. Near-optimal reliability performance under puncturing requires carefully optimized puncturing patterns, as in~\cite{Ha2004RCP}. In contrast, our design directly punctures the LDPC codeword bits carrying the secret message (see Section~\ref{subsec:Enc}). This design choice incurs a reliability loss to enforce security.
	
	\section{Conclusion}\label{sec::CONC}
	
	In this paper, we have first demonstrated the inherent limitation of single-stage LDPC codes in exceeding the i.u.d.\ secure rates over ISI-WTCs. To address this limitation, we have then incorporated an inner-stage trellis code and defined the concept of a superchannel as a joint model for trellis codes at the input of the ISI-WTCs. By appropriately designing the inner-stage trellis code, the i.u.d.\ sequence produced by the outer-stage LDPC code is transformed into a Markov process that approaches the constrained secrecy capacity. Consequently, the i.u.d.\ secure rate of the resulting superchannel asymptotically approaches the constrained secrecy capacity of the underlying ISI-WTC. This makes the constrained secrecy capacity achievable by the proposed two-stage code. Finally, numerical results show that, even in finite blocklength regimes, the proposed concatenated code can exceed the i.u.d.\ secure rates and leverage spectral differences between Bob's and Eve's channels to achieve positive secure rates despite Eve's channel having a higher point-to-point capacity than Bob's.
	
	\begin{table}[t]
	\captionsetup{justification=centering}
	\captionof{table}{ 
		\textsc{Optimized VN Side Degree Distributions of the Outer LDPC Code Stage for $\Rs=0.0401$ and $\Rs=0.0309$.}}
	\label{tbl:dd:finite}
	\centering
	\begin{adjustbox}{angle=-270}
		{\scalebox{.9}{
				\begin{tabular}{||r|c|c|r||}
					\hline\hline &&\\[-2.25ex]
					& $\Rout=$ & $\Rout=$ &\\
					& $0.4143$ & $0.3955$ &\\
					$\jmath$ & $\lambda_\jmath^{(1)}$ & $\lambda_\jmath^{(1)}$ & $\rho_\jmath^{(1)}$\\&&\\[-2.25ex]
					\hline\hline&&\\[-2.5ex]
					$2$ & $0.483$ & $0.461$ & $0$\\&&\\[-3.0ex]
					$3$ & $0.155$ & $0.153$ & $0.406$\\&&\\[-3.0ex]
					$4$ & $0.012$ & $0.012$ & $0$\\&&\\[-3.0ex]						
					$5$ & $0.010$ & $0.010$ & $0.070$\\&&\\[-3.0ex]
					$6$ & $0.009$ & $0.009$ & $0$\\&&\\[-3.0ex]
					$7$ & $0.009$ & $0.010$ & $0.033$\\&&\\[-3.0ex]
					$8$ & $0.011$ & $0.012$ & $0$\\&&\\[-3.0ex]
					$9$ & $0.014$ & $0.017$ & $0$\\&&\\[-3.0ex]
					$10$ & $0.015$ & $0.018$ & $0$\\&&\\[-3.0ex]
					$11$ & $0.012$ & $0.014$ & $0.122$\\&&\\[-3.0ex]
					$12$ & $0.009$ & $0.011$ & $0$\\&&\\[-3.0ex]
					$13$ & $0.008$ & $0.009$ & $0.103$\\&&\\[-3.0ex]
					$14$ & $0.007$ & $0.008$ & $0$\\&&\\[-3.0ex]
					$15-19$ & $0.007$ & $0.007$ & $0$\\&&\\[-3.0ex]
					$20$ & $0.006$ & $0.006$ & $0.133$\\&&\\[-3.0ex]
					$21-23$ & $0.006$ & $0.006$ & $0$\\&&\\[-3.0ex]
					$24$ & $0.006$ & $0.006$ & $0.133$\\&&\\[-3.0ex]
					$25-52$ & $0.006$ & $0.006$ & $0$\\&&\\[-3.0ex]
					$53-55$ & $0.007$ & $0.007$ & $0$\\
					\hline\hline	
		\end{tabular}}}
	\end{adjustbox}\\[6pt] \mbox{}
\end{table}
	
	\section{Acknowledgment}
	The authors are grateful to Prof. Pascal O. Vontobel for his invaluable comments on the earlier versions of this manuscript.
	
	
	\appendices
	
	\section{Simulation Scenario}\label{apx:simsen}
	
	In phased array antennas, when the steering beam angle deviates from the reference beam angle, different antenna elements experience different propagation-path lengths. This produces element-dependent time delays in the signals at both the transmitter and the receiver (see Fig.~\ref{Fig:phased_array}). When these signals are combined using a Wilkinson combiner, the resulting temporal misalignment causes signal overlap and~interference.
	
	Let $\lamc \in \R$ denote the wavelength of the carrier signal. We consider a phased array, as described in~\cite{9324767}, consisting of a $16\times16$ antenna array with elements spaced $d_{\mathrm{el}} \approx 0.5\lamc$ apart. Our model is based on an elevation-plane scan angle of $-\pi/3 \leq \theta_0 \leq \pi/3$, with $n_{\mathrm{el}} = 16$ channel branches to simulate the varying delays across the antenna elements in the elevation dimension (Fig.~\ref{Fig:phased_array}). For $0 \le r \le n_{\mathrm{el}} - 1$, let $\tau_r \in \R$ denote the time delay associated with the $r$-th antenna element in the elevation plane. We then have
	\begin{equation}\label{equ:PAISIbd}
		\tau_r = \frac{r\cdot d_{\mathrm{el}}\cdot\sin(\theta_0)}{c}, \qquad 0 \le r \le n_{\mathrm{el}} - 1,
	\end{equation}
	where $d_{\mathrm{el}}$ ($=5.3\,\text{mm}$~\cite{9324767}) is the spacing between antenna elements, and $c$ is the speed of light.
	
	\begin{figure}
		\centering
		\includegraphics[scale=1.3]{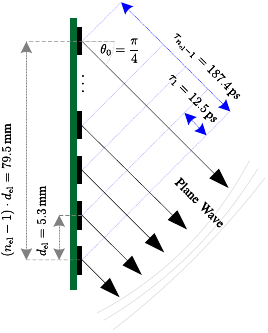}
		\caption{Time delays between signals transmitted by the antenna elements of a $16\times 16$ phased array in the elevation plane.\\ \mbox{}}\label{Fig:phased_array}
	\end{figure}

	\subsection{Estimating the Downlink Channel}
	
	Let ${X}(\tc), {Y}(\tc), {N}(\tc)\in\C$ (with $\tc\in\R$) denote continuous-time random signals\footnote{The variable $\tc \in \R$ represents continuous time, distinguishing it from the discrete-time variable $t \in \Z$ used throughout the paper. With a slight abuse of notation, we consider a signal to be continuous-time when its argument is~continuous~$(\in\R)$ and discrete-time when its argument is discrete~$(\in\Z)$.} corresponding to the channel input, the channel output, and additive noise, respectively. When local oscillators at the transmitter and receiver terminals are synchronized, the received signal is expressed as
	$$
	Y(\tc) \defeq \sum_{r=0}^{n_{\mathrm{el}}-1} |g_{\mathrm{el}}| e^{-i \frac{2\pi c }{\lamc} \tau_r} \cdot X(\tc-\tau_r) + N(\tc),
	$$
	where $i$ represents the imaginary unit and $|g_{\mathrm{el}}|$ denotes the element gain of the phased array. Fig.~\ref{fig:power_delay} illustrates a typical power-delay profile of a multipath channel, resulting from the considered phased array in Fig.~\ref{Fig:phased_array}, assuming a normalized element gain $|g_{\mathrm{el}}|=1$, a carrier frequency of $28\,\text{GHz}$ ($\lamc=10.71\,\text{mm}$), and a steering beam angle of $\theta_0={\pi}/{4}$, leading to $\tau_r = r\cdot 12.5~\text{ps}$ and $\tau_{n_{\mathrm{el}}-1} = 187.4~\text{ps}$~according~to~\eqref{equ:PAISIbd}.
	
	The continuous-time signal $Y(\tc)$ is sampled at the receiver with symbol interval $T\in\R$ and represented by $Y(t)\defeq Y(t\cdot T)$ for $t\in\Z$. In phased arrays, the interference among delayed signals is mitigated using linear multi-tap equalizers~\cite{9324767}. Nevertheless, this approach is effective as long as ${T^{-1}\cdot\tau_{n_\mathrm{el}-1}}<\eta$, where $0<\eta<1$ is the equalizer's operating threshold, which is inversely proportional to the noise level.\footnote{The scaled version of the term $T^{-1}\cdot\tau_{n_\mathrm{el}-1}$ is referred to as the ISI percentage, denoted by $\text{ISI}_\%\defeq 100\cdot T^{-1}\cdot\tau_{n_\mathrm{el}-1}$ in the literature.} Exceeding the ultimate bound $T^{-1}\cdot\tau_{n_\mathrm{el}-1}<1$ results in noise amplification and residual ISI in multi-tap equalizers~\cite[Ch.~9]{proakis2008digital}. Due to~\eqref{equ:PAISIbd}, this upper bound imposes the constraint $(n_\mathrm{el}-1)\cdot d_{\mathrm{el}}\cdot\sin(\theta_0) \cdot T^{-1} < \eta\cdot c$, limiting the number of array elements~($n_\mathrm{el}$), the size of the phased array ($\propto d_{\mathrm{el}}$), the range of the supported scan angles~($\theta_0$), \text{and the symbol rate of the communication system~($T^{-1}$).}

	\begin{figure}
		\includegraphics[scale=0.62]{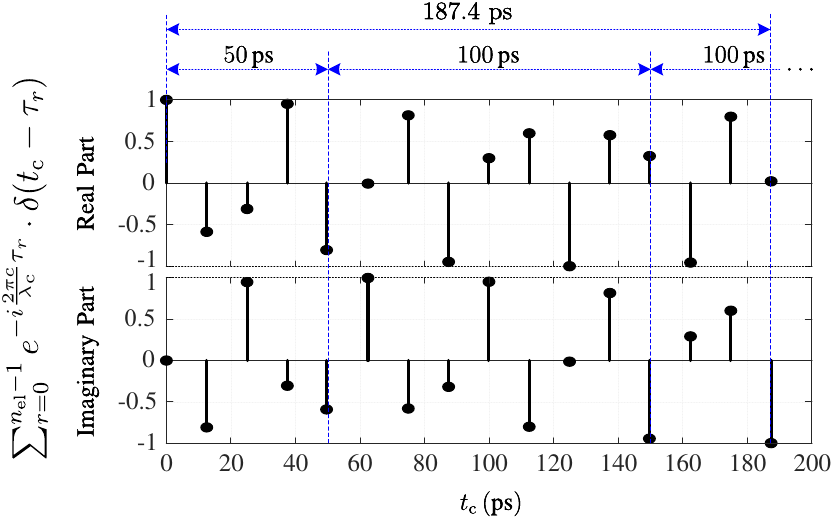}
		\caption{Power-delay profile of the multipath channel associated with the phased array in Fig.~\ref{Fig:phased_array}, for a carrier frequency of $\sfrac{c}{\lamc}=28$~GHz.\\ \mbox{}}
		\label{fig:power_delay}
	\end{figure}
	
	\begin{table}
		\captionsetup{justification=centering}
		\captionof{table}{ 
			\textsc{Unnormalized CIR of the ISI Channel Corresponding to the Power-Delay Profile in Fig.~\ref{fig:power_delay}, for a Symbol Interval of $T=100\,\mathrm{ps}$.}}
		\label{tbl:power_delay}
		\centering
		{\scalebox{1}{
				\begin{tabular}{crrr}
					\hline \\[-10pt]
					$\displaystyle r$ \!\! & \!\!\! Period ($\displaystyle$ps) 	   
					& \multicolumn{2}{c}{Unnormalized $\displaystyle {g}_r$ } \\[0pt]
					\hline\\[-10pt]
					0 	& $0-50$ 	& $0.2482$  \!\!\!\! & \!\!\!$-\>0.7569i$ \\[0pt] 
					1 	& $50-150$  & $0.6554$	\!\!\!\! & \!\!\!$+\>0.1116i$ \\[0pt] 
					2 	& $150-250$ & $-0.1383$ \!\!\!\! & \!\!\!$-\>0.1046i$ \\[0pt] 
					\hline\vspace*{16pt}
		\end{tabular}}}
	\end{table}
	
	To verify the performance of the proposed coding scheme, we consider BPSK signaling with a symbol rate of $10\,\text{Gbaud}$ ($T=100\,\text{ps}$ and ${T}^{-1}\cdot\tau_{n_\mathrm{el}-1}=1.874$) under a high level of Gaussian noise ($\SNRdB=-4.0$). As shown in Fig.~\ref{fig:power_delay}, the unnormalized ISI tap coefficients are captured by sampling at $\frac{T}{2} + t\cdot T$ for $t\in\{0,1,2\}$ (see Table~\ref{tbl:power_delay}). Accordingly, the normalized CIR\footnote{A normalized CIR $\vect{g} \defeq (g(0) \ldots g(\nu))$ has to satisfy $\sum_{r=0}^{\nu} |g(r)|^2 = 1$.} of the downlink channel from the considered phased array, obtained by sampling the output of a filter matched to the shaping pulse at the receiver, becomes
	\begin{equation*}
		\vect{g} = (0.2360 \!-\! 0.7195i, 0.6230 \!+\! 0.1061i, -0.1314 \!-\! 0.0994i).
	\end{equation*}
	
	
	\subsection{Eavesdropping Scenario}
	
	We consider a setup where Alice employs a phased array (as described in the previous section) to transmit a signal to Bob's receiver, located at an angle of $\theta^{\userB}=\pi/4$ relative to the array's boresight. The phased array steers its main lobe toward $\theta_0=\pi/4$ to maximize the signal strength at Bob's location. Meanwhile, Eve, positioned at an angle of $\theta^{\userE}=\pi/6$, intercepts the transmitted signals either through the side lobes or, if the main lobe is sufficiently broad to cover $\theta^{\userE}=\pi/6$, through the main lobe itself.\footnote{As noted in the introduction, transmitters with stringent power constraints (e.g., aerial or low-power ground units) often exhibit wider main lobes due to antenna size limitations. This increases their susceptibility to eavesdropping.} In this scenario, Eve benefits from a smaller angular deviation from the array boresight, resulting in smaller delays between antenna elements. Bob, on the other hand, receives the full strength of the steered main lobe, typically resulting in a higher SNR. To create a more challenging adversarial scenario, we assume $\SNRdBB=\SNRdBE=-4.0$, thereby keeping the setting advantageous for Eve. Using the calculations from the previous section, we have
	\begin{align*}
		&\vgB \!\!=\! (0.2360 \!-\! 0.7195i,  0.6230 \!+\! 0.1061i, -0.1314 \!-\! 0.0994i),\alabel{eq:BOB-ISI}\\
		&\vgE \!\!=\! (0.5211 \!-\! 0.4792i, -0.5791 \!+\! 0.4043i),\alabel{eq:EVE-ISI}
	\end{align*}
	for the considered setup.
	
	As described in~\cite{10068266}, to verify Eve's channel advantage over Bob's channel, we employ the well-known water-pouring formulation to analyze the capacities of the point-to-point channels $\bigl( \vgB, \sigmaB^2 \bigr)$ and $\bigl( \vgE, \sigmaE^2 \bigr)$. Let $\Es$ denote the average energy per input symbol, and let $W\defeq\frac{1}{2T}$ represent the bandwidth of a perfect low-pass filter with sampling at the Nyquist frequency $1/T$ at the receiver. Then, the unconstrained capacity (subject only to an average-energy constraint) of an ISI channel, described by $(\vect{g},\sigma^2)\in\C^{\nu+1}\times\R$, is given by
	$$
	C(\vect{g},\sigma^2,W) = \frac{1}{2} \cdot \int\limits_{-\infty}^{\infty} \log^+ \left(
	\frac{\alpha} {2\sigma^2/|G(f)|^2} \right) \, \mathrm{d} f,
	$$
	where
	\begin{align*}
		G(f)
		&= \begin{cases}
			{\displaystyle 
				\frac{\sum_{t=0}^{\nu} g({t}) e^{-i 2\pi t f T}}
				{\sqrt{\sum_{t=0}^{\nu} |g({t})|^2}}}
			& \text{(if $|f| \leq W$)} \\
			0 
			& \text{(otherwise)}
		\end{cases},\alabel{equ:Gf}
	\end{align*}
	and where $\alpha > 0$ is chosen to satisfy
	$$
	\Es = \int\limits_{-\infty}^{\infty}
	{\left(
		\alpha  -  \frac{2\sigma^2}{|G(f)|^2}
		\right)^+}
	\mathrm{d} f.
	$$
	As illustrated in Fig.~\ref{Fig:unconst_cap}, for a normalized energy constraint $\Es=1$ and $\SNRdBB=\SNRdBE=-4.0$, Eve's channel~\eqref{eq:EVE-ISI} has a higher unconstrained capacity than Bob's channel~\eqref{eq:BOB-ISI} for sufficiently large bandwidth, confirming Eve's advantage in channel quality.\footnote{Since the unconstrained capacity given by the water-pouring formulation is achieved by Gaussian inputs, caution should be exercised when leveraging insights from the presented results, which are used solely to compare the point-to-point channels in the ISI-WTC setup when used with BPSK inputs.}
	
	\begin{figure}
		\centering
		\includegraphics[scale=0.72]{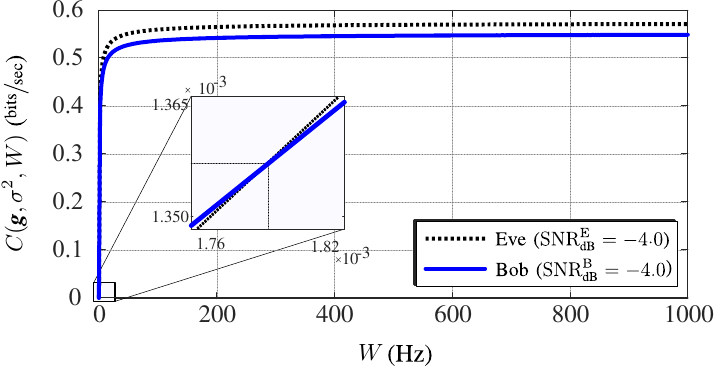}
		\caption{Unconstrained capacities of Bob's and Eve's point-to-point channels in $\sfrac{\mathrm{bits}\!}{\!\mathrm{sec}}$ with normalized average-energy constraint \text{$\Es\! =\! 1$}.\\ \mbox{}}\label{Fig:unconst_cap}
	\end{figure}
	
	\section{Proof of Proposition~\ref{prop:secap}}\label{apx:secap}
	
	We begin by introducing an input setup under which ISI-WTCs can be equivalently represented as vector-input/vector-output stationary memoryless wiretap channels~\cite{8017512}. This setting yields the achievability result via~\cite[Thm.~3]{8017512}. Finally, the optimality of the achievable secure rate is established through a converse proof, characterizing the secrecy capacity.
	
	\subsection{Achievability}
	The input process of the ISI-WTC $\{X(t)\}_{t\in\Z}$ is taken to be blockwise i.i.d., consisting of length-$n$ blocks $\{\vX^n(t)\}_{t\in\Z}$. To prevent interference between consecutive output blocks, for some arbitrary $\nu\geq\max\{\mB,\mE\}$ with $\nu<n$ and all $t\in\Z$, we introduce a guard interval by fixing $X(nt-\nu+1)\defeq 0,\ldots, X(nt)\defeq 0$, while allowing $X(n(t-1)+1),\ldots,X(nt-\nu)$ to follow an arbitrary identical distribution across blocks.
	
	Since the CIR vectors $\vect{g}_{\mathrm{B}}$ and $\vect{g}_{\mathrm{E}}$ remain constant throughout the transmission, the channel transition probabilities factorize identically across blocks as
	\begin{align}
		p_{\vY^{nN}|\vX^{nN}}\big(\vy^{nN}\big|\vx^{nN}\big)
		&=\prod_{t=1}^{N}
		p_{\vY^n(t)|\vX^n(t)}
		\big(\vy^n(t)\big|\vx^n(t)\big),
		\label{equ:blockfact_B}\\
		p_{\vZ^{nN}|\vX^{nN}}\big(\vz^{nN}\big|\vx^{nN}\big)
		&=\prod_{t=1}^{N}
		p_{\vZ^n(t)|\vX^n(t)}
		\big(\vz^n(t)\big|\vx^n(t)\big).
		\label{equ:blockfact_E}
	\end{align}
	By treating each block of $n$ symbols as a single channel use, the factorizations in \eqref{equ:blockfact_B} and \eqref{equ:blockfact_E} correspond to a standard stationary memoryless wiretap channel under the vector-input $\{\vX^n(t)\}_{t\in\Z},$ vector-output $\{\vY^n(t),\vZ^n(t)\}_{t\in\Z}$ representation.\footnote{Due to the imposed guard intervals, the inter-block memory is completely eliminated, making the vector channel memoryless across transmission blocks.}
	
	Let $D(t)\in\set{D}$ be an auxiliary random variable satisfying the Markov chains $D(t)\to\vX^n(t)\to\vY^n(t)$ and $D(t)\to\vX^n(t)\to\vZ^n(t)$ for all $t\in\Z$. These Markov chains induce an auxiliary concatenated channel at the input of the ISI-WTC, specified~by
	\begin{equation*}
		p_{\vX^{nN}|\vect{D}^{N}}\big(\vx^{nN}\big|\vect{d}^{N}\big)
		= \prod_{t=1}^{N} p_{\vX^n(t)|D(t)}\big(\vx^n(t)\big|d(t)\big).
	\end{equation*}
	The encoder for the capacity-achieving code, defined as a generalized version of the one-time stochastic encoder from~\cite{8017512}, is obtained by concatenating two components: (i)~a nested-code encoder for wiretap channels $f_N\!:\setM_\mathrm{s}\to\set{D}^N,$ and (ii)~a component-wise stochastic encoder $h_n\!:\set{D}\to\setX^n$, operating according to the stochastic map $p_{\vX^n|D}$. The resulting concatenated mapping yields the overall encoder $\phi_{nN}\!:\setM_\mathrm{s}\to\setX^{nN}$, defined as
	$$
	\phi_{nN}(\ms)
	\defeq\Big(h_n\big(d(t)\big)\Big)_{t=1}^N,
	$$
	where $\vect{d}^{N}\defeq f_N(\ms)$. See Fig.~\ref{Fig:onetime}.
	
	By~\cite[Thm.~3]{8017512}, all secure rates $R_{\mathrm{s}}$ satisfying
	\begin{equation}\label{equ:app:achieve}
		\Rs <
		\frac{1}{n}
		\Big(I(D;\vY^n) - I(D;\vZ^n)\Big)^+\!\!,
	\end{equation}
	are achievable over the equivalent vector-input/vector-output stationary memoryless wiretap channel specified by \eqref{equ:blockfact_B} and \eqref{equ:blockfact_E}. Importantly, the framework in~\cite[Thm.~3]{8017512} characterizes the secrecy capacity of stationary wiretap channels governed by a random state sequence available strictly as one-time state information. In our considered setting, the CIR vectors $\vect{g}_{\mathrm{B}}$ and $\vect{g}_{\mathrm{E}}$ are constant and deterministic. Consequently, the external channel state is fixed across all transmission blocks (i.e., CIRs are distributed by $\delta(\vect{g}-\vect{g}_{\mathrm{B}})$ and $\delta(\vect{g}-\vect{g}_{\mathrm{E}})$), trivially satisfying the one-time state information condition without requiring dynamic state tracking. This makes our static CIR setup a~simplified special case of the model considered in~\cite{8017512}.
	
	\begin{figure}
		\centering
		\includegraphics[scale=0.86]{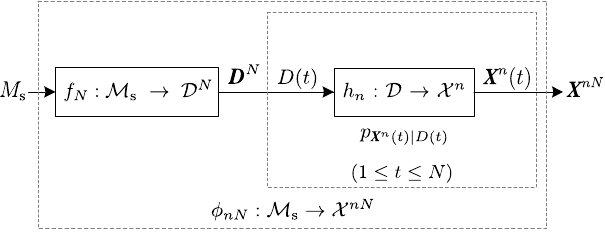}
		\caption{Theoretical encoder.\\ \mbox{}}\label{Fig:onetime}
	\end{figure}
	
	\subsection{Converse}
	
	In the converse, we verify the optimality of the achievable secure rate in \eqref{equ:app:achieve} by considering arbitrary wiretap codes over the ISI-WTC. Let $R_\mathrm{s}$ be an achievable secure rate. By definition, there exists a sequence of encoders $\phi_{nN}$ and decoders $\psi_{nN}$, satisfying the reliability criterion $\lim_{N\to\infty}\epsB_{nN}=0$ in~\eqref{equ:crit:sec} and the strong secrecy criterion $\lim_{N\to\infty} nN\cdot\varepsilon_{nN}=0$. Any achievable secure rate must satisfy $$\liminf_{N\to\infty}\frac{1}{nN}\log|\setM_\mathrm{s}| \ge R_\mathrm{s}.$$ Recalling $$\epsB_{nN}\defeq\Prob\big({M}_\mathrm{s}\neq\psi_{nN}(\vY^{nN})\big),$$ and $$nN\cdot\varepsilon_{nN} = I(M_\mathrm{s};\vZ^{nN}),$$ applying Fano's inequality, the secure rate is bounded by
	\begin{equation*}
		nN R_\mathrm{s} \le I(M_\mathrm{s};\vY^{nN}) - I(M_\mathrm{s};\vZ^{nN}) + nN(\epsB_{nN}+\varepsilon_{nN}) + 1.
	\end{equation*}
	
	Applying the chain rule with the blockwise histories $\underline{\vY}^{t-1}\triangleq(\vY^n(1),\ldots,\vY^n(t-1))$ and $\underline{\vZ}_{t+1}^N\triangleq(\vZ^n(t+1),\ldots,\vZ^n(N))$, and defining $F(t)\triangleq(\underline{\vY}^{t-1},\underline{\vZ}_{t+1}^N)$, yields
	\begin{align*}
		&nN R_\mathrm{s} \le \sum_{t=1}^N \Big(I(M_\mathrm{s};\vY^n(t)|\underline{\vY}^{t-1}) - I(M_\mathrm{s};\vZ^n(t)|\underline{\vZ}_{t+1}^N)\Big)\\[-4pt]
		&\qquad \qquad + nN(\epsB_{nN}+\varepsilon_{nN}) + 1 \\
		&\stackrel{(a)}{=} \sum_{t=1}^N \Big(\!I(M_\mathrm{s}, \underline{\vZ}_{t+1}^N;\vY^n(t)|\underline{\vY}^{t-1}) \! - \! I(M_\mathrm{s}, \underline{\vY}^{t-1};\vZ^n(t)|\underline{\vZ}_{t+1}^N)\!\Big) \\[-4pt]
		&\qquad \qquad + nN(\epsB_{nN}+\varepsilon_{nN}) + 1 \\
		&\stackrel{(b)}{=} \sum_{t=1}^N \Big(I(M_\mathrm{s};\vY^n(t)|F(t)) - I(M_\mathrm{s};\vZ^n(t)|F(t))\Big)\\[-4pt]
		&\qquad \qquad + nN(\epsB_{nN}+\varepsilon_{nN}) + 1 \\
		&\stackrel{(c)}{=}\sum_{t=1}^N \Big(I(D(t);\vY^n(t)|F(t)) - I(D(t);\vZ^n(t)|F(t))\Big)\\[-4pt]
		&\qquad \qquad + nN(\epsB_{nN}+\varepsilon_{nN}) + 1,
	\end{align*}
	where $(a)$ and $(b)$ follow from the Csisz\'{a}r sum identity and $(c)$ from defining the auxiliary random variable $D(t)\!\triangleq\!(M_\mathrm{s},\! F(t))$.
	
	To establish a single-block expression, we introduce a time-sharing variable $T$ uniformly distributed over $\{1,\ldots,N\}$ and independent of all other variables. We define the aggregated variables $F \triangleq (F(T),T)$, $D \triangleq (D(T),T)$, $\vX^n \triangleq \vX^n(T)$, $\vY^n \triangleq \vY^n(T)$, and $\vZ^n \triangleq \vZ^n(T)$. As $N\to\infty$, we obtain
	$$
	R_\mathrm{s} \leq \frac{1}{n}\Big(I(D; \vY^n | F) - I(D; \vZ^n | F)\Big)^+\!\!.
	$$
	
	To isolate the outputs in block $T$ from adjacent blocks in $F$, we define the boundary state $\vSF \triangleq \big(\SF\big(n(T-1)\big), \SF\big(nT\big)\big)$. This state captures the causal influence of past blocks and noncausal influence of future blocks on the current output blocks $(\vY^n,\vZ^n)$. Together with the codebook-induced dependencies captured by $\vX^n$, this implies the Markov chain $F \to (\vX^n, \vSF) \to (\vY^n, \vZ^n)$. Accordingly, we have
	\begin{align*}
		&I(D; \vY^n | F) - I(D; \vZ^n | F)\\[-4pt]
		&\qquad\stackrel{(a)}{=} I(D; \vY^n | F, \vSF) - I(D; \vZ^n | F, \vSF) + o(n) \\[-4pt]
		&\qquad\stackrel{(b)}{\le} \sup_{f, \vsF} \Big( I(D; \vY^n | F=f, \vSF=\vsF)\\[-8pt]
		&\qquad \qquad - I(D; \vZ^n | F=f, \vSF=\vsF) \Big) + o(n) \\[-4pt]
		&\qquad\stackrel{(c)}{\le} \sup_{p_{D,\vX^n}} \Big(I(D;\vY^n) - I(D;\vZ^n)\Big) + o(n), \alabel{equ:expinit}
	\end{align*}
	where $o(n)$ denotes a quantity that grows sublinearly in $n$. In particular, since $H(\vSF) \le 2\nu \log|\setX|$ and the effective channel memory length $\nu$ is fixed, the boundary-state entropy does not scale with $n$ and hence satisfies $H(\vSF)=o(n)$. Consequently, step~$(a)$ holds since conditioning on $\vSF$ alters the mutual information difference by at most $o(n)$, i.e., $$|I(D; \vY^n | F)-I(D; \vY^n | F, \vSF)|\leq H(\vSF)=o(n),$$ and similarly for Eve's side. Step $(b)$ follows by upper bounding the conditional mutual information with the supremum over all possible deterministic realizations. Finally, step $(c)$ holds because the outputs $(\vY^n, \vZ^n)$ depend on $D$ only through the inputs $\vX^n$ and the boundary state $\vSF$. Since $\vSF$ is fixed to a constant $\vsF$ in $(b)$, we obtain the clean Markov chains $D \to \vX^n \to \vY^n$ and $D \to \vX^n \to \vZ^n$. Evaluating $(b)$ under these Markov structures simplifies to taking the supremum over all joint distributions of the form $p_D\cdot p_{\vX^n| D}$.
	
	Since the penalty term $\frac{1}{n}o(n)$ in \eqref{equ:expinit} vanishes as $n\to\infty$, we have
	$$
	\Rs \le \lim_{n\to\infty}\frac{1}{n} \sup_{p_{D,\vX^n}} \Big(I(D;\vY^n) - I(D;\vZ^n)\Big)^+\!\!.
	$$
	Combining this upper bound with the achievable secure rate in~\eqref{equ:app:achieve} establishes the secrecy capacity.
	
	
	\noindent
	\begin{figure*}\centering
		\section{}\label{apx:tb}
		\captionsetup{justification=centering}
		\captionof{table}{\textsc{Designed Superchannel (SC) Based on the State Types and the Branch Types in Table~\ref{tbl:trellis1} (Example~\ref{ex:supchatr}).}}\vspace*{5pt}
		\label{tbl:trellis2}
		{\resizebox{15.65cm}{10.6cm}{%
				\begin{tabular}{||c c c | c||}
					\hline\hline
					SC's	& trellis	  	& SC's 		&	\\[-5pt]
					start 	& code		  	& end	 	&   $\setT_{ij}^{(\ell)}$ \\[-3pt]
					state 	& input		    & state 	&	\\[-1pt]\hline\hline
					0 & 0 & 45 & $\setT_{00}^{(0)}$ \\
					0 & 1 & 89 & $\setT_{01}^{(0)}$ \\\hline
					1 & 0 & 180 & $\setT_{03}^{(0)}$ \\
					1 & 1 & 17 & $\setT_{00}^{(0)}$ \\\hline
					2 & 0 & 22 & $\setT_{00}^{(0)}$ \\
					2 & 1 & 152 & $\setT_{03}^{(0)}$ \\\hline
					3 & 0 & 72 & $\setT_{00}^{(0)}$ \\
					3 & 1 & 182 & $\setT_{03}^{(0)}$ \\\hline
					4 & 0 & 105 & $\setT_{02}^{(1)}$ \\
					4 & 1 & 50 & $\setT_{00}^{(0)}$ \\\hline
					5 & 0 & 155 & $\setT_{03}^{(1)}$ \\
					5 & 1 & 68 & $\setT_{00}^{(0)}$ \\\hline
					6 & 0 & 177 & $\setT_{03}^{(1)}$ \\
					6 & 1 & 64 & $\setT_{00}^{(0)}$ \\\hline
					7 & 0 & 31 & $\setT_{00}^{(0)}$ \\
					7 & 1 & 136 & $\setT_{03}^{(1)}$ \\\hline
					8 & 0 & 77 & $\setT_{00}^{(0)}$ \\
					8 & 1 & 114 & $\setT_{03}^{(1)}$ \\\hline
					9 & 0 & 36 & $\setT_{00}^{(0)}$ \\
					9 & 1 & 171 & $\setT_{03}^{(0)}$ \\\hline
					10 & 0 & 128 & $\setT_{03}^{(3)}$ \\
					10 & 1 & 71 & $\setT_{00}^{(0)}$ \\\hline
					11 & 0 & 26 & $\setT_{00}^{(0)}$ \\
					11 & 1 & 179 & $\setT_{03}^{(3)}$ \\\hline
					12 & 0 & 100 & $\setT_{02}^{(1)}$ \\
					12 & 1 & 61 & $\setT_{00}^{(0)}$ \\\hline
					13 & 0 & 13 & $\setT_{00}^{(0)}$ \\
					13 & 1 & 161 & $\setT_{03}^{(0)}$ \\\hline
					14 & 0 & 64 & $\setT_{00}^{(0)}$ \\
					14 & 1 & 123 & $\setT_{03}^{(0)}$ \\\hline
					15 & 0 & 173 & $\setT_{03}^{(3)}$ \\
					15 & 1 & 15 & $\setT_{00}^{(0)}$ \\\hline
					16 & 0 & 145 & $\setT_{03}^{(3)}$ \\
					16 & 1 & 75 & $\setT_{00}^{(0)}$ \\\hline
					17 & 0 & 80 & $\setT_{00}^{(0)}$ \\
					17 & 1 & 91 & $\setT_{02}^{(1)}$ \\\hline
					18 & 0 & 166 & $\setT_{03}^{(3)}$ \\
					18 & 1 & 0 & $\setT_{00}^{(0)}$ \\\hline
					19 & 0 & 124 & $\setT_{03}^{(0)}$ \\
					19 & 1 & 60 & $\setT_{00}^{(0)}$ \\\hline
					20 & 0 & 88 & $\setT_{01}^{(0)}$ \\
					20 & 1 & 19 & $\setT_{00}^{(0)}$ \\\hline
					21 & 0 & 39 & $\setT_{00}^{(0)}$ \\
					21 & 1 & 102 & $\setT_{02}^{(3)}$ \\\hline
					22 & 0 & 137 & $\setT_{03}^{(1)}$ \\
					22 & 1 & 9 & $\setT_{00}^{(0)}$ \\\hline
					23 & 0 & 18 & $\setT_{00}^{(0)}$ \\
					23 & 1 & 194 & $\setT_{03}^{(3)}$ \\\hline
					24 & 0 & 22 & $\setT_{00}^{(0)}$ \\
					24 & 1 & 73 & $\setT_{00}^{(2)}$ \\\hline
					25 & 0 & 70 & $\setT_{00}^{(0)}$ \\
					25 & 1 & 172 & $\setT_{03}^{(1)}$ \\\hline
					26 & 0 & 78 & $\setT_{00}^{(0)}$ \\
					26 & 1 & 99 & $\setT_{02}^{(0)}$ \\\hline
					27 & 0 & 149 & $\setT_{03}^{(3)}$ \\
					27 & 1 & 52 & $\setT_{00}^{(0)}$ \\\hline
					28 & 0 & 52 & $\setT_{00}^{(0)}$ \\
					28 & 1 & 90 & $\setT_{01}^{(0)}$ \\\hline
					29 & 0 & 129 & $\setT_{03}^{(1)}$ \\
					29 & 1 & 67 & $\setT_{00}^{(0)}$ \\\hline
					30 & 0 & 45 & $\setT_{00}^{(0)}$ \\
					30 & 1 & 116 & $\setT_{03}^{(3)}$ \\\hline
					31 & 0 & 168 & $\setT_{03}^{(0)}$ \\
					31 & 1 & 43 & $\setT_{00}^{(0)}$ \\\hline
					32 & 0 & 176 & $\setT_{03}^{(3)}$ \\
					32 & 1 & 13 & $\setT_{00}^{(0)}$ \\\hline
					33 & 0 & 181 & $\setT_{03}^{(3)}$ \\
					33 & 1 & 27 & $\setT_{00}^{(0)}$ \\\hline
					34 & 0 & 139 & $\setT_{03}^{(1)}$ \\
					34 & 1 & 42 & $\setT_{00}^{(0)}$ \\\hline
					35 & 0 & 125 & $\setT_{03}^{(3)}$ \\
					35 & 1 & 51 & $\setT_{00}^{(0)}$ \\\hline
					36 & 0 & 4 & $\setT_{00}^{(0)}$ \\
					36 & 1 & 179 & $\setT_{03}^{(3)}$ \\\hline
					37 & 0 & 89 & $\setT_{01}^{(0)}$ \\
					37 & 1 & 197 & $\setT_{03}^{(0)}$ \\\hline
					38 & 0 & 31 & $\setT_{00}^{(0)}$ \\
					38 & 1 & 140 & $\setT_{03}^{(1)}$ \\\hline
					39 & 0 & 96 & $\setT_{02}^{(0)}$ \\
					39 & 1 & 47 & $\setT_{00}^{(0)}$ \\\hline
					\hline
				\end{tabular}\,\,
				\begin{tabular}{||c c c | c||}
					\hline\hline
					SC's	& trellis	  	& SC's 		&	\\[-5pt]
					start 	& code		  	& end	 	&   $\setT_{ij}^{(\ell)}$ \\[-3pt]
					state 	& input		    & state 	&	\\[-1pt]\hline\hline
					40 & 0 & 32 & $\setT_{00}^{(2)}$ \\
					40 & 1 & 77 & $\setT_{00}^{(0)}$ \\\hline
					41 & 0 & 159 & $\setT_{03}^{(1)}$ \\
					41 & 1 & 72 & $\setT_{00}^{(0)}$ \\\hline
					42 & 0 & 143 & $\setT_{03}^{(3)}$ \\
					42 & 1 & 19 & $\setT_{00}^{(0)}$ \\\hline
					43 & 0 & 198 & $\setT_{03}^{(1)}$ \\
					43 & 1 & 162 & $\setT_{03}^{(0)}$ \\\hline
					44 & 0 & 147 & $\setT_{03}^{(3)}$ \\
					44 & 1 & 17 & $\setT_{00}^{(0)}$ \\\hline
					45 & 0 & 188 & $\setT_{03}^{(0)}$ \\
					45 & 1 & 37 & $\setT_{00}^{(0)}$ \\\hline
					46 & 0 & 88 & $\setT_{01}^{(0)}$ \\
					46 & 1 & 67 & $\setT_{00}^{(0)}$ \\\hline
					47 & 0 & 58 & $\setT_{00}^{(0)}$ \\
					47 & 1 & 150 & $\setT_{03}^{(3)}$ \\\hline
					48 & 0 & 172 & $\setT_{03}^{(0)}$ \\
					48 & 1 & 63 & $\setT_{00}^{(0)}$ \\\hline
					49 & 0 & 112 & $\setT_{03}^{(3)}$ \\
					49 & 1 & 16 & $\setT_{00}^{(0)}$ \\\hline
					50 & 0 & 199 & $\setT_{03}^{(0)}$ \\
					50 & 1 & 2 & $\setT_{00}^{(0)}$ \\\hline
					51 & 0 & 84 & $\setT_{01}^{(0)}$ \\
					51 & 1 & 29 & $\setT_{00}^{(0)}$ \\\hline
					52 & 0 & 98 & $\setT_{02}^{(0)}$ \\
					52 & 1 & 73 & $\setT_{00}^{(0)}$ \\\hline
					53 & 0 & 196 & $\setT_{03}^{(0)}$ \\
					53 & 1 & 158 & $\setT_{03}^{(1)}$ \\\hline
					54 & 0 & 186 & $\setT_{03}^{(3)}$ \\
					54 & 1 & 175 & $\setT_{03}^{(1)}$ \\\hline
					55 & 0 & 115 & $\setT_{03}^{(0)}$ \\
					55 & 1 & 23 & $\setT_{00}^{(0)}$ \\\hline
					56 & 0 & 144 & $\setT_{03}^{(1)}$ \\
					56 & 1 & 53 & $\setT_{00}^{(0)}$ \\\hline
					57 & 0 & 171 & $\setT_{03}^{(0)}$ \\
					57 & 1 & 57 & $\setT_{00}^{(0)}$ \\\hline
					58 & 0 & 187 & $\setT_{03}^{(3)}$ \\
					58 & 1 & 54 & $\setT_{00}^{(0)}$ \\\hline
					59 & 0 & 95 & $\setT_{02}^{(3)}$ \\
					59 & 1 & 40 & $\setT_{00}^{(0)}$ \\\hline
					60 & 0 & 49 & $\setT_{00}^{(0)}$ \\
					60 & 1 & 87 & $\setT_{01}^{(0)}$ \\\hline
					61 & 0 & 189 & $\setT_{03}^{(3)}$ \\
					61 & 1 & 18 & $\setT_{00}^{(0)}$ \\\hline
					62 & 0 & 30 & $\setT_{00}^{(0)}$ \\
					62 & 1 & 122 & $\setT_{03}^{(3)}$ \\\hline
					63 & 0 & 103 & $\setT_{02}^{(3)}$ \\
					63 & 1 & 8 & $\setT_{00}^{(0)}$ \\\hline
					64 & 0 & 7 & $\setT_{00}^{(3)}$ \\
					64 & 1 & 12 & $\setT_{00}^{(0)}$ \\\hline
					65 & 0 & 199 & $\setT_{03}^{(3)}$ \\
					65 & 1 & 10 & $\setT_{00}^{(0)}$ \\\hline
					66 & 0 & 151 & $\setT_{03}^{(3)}$ \\
					66 & 1 & 58 & $\setT_{00}^{(0)}$ \\\hline
					67 & 0 & 76 & $\setT_{00}^{(0)}$ \\
					67 & 1 & 150 & $\setT_{03}^{(3)}$ \\\hline
					68 & 0 & 44 & $\setT_{00}^{(0)}$ \\
					68 & 1 & 74 & $\setT_{00}^{(1)}$ \\\hline
					69 & 0 & 108 & $\setT_{02}^{(0)}$ \\
					69 & 1 & 63 & $\setT_{00}^{(0)}$ \\\hline
					70 & 0 & 115 & $\setT_{03}^{(3)}$ \\
					70 & 1 & 24 & $\setT_{00}^{(0)}$ \\\hline
					71 & 0 & 79 & $\setT_{00}^{(1)}$ \\
					71 & 1 & 82 & $\setT_{00}^{(0)}$ \\\hline
					72 & 0 & 124 & $\setT_{03}^{(3)}$ \\
					72 & 1 & 28 & $\setT_{00}^{(0)}$ \\\hline
					73 & 0 & 68 & $\setT_{00}^{(2)}$ \\
					73 & 1 & 35 & $\setT_{00}^{(0)}$ \\\hline
					74 & 0 & 182 & $\setT_{03}^{(3)}$ \\
					74 & 1 & 74 & $\setT_{00}^{(0)}$ \\\hline
					75 & 0 & 161 & $\setT_{03}^{(1)}$ \\
					75 & 1 & 46 & $\setT_{00}^{(0)}$ \\\hline
					76 & 0 & 65 & $\setT_{00}^{(0)}$ \\
					76 & 1 & 160 & $\setT_{03}^{(0)}$ \\\hline
					77 & 0 & 36 & $\setT_{00}^{(0)}$ \\
					77 & 1 & 193 & $\setT_{03}^{(3)}$ \\\hline
					78 & 0 & 85 & $\setT_{01}^{(0)}$ \\
					78 & 1 & 81 & $\setT_{00}^{(0)}$ \\\hline
					79 & 0 & 177 & $\setT_{03}^{(0)}$ \\
					79 & 1 & 5 & $\setT_{00}^{(0)}$ \\\hline
					\hline
				\end{tabular}\,\,
				\begin{tabular}{||c c c | c||}
					\hline\hline
					SC's	& trellis	  	& SC's 		&	\\[-5pt]
					start 	& code		  	& end	 	&   $\setT_{ij}^{(\ell)}$ \\[-3pt]
					state 	& input		    & state 	&	\\[-1pt]\hline\hline
					80 & 0 & 42 & $\setT_{00}^{(0)}$ \\
					80 & 1 & 97 & $\setT_{02}^{(1)}$ \\\hline
					81 & 0 & 112 & $\setT_{03}^{(1)}$ \\
					81 & 1 & 8 & $\setT_{00}^{(0)}$ \\\hline
					82 & 0 & 198 & $\setT_{03}^{(0)}$ \\
					82 & 1 & 62 & $\setT_{00}^{(0)}$ \\\hline
					83 & 0 & 83 & $\setT_{00}^{(0)}$ \\
					83 & 1 & 174 & $\setT_{03}^{(0)}$ \\\hline
					84 & 0 & 94 & $\setT_{12}^{(3)}$ \\
					84 & 1 & 71 & $\setT_{10}^{(3)}$ \\\hline
					85 & 0 & 107 & $\setT_{12}^{(3)}$ \\
					85 & 1 & 76 & $\setT_{10}^{(0)}$ \\\hline
					86 & 0 & 28 & $\setT_{10}^{(0)}$ \\
					86 & 1 & 188 & $\setT_{13}^{(3)}$ \\\hline
					87 & 0 & 69 & $\setT_{10}^{(2)}$ \\
					87 & 1 & 142 & $\setT_{13}^{(3)}$ \\\hline
					88 & 0 & 16 & $\setT_{10}^{(2)}$ \\
					88 & 1 & 158 & $\setT_{13}^{(3)}$ \\\hline
					89 & 0 & 81 & $\setT_{10}^{(2)}$ \\
					89 & 1 & 57 & $\setT_{10}^{(3)}$ \\\hline
					90 & 0 & 181 & $\setT_{13}^{(3)}$ \\
					90 & 1 & 97 & $\setT_{12}^{(3)}$ \\\hline
					91 & 0 & 90 & $\setT_{21}^{(0)}$ \\
					91 & 1 & 48 & $\setT_{20}^{(0)}$ \\\hline
					92 & 0 & 62 & $\setT_{20}^{(0)}$ \\
					92 & 1 & 104 & $\setT_{22}^{(0)}$ \\\hline
					93 & 0 & 196 & $\setT_{23}^{(0)}$ \\
					93 & 1 & 61 & $\setT_{20}^{(0)}$ \\\hline
					94 & 0 & 41 & $\setT_{20}^{(0)}$ \\
					94 & 1 & 113 & $\setT_{23}^{(0)}$ \\\hline
					95 & 0 & 78 & $\setT_{20}^{(0)}$ \\
					95 & 1 & 168 & $\setT_{23}^{(0)}$ \\\hline
					96 & 0 & 51 & $\setT_{20}^{(0)}$ \\
					96 & 1 & 111 & $\setT_{23}^{(0)}$ \\\hline
					97 & 0 & 7 & $\setT_{20}^{(0)}$ \\
					97 & 1 & 113 & $\setT_{23}^{(3)}$ \\\hline
					98 & 0 & 56 & $\setT_{20}^{(0)}$ \\
					98 & 1 & 133 & $\setT_{23}^{(0)}$ \\\hline
					99 & 0 & 59 & $\setT_{20}^{(0)}$ \\
					99 & 1 & 166 & $\setT_{23}^{(1)}$ \\\hline
					100 & 0 & 41 & $\setT_{20}^{(0)}$ \\
					100 & 1 & 131 & $\setT_{23}^{(3)}$ \\\hline
					101 & 0 & 176 & $\setT_{23}^{(0)}$ \\
					101 & 1 & 43 & $\setT_{20}^{(0)}$ \\\hline
					102 & 0 & 183 & $\setT_{23}^{(0)}$ \\
					102 & 1 & 25 & $\setT_{20}^{(0)}$ \\\hline
					103 & 0 & 130 & $\setT_{23}^{(0)}$ \\
					103 & 1 & 86 & $\setT_{21}^{(0)}$ \\\hline
					104 & 0 & 143 & $\setT_{23}^{(0)}$ \\
					104 & 1 & 44 & $\setT_{20}^{(0)}$ \\\hline
					105 & 0 & 29 & $\setT_{20}^{(0)}$ \\
					105 & 1 & 195 & $\setT_{23}^{(0)}$ \\\hline
					106 & 0 & 24 & $\setT_{20}^{(0)}$ \\
					106 & 1 & 106 & $\setT_{22}^{(3)}$ \\\hline
					107 & 0 & 6 & $\setT_{20}^{(0)}$ \\
					107 & 1 & 46 & $\setT_{20}^{(2)}$ \\\hline
					108 & 0 & 153 & $\setT_{23}^{(1)}$ \\
					108 & 1 & 32 & $\setT_{20}^{(0)}$ \\\hline
					109 & 0 & 133 & $\setT_{23}^{(0)}$ \\
					109 & 1 & 33 & $\setT_{20}^{(0)}$ \\\hline
					110 & 0 & 118 & $\setT_{23}^{(3)}$ \\
					110 & 1 & 14 & $\setT_{20}^{(0)}$ \\\hline
					111 & 0 & 185 & $\setT_{33}^{(3)}$ \\
					111 & 1 & 102 & $\setT_{32}^{(3)}$ \\\hline
					112 & 0 & 55 & $\setT_{30}^{(3)}$ \\
					112 & 1 & 157 & $\setT_{33}^{(3)}$ \\\hline
					113 & 0 & 146 & $\setT_{33}^{(3)}$ \\
					113 & 1 & 101 & $\setT_{32}^{(3)}$ \\\hline
					114 & 0 & 110 & $\setT_{32}^{(2)}$ \\
					114 & 1 & 159 & $\setT_{33}^{(3)}$ \\\hline
					115 & 0 & 187 & $\setT_{33}^{(3)}$ \\
					115 & 1 & 1 & $\setT_{30}^{(0)}$ \\\hline
					116 & 0 & 105 & $\setT_{32}^{(3)}$ \\
					116 & 1 & 144 & $\setT_{33}^{(3)}$ \\\hline
					117 & 0 & 99 & $\setT_{32}^{(3)}$ \\
					117 & 1 & 121 & $\setT_{33}^{(3)}$ \\\hline
					118 & 0 & 11 & $\setT_{30}^{(2)}$ \\
					118 & 1 & 146 & $\setT_{33}^{(3)}$ \\\hline
					119 & 0 & 93 & $\setT_{32}^{(3)}$ \\
					119 & 1 & 178 & $\setT_{33}^{(3)}$ \\\hline
					\hline
				\end{tabular}\,\,
				\begin{tabular}{||c c c | c||}
					\hline\hline
					SC's	& trellis	  	& SC's 		&	\\[-5pt]
					start 	& code		  	& end	 	&   $\setT_{ij}^{(\ell)}$ \\[-3pt]
					state 	& input		    & state 	&	\\[-1pt]\hline\hline
					120 & 0 & 129 & $\setT_{33}^{(3)}$ \\
					120 & 1 & 30 & $\setT_{30}^{(0)}$ \\\hline
					121 & 0 & 103 & $\setT_{32}^{(3)}$ \\
					121 & 1 & 132 & $\setT_{33}^{(3)}$ \\\hline
					122 & 0 & 48 & $\setT_{30}^{(3)}$ \\
					122 & 1 & 164 & $\setT_{33}^{(3)}$ \\\hline
					123 & 0 & 173 & $\setT_{33}^{(3)}$ \\
					123 & 1 & 38 & $\setT_{30}^{(0)}$ \\\hline
					124 & 0 & 85 & $\setT_{31}^{(0)}$ \\
					124 & 1 & 154 & $\setT_{33}^{(3)}$ \\\hline
					125 & 0 & 135 & $\setT_{33}^{(3)}$ \\
					125 & 1 & 15 & $\setT_{30}^{(2)}$ \\\hline
					126 & 0 & 185 & $\setT_{33}^{(3)}$ \\
					126 & 1 & 33 & $\setT_{30}^{(2)}$ \\\hline
					127 & 0 & 197 & $\setT_{33}^{(3)}$ \\
					127 & 1 & 47 & $\setT_{30}^{(0)}$ \\\hline
					128 & 0 & 53 & $\setT_{30}^{(3)}$ \\
					128 & 1 & 141 & $\setT_{33}^{(3)}$ \\\hline
					129 & 0 & 49 & $\setT_{30}^{(0)}$ \\
					129 & 1 & 186 & $\setT_{33}^{(3)}$ \\\hline
					130 & 0 & 108 & $\setT_{32}^{(3)}$ \\
					130 & 1 & 114 & $\setT_{33}^{(3)}$ \\\hline
					131 & 0 & 25 & $\setT_{30}^{(3)}$ \\
					131 & 1 & 160 & $\setT_{33}^{(3)}$ \\\hline
					132 & 0 & 191 & $\setT_{33}^{(3)}$ \\
					132 & 1 & 125 & $\setT_{33}^{(2)}$ \\\hline
					133 & 0 & 93 & $\setT_{32}^{(3)}$ \\
					133 & 1 & 169 & $\setT_{33}^{(3)}$ \\\hline
					134 & 0 & 100 & $\setT_{32}^{(3)}$ \\
					134 & 1 & 190 & $\setT_{33}^{(3)}$ \\\hline
					135 & 0 & 134 & $\setT_{33}^{(3)}$ \\
					135 & 1 & 120 & $\setT_{33}^{(2)}$ \\\hline
					136 & 0 & 4 & $\setT_{30}^{(3)}$ \\
					136 & 1 & 180 & $\setT_{33}^{(3)}$ \\\hline
					137 & 0 & 56 & $\setT_{30}^{(2)}$ \\
					137 & 1 & 149 & $\setT_{33}^{(3)}$ \\\hline
					138 & 0 & 183 & $\setT_{33}^{(3)}$ \\
					138 & 1 & 65 & $\setT_{30}^{(3)}$ \\\hline
					139 & 0 & 60 & $\setT_{30}^{(2)}$ \\
					139 & 1 & 154 & $\setT_{33}^{(3)}$ \\\hline
					140 & 0 & 111 & $\setT_{33}^{(3)}$ \\
					140 & 1 & 94 & $\setT_{32}^{(3)}$ \\\hline
					141 & 0 & 87 & $\setT_{31}^{(3)}$ \\
					141 & 1 & 147 & $\setT_{33}^{(3)}$ \\\hline
					142 & 0 & 192 & $\setT_{33}^{(3)}$ \\
					142 & 1 & 109 & $\setT_{32}^{(3)}$ \\\hline
					143 & 0 & 69 & $\setT_{30}^{(3)}$ \\
					143 & 1 & 165 & $\setT_{33}^{(3)}$ \\\hline
					144 & 0 & 195 & $\setT_{33}^{(3)}$ \\
					144 & 1 & 34 & $\setT_{30}^{(0)}$ \\\hline
					145 & 0 & 132 & $\setT_{33}^{(3)}$ \\
					145 & 1 & 66 & $\setT_{30}^{(3)}$ \\\hline
					146 & 0 & 2 & $\setT_{30}^{(3)}$ \\
					146 & 1 & 155 & $\setT_{33}^{(3)}$ \\\hline
					147 & 0 & 107 & $\setT_{32}^{(3)}$ \\
					147 & 1 & 165 & $\setT_{33}^{(3)}$ \\\hline
					148 & 0 & 54 & $\setT_{30}^{(0)}$ \\
					148 & 1 & 117 & $\setT_{33}^{(3)}$ \\\hline
					149 & 0 & 75 & $\setT_{30}^{(2)}$ \\
					149 & 1 & 156 & $\setT_{33}^{(3)}$ \\\hline
					150 & 0 & 37 & $\setT_{30}^{(0)}$ \\
					150 & 1 & 170 & $\setT_{33}^{(3)}$ \\\hline
					151 & 0 & 34 & $\setT_{30}^{(0)}$ \\
					151 & 1 & 167 & $\setT_{33}^{(3)}$ \\\hline
					152 & 0 & 84 & $\setT_{31}^{(2)}$ \\
					152 & 1 & 126 & $\setT_{33}^{(3)}$ \\\hline
					153 & 0 & 130 & $\setT_{33}^{(0)}$ \\
					153 & 1 & 138 & $\setT_{33}^{(3)}$ \\\hline
					154 & 0 & 184 & $\setT_{33}^{(3)}$ \\
					154 & 1 & 35 & $\setT_{30}^{(2)}$ \\\hline
					155 & 0 & 21 & $\setT_{30}^{(2)}$ \\
					155 & 1 & 122 & $\setT_{33}^{(3)}$ \\\hline
					156 & 0 & 92 & $\setT_{32}^{(3)}$ \\
					156 & 1 & 174 & $\setT_{33}^{(3)}$ \\\hline
					157 & 0 & 14 & $\setT_{30}^{(0)}$ \\
					157 & 1 & 135 & $\setT_{33}^{(3)}$ \\\hline
					158 & 0 & 82 & $\setT_{30}^{(3)}$ \\
					158 & 1 & 118 & $\setT_{33}^{(3)}$ \\\hline
					159 & 0 & 119 & $\setT_{33}^{(3)}$ \\
					159 & 1 & 162 & $\setT_{33}^{(2)}$ \\\hline
					\hline
				\end{tabular}\,\,
				\begin{tabular}{||c c c | c||}
					\hline\hline
					SC's	& trellis	  	& SC's 		&	\\[-5pt]
					start 	& code		  	& end	 	&   $\setT_{ij}^{(\ell)}$ \\[-3pt]
					state 	& input		    & state 	&	\\[-1pt]\hline\hline
					160 & 0 & 70 & $\setT_{30}^{(0)}$ \\
					160 & 1 & 127 & $\setT_{33}^{(3)}$ \\\hline
					161 & 0 & 169 & $\setT_{33}^{(3)}$ \\
					161 & 1 & 12 & $\setT_{30}^{(0)}$ \\\hline
					162 & 0 & 59 & $\setT_{30}^{(0)}$ \\
					162 & 1 & 190 & $\setT_{33}^{(3)}$ \\\hline
					163 & 0 & 3 & $\setT_{30}^{(0)}$ \\
					163 & 1 & 145 & $\setT_{33}^{(3)}$ \\\hline
					164 & 0 & 38 & $\setT_{30}^{(3)}$ \\
					164 & 1 & 170 & $\setT_{33}^{(3)}$ \\\hline
					165 & 0 & 79 & $\setT_{30}^{(3)}$ \\
					165 & 1 & 128 & $\setT_{33}^{(3)}$ \\\hline
					166 & 0 & 127 & $\setT_{33}^{(3)}$ \\
					166 & 1 & 194 & $\setT_{33}^{(1)}$ \\\hline
					167 & 0 & 39 & $\setT_{30}^{(0)}$ \\
					167 & 1 & 164 & $\setT_{33}^{(3)}$ \\\hline
					168 & 0 & 20 & $\setT_{30}^{(2)}$ \\
					168 & 1 & 142 & $\setT_{33}^{(3)}$ \\\hline
					169 & 0 & 20 & $\setT_{30}^{(0)}$ \\
					169 & 1 & 189 & $\setT_{33}^{(3)}$ \\\hline
					170 & 0 & 9 & $\setT_{30}^{(2)}$ \\
					170 & 1 & 117 & $\setT_{33}^{(3)}$ \\\hline
					171 & 0 & 109 & $\setT_{32}^{(3)}$ \\
					171 & 1 & 148 & $\setT_{33}^{(3)}$ \\\hline
					172 & 0 & 119 & $\setT_{33}^{(3)}$ \\
					172 & 1 & 110 & $\setT_{32}^{(3)}$ \\\hline
					173 & 0 & 86 & $\setT_{31}^{(0)}$ \\
					173 & 1 & 66 & $\setT_{30}^{(2)}$ \\\hline
					174 & 0 & 139 & $\setT_{33}^{(3)}$ \\
					174 & 1 & 98 & $\setT_{32}^{(3)}$ \\\hline
					175 & 0 & 163 & $\setT_{33}^{(3)}$ \\
					175 & 1 & 101 & $\setT_{32}^{(3)}$ \\\hline
					176 & 0 & 137 & $\setT_{33}^{(3)}$ \\
					176 & 1 & 26 & $\setT_{30}^{(0)}$ \\\hline
					177 & 0 & 6 & $\setT_{30}^{(0)}$ \\
					177 & 1 & 175 & $\setT_{33}^{(3)}$ \\\hline
					178 & 0 & 23 & $\setT_{30}^{(2)}$ \\
					178 & 1 & 126 & $\setT_{33}^{(3)}$ \\\hline
					179 & 0 & 80 & $\setT_{30}^{(3)}$ \\
					179 & 1 & 157 & $\setT_{33}^{(3)}$ \\\hline
					180 & 0 & 104 & $\setT_{32}^{(3)}$ \\
					180 & 1 & 152 & $\setT_{33}^{(3)}$ \\\hline
					181 & 0 & 163 & $\setT_{33}^{(3)}$ \\
					181 & 1 & 3 & $\setT_{30}^{(3)}$ \\\hline
					182 & 0 & 83 & $\setT_{30}^{(2)}$ \\
					182 & 1 & 141 & $\setT_{33}^{(3)}$ \\\hline
					183 & 0 & 1 & $\setT_{30}^{(2)}$ \\
					183 & 1 & 192 & $\setT_{33}^{(3)}$ \\\hline
					184 & 0 & 136 & $\setT_{33}^{(3)}$ \\
					184 & 1 & 11 & $\setT_{30}^{(0)}$ \\\hline
					185 & 0 & 153 & $\setT_{33}^{(3)}$ \\
					185 & 1 & 106 & $\setT_{32}^{(3)}$ \\\hline
					186 & 0 & 148 & $\setT_{33}^{(3)}$ \\
					186 & 1 & 116 & $\setT_{33}^{(0)}$ \\\hline
					187 & 0 & 21 & $\setT_{30}^{(0)}$ \\
					187 & 1 & 121 & $\setT_{33}^{(3)}$ \\\hline
					188 & 0 & 191 & $\setT_{33}^{(3)}$ \\
					188 & 1 & 40 & $\setT_{30}^{(2)}$ \\\hline
					189 & 0 & 96 & $\setT_{32}^{(3)}$ \\
					189 & 1 & 178 & $\setT_{33}^{(3)}$ \\\hline
					190 & 0 & 91 & $\setT_{32}^{(3)}$ \\
					190 & 1 & 120 & $\setT_{33}^{(3)}$ \\\hline
					191 & 0 & 10 & $\setT_{30}^{(0)}$ \\
					191 & 1 & 134 & $\setT_{33}^{(3)}$ \\\hline
					192 & 0 & 55 & $\setT_{30}^{(0)}$ \\
					192 & 1 & 140 & $\setT_{33}^{(3)}$ \\\hline
					193 & 0 & 27 & $\setT_{30}^{(3)}$ \\
					193 & 1 & 151 & $\setT_{33}^{(3)}$ \\\hline
					194 & 0 & 0 & $\setT_{30}^{(3)}$ \\
					194 & 1 & 156 & $\setT_{33}^{(3)}$ \\\hline
					195 & 0 & 123 & $\setT_{33}^{(3)}$ \\
					195 & 1 & 95 & $\setT_{32}^{(3)}$ \\\hline
					196 & 0 & 50 & $\setT_{30}^{(0)}$ \\
					196 & 1 & 193 & $\setT_{33}^{(3)}$ \\\hline
					197 & 0 & 5 & $\setT_{30}^{(0)}$ \\
					197 & 1 & 131 & $\setT_{33}^{(3)}$ \\\hline
					198 & 0 & 138 & $\setT_{33}^{(3)}$ \\
					198 & 1 & 92 & $\setT_{32}^{(3)}$ \\\hline
					199 & 0 & 167 & $\setT_{33}^{(3)}$ \\
					199 & 1 & 184 & $\setT_{33}^{(2)}$ \\\hline
					\hline
		\end{tabular}}}
	\end{figure*}
	
	\setstretch{1.02}
	\bibliographystyle{IEEEtran}
	\bibliography{citation}
	
	\setstretch{1}
		\begin{IEEEbiography}[{\includegraphics[width=1in,height=1.25in,clip,keepaspectratio]{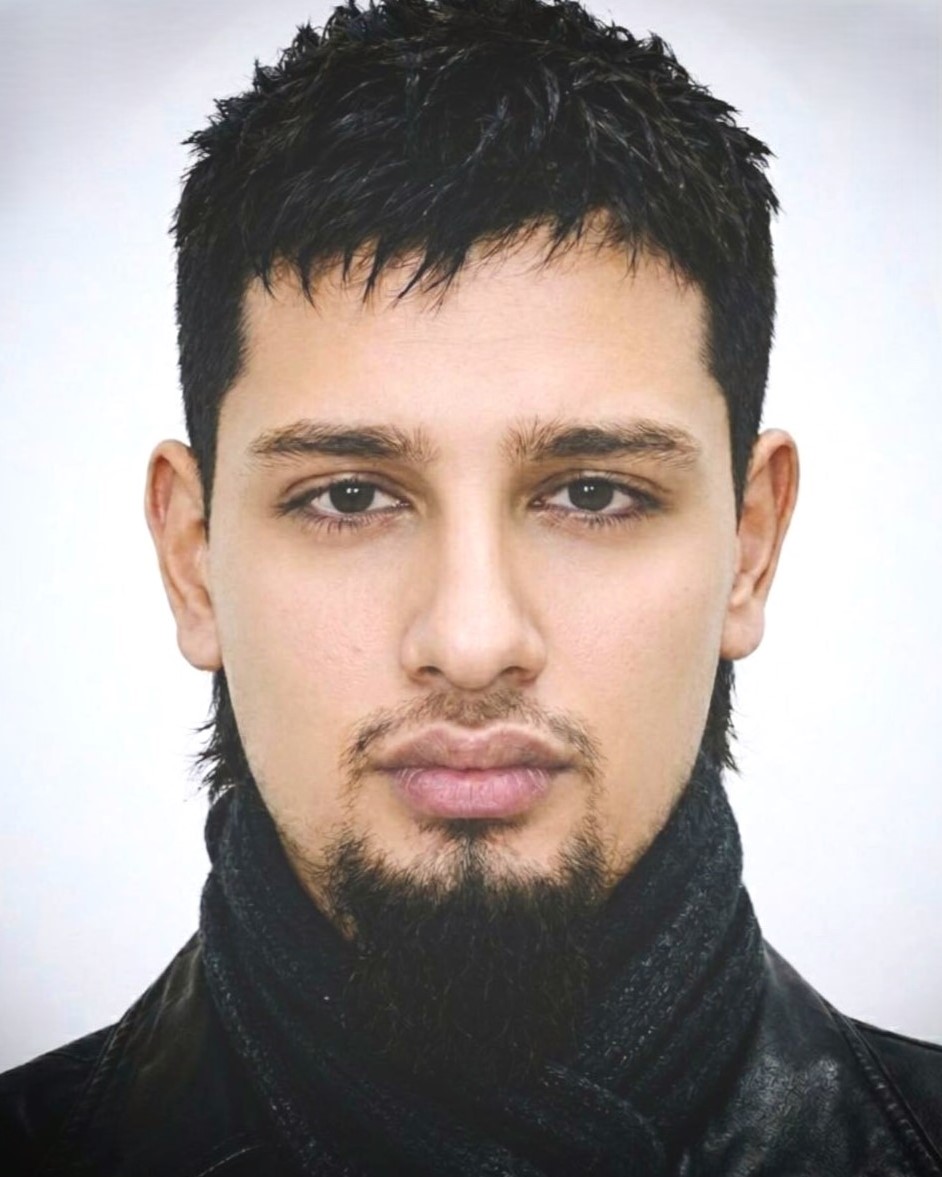}}]{Aria Nouri}
				(Graduate Student Member, IEEE) received the M.Sc.\ degree in Electrical Engineering (Communications) from Shahid Beheshti University, Tehran, Iran, in 2021. He was a Research Assistant with the Faculty of Electrical Engineering at Shahid Beheshti University from 2017 to 2023, and has been a Graduate Researcher there since 2023. His research interests lie in the theoretical aspects of digital communications, with a particular focus on coding theory, information theory, and wireless networks.
		\end{IEEEbiography}

		\begin{IEEEbiography}[{\includegraphics[width=1in,height=1.25in,clip,keepaspectratio]{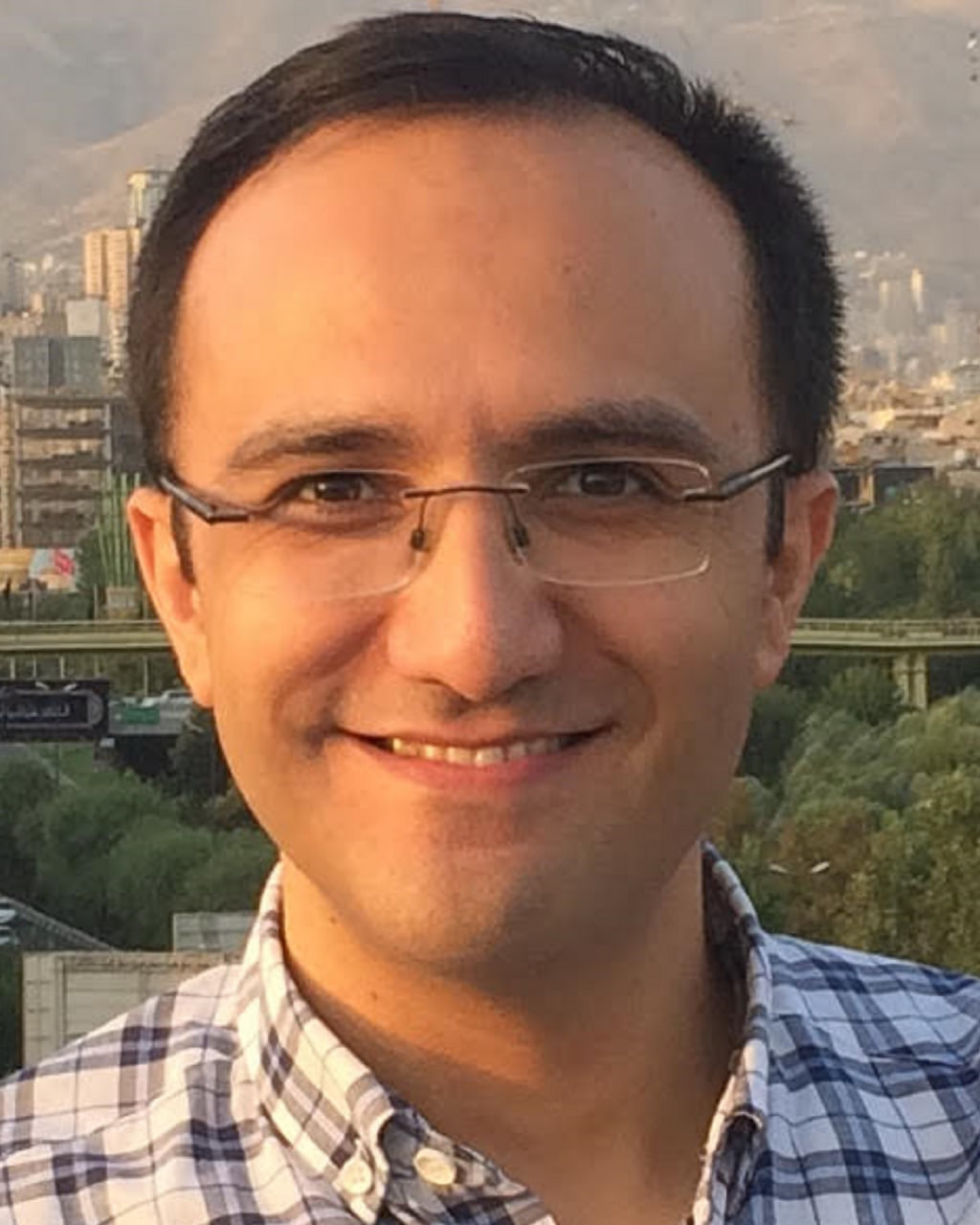}}]{Reza Asvadi}
				(Senior Member, IEEE) received the B.Sc.\ degree (with highest honors) from K.\ N.\ Toosi University of Technology, Tehran, Iran, in 2001, the M.Sc.\ degree from Sharif University of Technology, Tehran, in 2003, and the Ph.D.\ degree from K.\ N.\ Toosi University of Technology in 2011, all in electrical engineering.
				
				He has been an Assistant Professor at Shahid Beheshti University, Tehran, since 2016. From June 2023 to August 2024, he was a Visiting Scholar at Bilkent University, Ankara, Türkiye, working on insertion–deletion channels and DNA data storage. Previously, he was a Postdoctoral Researcher at the University of Oulu, Finland, from 2012 to 2014, where he contributed to several Academy of Finland and European Union (FP7) projects on iterative algorithms and information-theoretic bounds for emerging wireless networks. His research interests include coding and information theory and signal processing for wireless communications.
				
				Dr.\ Asvadi has received multiple postdoctoral research fellowships, including those from the University of Alberta (2011–2012) and Carleton University (2014–2016). He currently serves as an Associate Editor for IEEE Communications Letters and IEEE Transactions on Communications journals. He was recognized as an Exemplary Reviewer for IEEE Transactions on Communications in 2022 and as an Exemplary Editor for IEEE Communications Letters in 2025.
		\end{IEEEbiography}
		
		\begin{IEEEbiography}[{\includegraphics[width=1in,height=1.25in,clip,keepaspectratio]{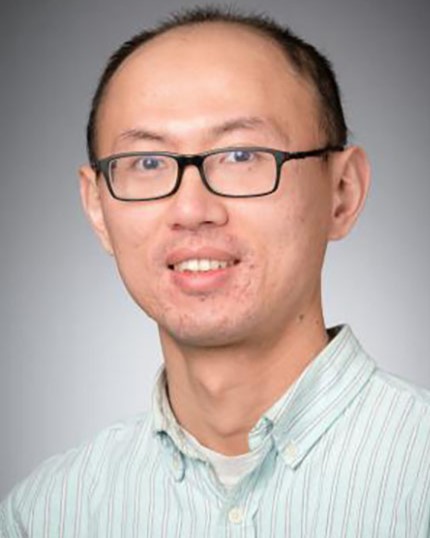}}]{Jun Chen}
				(Senior Member, IEEE) received the B.E.\ degree in communication engineering from Shanghai Jiao Tong University, Shanghai, China, in 2001, and the M.S.\ and Ph.D.\ degrees in electrical and computer engineering from Cornell University, Ithaca, NY, USA, in 2004 and 2006, respectively.
				
				From September 2005 to July 2006, he was a Post-Doctoral Research Associate with the Coordinated Science Laboratory, University of Illinois at Urbana-Champaign, Urbana, IL, USA, and a Post-Doctoral Fellow with the IBM Thomas J.\ Watson Research Center, Yorktown Heights, NY, USA, from July 2006 to August 2007. Since September 2007, he has been with the Department of Electrical and Computer Engineering, McMaster University, Hamilton, ON, Canada, where he is currently a Professor. His research interests include information theory, machine learning, wireless communications, and signal processing.
				
				Dr.\ Chen was a recipient of the Josef Raviv Memorial Postdoctoral Fellowship in 2006, the Early Researcher Award from the Province of Ontario in 2010, the IBM Faculty Award in 2010, the ICC Best Paper Award in 2020, and the JSPS Invitational Fellowship in 2021. He held the title of the Barber-Gennum Chair in Information Technology from 2008 to 2013 and the title of the Joseph Ip Distinguished Engineering Fellow from 2016 to 2018. He was an Associate Editor of IEEE Transactions on Information Theory from 2014 to 2016, an Editor of IEEE Transactions on Green Communications and Networking from 2020 to 2021, and an Associate Editor of IEEE Transactions on Communications from 2023 to 2026. In addition, he has served as a Guest Editor for several Special Issues of IEEE Journal on Selected Areas in Communications, IEEE Journal on Selected Areas in Information Theory, and IEEE BITS the Information Theory Magazine. He is currently an Associate Editor of IEEE Transactions on Information Theory and a Distinguished Lecturer of the IEEE Information Theory Society.
		\end{IEEEbiography}
	
\end{document}